\numberwithin{equation}{section}
\newtheorem{thm}{Theorem}[section]
\newtheorem{cor}[thm]{Corollary}
\newtheorem{lem}[thm]{Lemma}
\newtheorem{prop}[thm]{Proposition}
\theoremstyle{definition}
\newtheorem{definition}[thm]{Definition}
\newtheorem{conjecture}[thm]{Conjecture}
\theoremstyle{remark}
\newtheorem{remark}[thm]{Remark}
\crefname{thm}{Theorem}{Theorems}
\crefname{cor}{Corollary}{Corollaries}
\crefname{lem}{Lemma}{Lemmas}
\crefname{prop}{Proposition}{Propositions}
\crefname{definition}{Definition}{Definitions}
\crefname{example}{Example}{Examples}
\crefname{claim}{Claim}{Claims}
\crefname{conjecture}{Conjecture}{Conjectures}
\crefname{remark}{Remark}{Remarks}
\crefname{figure}{Figure}{Figures}
\crefname{section}{Section}{Sections}
\crefname{subsection}{Section}{Sections}
\crefname{introthm}{Theorem}{Theorems}
\crefname{introcor}{Corollary}{Corollaries}
\crefname{introconj}{Conjecture}{Conjectures}
\def\e{e}
\def\C{{\mathbb C}}
\def\Z{{\mathbb Z}}
\newcommand\R{{\mathbb{R}}}
\newcommand\LL{{\mathcal{L}}}
\newcommand\MM{{\mathcal{M}}}
\newcommand\Tr{{\mathrm{Tr}}}
\newcommand{\uu}{{\mathsf u}}
\newcommand{\ww}{{\mathsf w}}
\newcommand{\pp}{{\mathsf p}}
\newcommand{\qq}{{\mathsf q}}
\newcommand{\maru}[1]{\raise0.2ex\hbox{\textcircled{\scriptsize{#1}}}}
\newfont{\bg}{cmr9 scaled\magstep4}
\newcommand{\bigzerol}{\smash{\lower1.0ex\hbox{\bg 0}}}
\tikzset{
  mid arrow/.style={postaction={decorate,decoration={
        markings,
        mark=at position .5 with {\arrow[#1]{latex}}
      }}},
}
\begin{document}
\title[Quantized six-vertex model on a torus]
{Quantized six-vertex model on a torus}

\author[Rei Inoue]{Rei Inoue}
\address{Rei Inoue, Department of Mathematics and Informatics,
   Faculty of Science, Chiba University,
   Chiba 263-8522, Japan.}
\email{reiiy@math.s.chiba-u.ac.jp}

\author[Atsuo Kuniba]{Atsuo Kuniba}
\address{Atsuo Kuniba, Institute of Physics, Graduate School
of Arts and Sciences, University of Tokyo, Komaba, Tokyo, 153-8902, Japan.}
\email{atsuo.s.kuniba@gmail.com}

\author[Yuji Terashima]{Yuji Terashima}
\address{Yuji Terashima, Graduate school of science, Tohoku University,
6-3, Aoba, Aramaki-aza, Aoba-ku, Sendai, 980-8578, Japan}
\email{yujiterashima@tohoku.ac.jp}

\author[Junya Yagi]{Junya Yagi}
\address{Junya Yagi, Yau Mathematical Science Center, Tsinghua University, China}
\email{junyagi@tsinghua.edu.cn}

\date{May 13, 2025}


\begin{abstract}
We study the integrability of the quantized six-vertex model with four parameters on a torus. 
It is a three-dimensional integrable lattice model in which a layer transfer matrix, 
depending on two spectral parameters associated with the homology cycles of the torus, 
can be defined not only on the square lattice but also on more general graphs.
For a class of graphs that we call admissible, 
we establish the commutativity of the layer transfer matrices 
by introducing four types of tetrahedron equations and two types of inversion relations. 
Expanding in the spectral parameters yields a family of commuting quantum Hamiltonians.
The quantized six-vertex model can also be reformulated in terms of
(quantized) dimer models,
and encompasses known integrable systems as special cases,
including the free parafermion model and the relativistic Toda chain.
\end{abstract}

\keywords{}


\maketitle


\section{Introduction}

The quantized six-vertex model (q-6v model, for short) with four
parameters was introduced by Kuniba, Matsuike and Yoneyama in
\cite{KMY23}, generalizing \cite{BS06, BMS10}. The model is given by
an operator $\mathscr{L}=\mathscr{L}(r,s,f,g;q)$ in
$\mathrm{End}(V \otimes V) \otimes \mathcal{W}(q)$, where $V$ is a
two-dimensional vector space, $\mathcal{W}(q)$ is the $q$-Weyl
algebra, and $r,s,f,g \in \mathbb{C}$ are parameters. Its graphical
representation is Figure \ref{fig:6v}, where one sees the weight
conservation, $i+j = a+b$, in $V \otimes V$ as in the (original)
six-vertex model \cite{B82}.

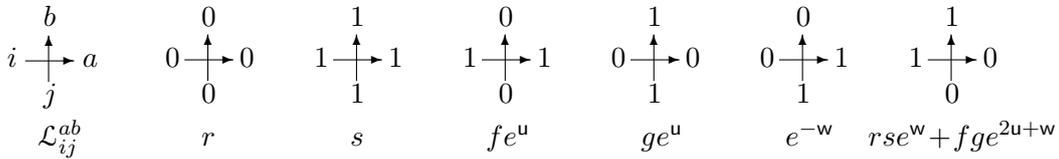
\begin{figure}[H]
\begin{picture}(330,60)(20,30)
{\unitlength 0.011in
\put(6,80){
\put(-11,0){\vector(1,0){23}}\put(0,-10){\vector(0,1){22}}
}
\multiput(81,80.5)(70,0){6}{
\put(-11,0){\vector(1,0){23}}\put(0,-10){\vector(0,1){22}}
}
\put(-74,0){
\put(60.5,77){$i$}\put(77.5,60){$j$}
\put(96,77){$a$}\put(77.5,96.5){$b$}
}
\put(61,77){0}\put(78,60){0}\put(96,77){0}\put(78,96.5){0}
\put(70,0){
\put(61,77){1}\put(78,60){1}\put(96,77){1}\put(78,96.5){1}
}
\put(140,0){
\put(61,77){1}\put(78,60){0}\put(96,77){1}\put(78,96.5){0}
}
\put(210,0){
\put(61,77){0}\put(78,60){1}\put(96,77){0}\put(78,96.5){1}
}
\put(280,0){
\put(61,77){0}\put(78,60){1}\put(96,77){1}\put(78,96.5){0}
}
\put(350,0){
\put(61,77){1}\put(78,60){0}\put(96,77){0}\put(78,96.5){1}
}
\put(78,40){
\put(-77,0){$\mathscr{L}^{ab}_{ij}$}
\put(0,0){$r$} \put(70,0){$s$} \put(134,0){$f e^\uu$}
\put(207,0){$g e^\uu$} \put(275,0){$e^{-\ww}$} \put(314,0){$rs e^\ww\!+\! fg e^{2\uu+\ww}$}
}}
\end{picture}
\caption{The operator $\mathscr{L}(r,s,f,g;q)$.}
\label{fig:6v}
\end{figure}

\noindent
The precise description of the model is provided in \S 2.1. 
This leads to a three-dimensional lattice model, and the first step 
in establishing its integrability is to find the conjugation operator $R$ satisfying a version of the tetrahedron equation \cite{Z80} known as the $RLLL$ relation \cite{BS06,BMS10,K22}:
\begin{align}\label{RLLL}
  R_{456}\mathscr{L}_{236}\mathscr{L}_{135}\mathscr{L}_{124} 
  = \mathscr{L}_{124}\mathscr{L}_{135}\mathscr{L}_{236}R_{456}.
\end{align}
In \cite{KMY23}, the operator $R$ was constructed by solving the recursion relations 
of matrix elements implied by \eqref{RLLL} for several infinite dimensional representations 
of $\mathcal{W}(q)^{\otimes 3}$. 
Further, in \cite{IKSTY} the cluster algebraic aspect of the $RLLL$ relation was unveiled. 
The operator $R$ comes from the quantum cluster algebra \cite{FG09} associated with 
the symmetric butterfly quiver (SB-quiver, for short).
It is constructed from a mutation sequence of the quiver and contains the quantum dilogarithm functions.

In this paper we study the integrability of the q-6v model, 
not only on the two-dimensional square lattice but also on more general wiring diagrams on a torus. 
A wiring diagram $G$ on a torus consists of a finite number of directed closed wires, 
such that exactly two wires intersect at each crossing.
The central object of our study is the layer transfer matrix $T_G(x, y)$,  
which depends on two spectral parameters $x$ and $y$ associated with the  
homology cycles of the torus. 
It is a nontrivial Laurent polynomial  
in $x$ and $y$, with coefficients valued in the tensor product of many copies  
of the $q$-Weyl algebra $\mathcal{W}(q)$.

To investigate the integrability of the model on general diagrams, 
we introduce four types of tetrahedron equations \eqref{LM-o}--\eqref{LM-t} 
and two types of inversion relations \eqref{M-I} and \eqref{M-I'}.  
For a given graph $G$,  
we define the notion of {\it admissibility} (Definition \ref{def:adm}),  and prove that
it provides a sufficient condition for the transfer matrix $T_G(x,y)$ to form a two parameter commuting family (Theorem \ref{thm:TT2}).  

Interestingly, by choosing the representation of the $q$-Weyl algebra and a graph $G$, 
we can relate the q-6v model to some known integrable models. In this paper we reproduce the free parafermion model \cite{B89} and the relativistic quantum Toda chain \cite{Suris90} from this viewpoint. Once we reformulate a model in terms of the q-6v model on some graph $G$, the commuting family of Hamiltonians for the model is generated by $T_G(x,y)$.   
We also explain the relation between the q-6v model and a dimer model (Table \ref{tab:6vd}) 
on wiring diagrams, which can be regarded as a quantization of the correspondence between 
the free fermion vertex models and dimers \cite{FW70,W68}. 
This relation can be naturally reduced to that between the quantized five vertex model (q-5v model for short) and another dimer model, 
by taking the limit of parameters (Table \ref{q5vdim} and \ref{5vd2}).
Combined with the results in \cite{IKSTY}, these relations are summarized as follows. 

\begin{align}\label{6v-5v}
\begin{CD}
\text{dimer model (Table \ref{5vd2})} @<<< \text{q-5v model} \\
@A{s \to 0}AA @AA{s \to 0}A \\
\text{dimer model (Table \ref{tab:6vd})} @<<< \text{q-6v model} @<{\text{\cite[\S 7]{IKSTY}}}<< \text{$R$ for SB-quiver}\\
@V{f \to 0}VV @VV{f \to 0}V @V{\text{\cite[\S 8]{IKSTY}}}VV\\
\text{dimer model (Table \ref{q5vdim})} @<<< \text{q-5v model} @<<< \text{$R$ for FG-quiver}\\[3mm]
\end{CD}
\end{align}
Here `$R$ for FG-quiver' is defined in \cite{IKT1}, associated to a mutation sequence for the Fock-Goncharov quiver.

The relation between the q-6v model and dimer models reminds us of the cluster integrable systems by Goncharov and Kenyon \cite{GK13}, but they seem different.
In the case of the q-6v model, the spectral parameters $x$ and $y$ associated with the homology cycles on a torus are commuting, whereas in \cite{GK13} they are not.
We also note that for the q-6v model, the Yang-Baxter move of the graph $G$ 
is realized by the adjoint action of the operator $R$ due to the $RLLL$ relation \eqref{RLLL}. 
(See Theorem \ref{th:RLLL}, which is due to \cite{IKSTY}.)

This paper is organized as follows. In \S 2, we recall the definition of the q-6v model and 
an infinite dimensional representation of the $q$-Weyl algebra $\mathcal{W}(q)$.
We also recall the $p$-oscillator algebra $\mathrm{Osc}(p)$ and its embedding in $\mathcal{W}(q)$, 
which is important for the construction of the invertible $R$-matrix \eqref{tred}. 
For the q-6v model we introduce the four tetrahedron equations 
and the two inversion relations, and recall the $RLLL$ relation in \cite{IKSTY}. 
In \S 3, we study the commutativity of transfer matrices $T_G(x,y)$. 
First we present the detailed proof for the commutativity for square grid case (Theorem \ref{thm:TT1}), 
where the invertibility of the $R$-matrix constructed in \S 2 is essential. 
After introducing the notion of admissibility for wiring diagrams, 
we prove commutativity in the admissible case (Theorem~\ref{thm:TT2}).
We also discuss the symmetry of $T_G(x, y)$, including its invariance 
under translations and the natural action of $SL(2, \mathbb{Z})$ on the torus 
(Proposition~\ref{prop:inv}), as well as the Yang--Baxter move on $G$ 
arising from the $RLLL$ relation (Corollary~\ref{cor:YB}).
Section 4 is devoted to some examples of commuting transfer matrices for admissible wiring diagrams. 
In \S 5, we relate the q-6v model to the free parafermion model and the relativistic quantum Toda chain.
In \S 6, we discuss the relation among q-6v model, q-5v model and dimer models on wiring diagrams.

\subsection*{Acknowledgement}
The authors would like to thank Gen Kuroki, Koji Hasegawa  and
Yasuhiko Yamada for stimulating discussions.
RI is supported by JSPS KAKENHI Grant Number 23K03048. 
AK is supported by JSPS KAKENHI Grant Number 24K06882.
YT is supported by  JSPS KAKENHI Grant Number 21K03240, 22H01117 and 25K06969.
JY is supported by NSFC Grant Number 12375064.

\section{Quantized six-vertex model}

\subsection{Operators $\LL$ and $\MM$}

We recall the quantized six-vertex (q-6v) model due to \cite{KMY23}. 
Let $V = \C v_0 \oplus \C v_1$ be a two-dimensional vector space.
For $\hbar \in \C$, let $\uu$ and $\ww$ be a canonical pair satisfying $[\uu,\ww] = \hbar$, hence $e^{\pm \uu}$ and $e^{\pm \ww}$ generate the $q$-Weyl algebra $\mathcal{W}(q)$ with a relation 
\begin{align}\label{q-comm}
e^\uu e^\ww = q e^\ww e^\uu; ~~q = e^\hbar.
\end{align}

We consider an operator $\LL=\mathcal{L}(r,s,f,g;q)$ given by 
\begin{align}
&\mathcal{L}(r,s,f,g;q) = \sum_{a,b,i,j=0,1} E_{ai}\otimes E_{bj} \otimes \mathscr{L}^{ab}_{ij} 
\in \mathrm{End}(V \otimes V)  \otimes \mathcal{W}(q),
\label{L1}
\\
&\mathscr{L}^{ab}_{ij}=0\; \text{unless}\; a+b=i+j,
\label{L2}
\\
&\mathscr{L}^{00}_{00} = r,\;\;  \mathscr{L}^{11}_{11} = s,\;\; \mathscr{L}^{10}_{10} = f e^\uu,\;\;
\mathscr{L}^{01}_{01} = g e^\uu, 
\;\;  \mathscr{L}^{10}_{01} = e^{-\ww}, \;\; \mathscr{L}^{01}_{10} = rs e^\ww+ fg e^{2\uu+\ww}.
\label{L3}
\end{align} 
Here $r,s,f,g \in \C$ are parameters, and the symbol $E_{ij}$ denotes the matrix unit on $V$ acting on the basis as $E_{ij}v_k = \delta_{jk}v_i$.  
Note that $\mathscr{L}^{ab}_{ij}$ depends also on $q$ via \eqref{q-comm}. 
The operator $\mathscr{L}(r,s,f,g;q)$ may be regarded as a quantized six-vertex model where the Boltzmann weights are $\mathcal{W}(q)$-valued. 
See Figure \ref{fig:6v} for a graphical representation.

We introduce a companion $\MM=\MM(r',s',f',g';q)$ of $\LL$ by
\begin{align}\label{eq:M} 
\MM=\sum_{a,b,i,j=0,1}E_{ai}\otimes E_{bj}\otimes \MM^{ab}_{ij}
=\LL(r',s',f',g';-q) \in \mathrm{End}(V\otimes V) \otimes \mathcal{W}(-q),
\end{align}
which means the operator obtained from $\LL$ by replacing the parameters 
$r,s,f,g$ by $r',s',f',g'$,  and $e^{\pm \uu}, e^{\pm \ww}$ by  generators 
$e^{\pm \uu'}, e^{\pm \ww'}$ of the algebra $\mathcal{W}(-q)$ 
satisfying $e^{\uu'}e^{\ww'}= -qe^{\ww'}e^{\uu'}$.

The operators $\LL^{ab}_{ij}$ and $\MM^{ab}_{ij}$ are depicted as follows:
\begin{align}\label{fig:LM}
\begin{tikzpicture}
\begin{scope}[>=latex]
\draw (-1.2,0.5) node{$\LL_{ij}^{ab}:$}; 
\draw[->] (0,0.5) node[left]{$i$}--(1,0.5) node[right]{$a$};
\draw[->] (0.5,0) node[below]{$j$}--(0.5,1) node[above]{$b$};
{\color{blue}
\draw[->] (1,0.9) --(0,0.1)[thick];
}
\end{scope}
\begin{scope}[>=latex,xshift=130]
\draw (-1.2,0.5) node{$\MM_{ij}^{ab}:$}; 
\draw[->] (0,0.5) node[left]{$i$}--(1,0.5) node[right]{$a$};
\draw[->] (0.5,0) node[below]{$j$}--(0.5,1) node[above]{$b$};
{\color{green}
\draw[->] (1,0.9) --(0,0.1)[thick] ;
}
\end{scope}
\end{tikzpicture}
.
\end{align}
The three arrows corresponding to the tensor components 1, 2, 3 constitute the right handed coordinate system, 
and the blue (resp. green) arrow indicates the multiplication of the element
$\LL^{ab}_{ij} \in \mathcal{W}(q)$
(resp. $\MM^{ab}_{ij} \in \mathcal{W}(-q)$).

We introduce the operator $h$ satisfying 
\begin{align}\label{h}
[h,e^{\uu}]=0, \; [h,e^{-\ww}]=e^{-\ww},\quad
[h,e^{\uu'}]=0, \; [h,e^{-\ww'}]=e^{-\ww'},
\quad
hv_k=k v_k \;(k=0,1).
\end{align}
It is  the ``number" operator counting  $v_1$ in $V$ and the power of $e^{-\ww}$ in $\mathcal{W}(q)$
(resp.~$e^{-\ww'}$ in $\mathcal{W}(-q)$).
Let $h_i$ denote the $h$ that acts on the $i$-th component of a tensor product.
From Figure \ref{fig:6v},  it is evident that the operator $\LL$ preserves 
the numbers counted by $h_1+h_2$ and $h_2+h_3$.
The same conservation law holds also for $\MM$.
We summarize this fact in
\begin{lem}\label{lem:L-h}
For arbitrary parameters $x$ and $y$, we have
\begin{align}\label{cl}
[x^{h_1} y^{h_2} (y/x)^{h_3}, ~ \LL] =0, \qquad 
[x^{h_1} y^{h_2} (y/x)^{h_3}, ~ \MM]=0.
\end{align}
\end{lem}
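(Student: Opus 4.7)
The plan is to exhibit $x^{h_1}y^{h_2}(y/x)^{h_3}$ as a product of exponentials of two commuting conserved charges of $\LL$. First I would note that the weight conservation rule \eqref{L2}, namely $\mathscr{L}^{ab}_{ij}=0$ unless $a+b=i+j$, says that the matrix part of $\LL$ commutes with $h_1+h_2$; combined with \eqref{h}, which makes $h_1$ and $h_2$ insensitive to the $\mathcal{W}(q)$-factor, this will yield $[x^{h_1+h_2},\LL]=0$ for any $x$.

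For the second conserved charge, I would verify $[(y/x)^{h_2+h_3},\LL]=0$ by a direct check on each of the six nonzero entries in \eqref{L3}. By \eqref{h}, $h_3$ counts the power of $e^{-\ww}$ in $\mathcal{W}(q)$ (so $e^{\ww}$ carries weight $-1$) and is insensitive to $e^{\pm\uu}$. In every case the change $b-j$ in $h_2$ will exactly cancel the $h_3$-weight of $\mathscr{L}^{ab}_{ij}$: the four diagonal entries $\mathscr{L}^{00}_{00},\mathscr{L}^{11}_{11},\mathscr{L}^{10}_{10},\mathscr{L}^{01}_{01}$ give $(b-j,\text{weight})=(0,0)$; the entry $\mathscr{L}^{10}_{01}=e^{-\ww}$ gives $(-1,+1)$; and each term in $\mathscr{L}^{01}_{10}=rse^{\ww}+fge^{2\uu+\ww}$ gives $(+1,-1)$. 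Hence $h_2+h_3$ commutes with $\LL$.

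Since $h_1,h_2,h_3$ act on disjoint slots and are mutually commuting operators, I can then factor
\[
x^{h_1}y^{h_2}(y/x)^{h_3}=x^{h_1+h_2}\cdot(y/x)^{h_2+h_3},
\]
and each factor on the right commutes with $\LL$ by the two preceding observations, giving the first identity in \eqref{cl}. The argument for $\MM$ will be identical: definition \eqref{eq:M} produces the same table under $(r,s,f,g,q)\mapsto(r',s',f',g',-q)$ and $\uu,\ww\mapsto\uu',\ww'$, while \eqref{h} provides the identical action of $h$ on $e^{\pm\uu'}$ and $e^{\pm\ww'}$, so the same six-entry inspection applies verbatim. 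I expect no real obstacle; the only substantive work is the finite bookkeeping for conservation of $h_2+h_3$, and recognizing the factorization that turns the two Cartan-type weights into the specific monomial appearing in \eqref{cl}.
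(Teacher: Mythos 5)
Your proof is correct and follows essentially the same route as the paper: the paper simply observes (from Figure \ref{fig:6v}) that $\LL$ and $\MM$ preserve the charges $h_1+h_2$ and $h_2+h_3$, and your factorization $x^{h_1}y^{h_2}(y/x)^{h_3}=x^{h_1+h_2}\,(y/x)^{h_2+h_3}$ together with the entry-by-entry weight check is exactly the detailed version of that observation. The sign bookkeeping ($e^{-\ww}$ of $h$-weight $+1$ against $b-j=-1$, and the $e^{\ww}$ terms of weight $-1$ against $b-j=+1$) is right.
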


We will use the second relation later, which is depicted as 
\begin{align}\label{fig:Lh}
\begin{tikzpicture}
\begin{scope}[>=latex]
\draw[->] (0,0.5)--(1,0.5); \draw[fill=blue!20] (-0.15,0.5) circle[radius=0.15] node[left=2pt] {$x^{h}$}; 
\draw[->] (0.5,0)--(0.5,1); \draw[fill=blue!20] (0.5,-0.15) circle[radius=0.15] node[below=2pt] {$y^{h}$}; 
{\color{green}
\draw[->] (1,0.9)--(0,0.1)[thick]; 
}
\draw[fill=blue!20] (1.11,1.01) circle[radius=0.15] node[above=2pt] {$(y/x)^{h}$}; 
\draw (2.2,0.5) node {$=$};
\end{scope}
\begin{scope}[>=latex,xshift=100]
\draw[->] (0,0.5)--(1,0.5); \draw[fill=blue!20] (1.15,0.5) circle[radius=0.15] node[right=2pt] {$x^{h}$}; 
\draw[->] (0.5,0)--(0.5,1); \draw[fill=blue!20] (0.5,1.15) circle[radius=0.15] node[above=2pt] {$y^{h}$}; 
{\color{green}
\draw[->] (1,0.9)--(0,0.1)[thick]; 
}
\draw[fill=blue!20] (-0.11,-0.01) circle[radius=0.15] node[below=2pt] {$(y/x)^{h}$}; 
\end{scope}
\end{tikzpicture}
.
\end{align}

\subsection{Trace construction}

Set $\mathcal {V} = \bigoplus_{m \in \Z}\C(q) |m\rangle$, and 
let $\pi: \mathcal{W}(-q) \rightarrow \mathrm{End}(\mathcal{V})$ 
be a representation of $\mathcal{W}(-q)$ defined by 
\begin{align}\label{uwp}
e^{\pm \uu'}|m\rangle = (-q^{-1})^{\pm m}|m\rangle,
\quad
e^{\pm \ww'}|m\rangle = |m \mp 1\rangle,
\end{align}
where $\pi(\gamma)\, (\gamma \in \mathcal{W}(-q))$ is denoted by $\gamma$ for simplicity.
Consider the corresponding representation of $\MM$ (\ref{eq:M}) with parameters specialized 
depending on $\alpha \in \mathbb{C}$ as follows:
\begin{align}\label{Malpha}
(1 \otimes 1 \otimes \pi)(\MM(1,1,\alpha,q^{-1}\alpha^{-1};q))
\in \mathrm{End}(V \otimes V \otimes \mathcal{V}).
\end{align}

For $p \in \mathbb{C}$, let Osc$(p)$ be the $p$-oscillator algebra generated by $\mathbf{k}, \,\mathbf{a}^{+}$ and $\mathbf{a}^{-}$ with the relations
\begin{align}
\mathbf{k} \,\mathbf{a}^{\pm} = p^{\pm 1} \mathbf{a}^{\pm} \mathbf{k},
\quad
\mathbf{a}^+\mathbf{a}^- = 1 - \mathbf{k}^2,
\quad 
\mathbf{a}^-\mathbf{a}^+ = 1 - p^2\mathbf{k}^2.
\end{align}
We set $p = -q^{-1}$ and define a homomorphism of noncommuting algebras $\psi : \mathrm{Osc}(p) \to \mathcal{W}(-q)$ by 
\begin{align}
\mathbf{k} \mapsto e^{\uu'}, \quad \mathbf{a}^+ \mapsto e^{-\ww'}, \quad \mathbf{a}^- \mapsto e^{\ww'} +q^{-1}e^{2\uu'+\ww'} = e^{\ww'}( 1-e^{2\uu'}),
\end{align} 
where the last equality is due to the Baker-Campbell-Hausdorff formula.
For simplicity we identify $\gamma \in \mathrm{Osc}(-q^{-1})$ with its image $\psi(\gamma)$.  
From (\ref{uwp}), the generators of $\mathrm{Osc}(-q^{-1})$ linearly act on $\mathcal{V}$ as 
\begin{align}\label{posc}
\mathbf{k}|m \rangle = p^m |m\rangle,
\quad
\mathbf{a}^+|m\rangle = |m+1\rangle,
\quad
\mathbf{a}^-|m\rangle = (1-p^{2m})|m-1\rangle.
\end{align}
The representation $\pi$ of Im$(\psi)$ can be restricted to the subspace with non-negative ``mode" 
$\mathcal{V}_+ := \bigoplus_{m \in \Z_{\ge 0}}\mathbb{C}(q)|m \rangle \subset \mathcal{V}$.
We write $\pi_{\mathrm{res}}$ for this restriction.

It turns out that $\mathcal{M}_{ij}^{ab}$ belongs to Im${\psi}$, thus we write $M_{ij}^{ab}$ for $\pi_{\mathrm{res}}(\mathcal{M}_{ij}^{ab})$. The operators $M_{ij}^{ab}$ are graphically represented as Figure \ref{fig:6vM}. 
We write the corresponding specialization/restriction of $\MM$ as $M=M(\alpha)$. Namely,
\begin{align}\label{M}
M = \sum_{a,b,i,j=0,1}
E_{ai}\otimes E_{bj} \otimes M^{ab}_{ij} \in \mathrm{End}(V \otimes V \otimes \mathcal{V}_+).
\end{align}

\begin{figure}[H]
\begin{picture}(330,60)(20,30)
{\unitlength 0.011in
\put(6,80){
\put(-11,0){\vector(1,0){23}}\put(0,-10){\vector(0,1){22}}
}
\multiput(81,80.5)(70,0){6}{
\put(-11,0){\vector(1,0){23}}\put(0,-10){\vector(0,1){22}}
}
\put(-74,0){
\put(60.5,77){$i$}\put(77.5,60){$j$}
\put(96,77){$a$}\put(77.5,96.5){$b$}
}
\put(61,77){0}\put(78,60){0}\put(96,77){0}\put(78,96.5){0}
\put(70,0){
\put(61,77){1}\put(78,60){1}\put(96,77){1}\put(78,96.5){1}
}
\put(140,0){
\put(61,77){1}\put(78,60){0}\put(96,77){1}\put(78,96.5){0}
}
\put(210,0){
\put(61,77){0}\put(78,60){1}\put(96,77){0}\put(78,96.5){1}
}
\put(280,0){
\put(61,77){0}\put(78,60){1}\put(96,77){1}\put(78,96.5){0}
}
\put(350,0){
\put(61,77){1}\put(78,60){0}\put(96,77){0}\put(78,96.5){1}
}
\put(78,40){
\put(-77,0){$M^{ab}_{ij}$}
\put(0,0){$1$} \put(70,0){$1$} \put(134,0){$\alpha \mathbf{k}$}
\put(190,0){$-p\alpha^{-1}\mathbf{k}$} \put(278,0){$\mathbf{a}^+$} \put(350,0){$\mathbf{a}^-$}
}}
\end{picture}
\caption{The operators $M^{ab}_{ij}$, where $p=-q^{-1}$.}
\label{fig:6vM}
\end{figure}

Let us proceed to a trace construction based on $M$, which can be performed for any positive integer $N$.
We prepare the notations ($V$ is defined in the beginning of this section):
\begin{align}
\mathbf{i}& =(i_1,\ldots, i_N) \in \{0,1\}^N,\quad  |{\bf i}| = i_1+\cdots + i_N,
\label{bi}
\\
\mathbb{V}_k & = \bigoplus_{\mathbf{i} \in \{0,1\}^N,  |\mathbf{i}|=k} \C v_{\mathbf{i}},
\quad 
 v_{\mathbf{i}} = v_{i_1} \otimes \cdots \otimes v_{i_N},
\label{Vk}\\
\mathbb{V} &=V^{\otimes N} = \mathbb{V}_0 \oplus \cdots \oplus \mathbb{V}_N.
\label{Vdec}
\end{align}
The last equality is a direct consequence of the definition.
Now we introduce a matrix
$R(z) =R(z;p)\in \mathrm{End}(\mathbb{V} \otimes \mathbb{V})$ by
\begin{align}
R(z) (v_{\mathbf{i}} \otimes v_{\mathbf{j}}) & = 
\sum_{\mathbf{a},\mathbf{b},\mathbf{i},\mathbf{j} \in \{0,1\}^N}
R(z)^{{\bf a}, {\bf b}}_{{\bf i}, {\bf j}} v_{\mathbf{a}} \otimes v_{\mathbf{b}},
\\ 
R(z)^{{\bf a}, {\bf b}}_{{\bf i}, {\bf j}} &=\mathrm{Tr}_{\mathcal{V}_+}
\left(z^h M^{a_1 b_1}_{i_1 j_1}\cdots M^{a_N b_N}_{i_N j_N}\right).
\label{tred}
\end{align}
The trace is convergent for $|z| < 1$.
The element (\ref{tred}) is depicted as Figure \ref{fig:trm}.

\begin{figure}[ht]
\[
\begin{tikzpicture}
\begin{scope}[>=latex,xshift=0pt]
\draw [->] (1,1) node[below]{$j_N$}--(1,2) node[above]{$b_N$};
\draw [->] (0.5,1.3) node[left=-2pt]{$i_N$}--(1.5,1.7) node[right=-2pt]{$a_N$};
\draw (2.2,0.9) node[above]{\rotatebox{-24}{$\cdots$}};
\draw [->] (3,0.2) node[below]{$j_2$}--(3,1.2) node[above]{$b_2$};
\draw [->] (2.5,0.5) node[left=-2pt]{$i_2$}--(3.5,0.9) node[right=-2pt]{$a_2$};
\draw [->] (4.3,-0.32) node[below]{$j_1$}--(4.3,0.68) node[above]{$b_1$};
\draw [->] (3.8,-0.02) node[left=-2pt]{$i_1$}--(4.8,0.38) node[right=-2pt]{$a_1$};

{\color{green}
\draw [->] (0.5,1.7)--(4.8,-0.02)[thick];
\draw [-] (4.8,-0.02) to [out=-20,in=-90] (5.6,0.4) to [out=90,in=-20] (5.4,0.78)[thick]; 
\draw [-] (5.4,0.78)--(1.1,2.5) to [out=160,in=90] (0.3,2.08) to [out=-90,in=160] (0.5,1.7)[thick];
}

\draw[fill=blue!20] (5.5,0.15) circle[radius=0.15] node[right]{~$z^h$}; 
\end{scope}
\end{tikzpicture}
\]
\caption{Trace construction (\ref{tred}).
The diagram is a concatenation of the right one in (\ref{fig:LM}), 
where $\MM^{ab}_{ij}$ is specialized to $M^{ab}_{ij}$.
The green arrow is closed cyclically reflecting the trace.} 
\label{fig:trm}
\end{figure}
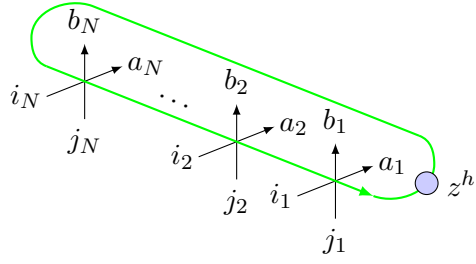

From the conservation law (\ref{cl}), one can deduce
\begin{align}\label{cons}
R(z)^{{\bf a}, {\bf b}}_{{\bf i}, {\bf j}} =0\; \text{ unless }\; 
\mathbf{a}+\mathbf{b}=\mathbf{i} + \mathbf{j}, \;
|\mathbf{a}| = |\mathbf{i}|, \;  |\mathbf{b}| = |\mathbf{j}|.
\end{align}
Thus $R(z)$ is decomposed as 
\begin{align}\label{Rz}
R(z) = \bigoplus_{0 \le k,l  \le N}R_{k,l}(z),\quad 
R_{k,l}(z) \in \mathrm{End}(\mathbb{V}_k \otimes \mathbb{V}_l).
\end{align}
From (\ref{cons}) and Figure \ref{fig:6vM}, 
it follows that the dependence of $R(z)$ on the parameter $\alpha$ 
appears solely through an overall factor $\alpha^{k-l}$ in $R_{k,l}(z)$.

 In \cite{BS06}, it was observed that, up to conventional adjustment, $R_{k,l}(z)$ coincides with 
the $U_p(\widehat{sl}_N)$ quantum $R$ matrix on $\mathbb{V}_k \otimes \mathbb{V}_l$, 
where $\mathbb{V}_1, \ldots, \mathbb{V}_{N-1}$ are 
regarded as the antisymmetric tensor representations 
(commonly referred to as Kirillov-Reshetikhin modules corresponding to 
the fundamental representations).
The precise details are elaborated in \cite[Chap.11]{K22}.
In simpler terms, the trace construction $R(z)$ serves as  a ``generating function" of the
quantum $R$ matrices for the fundamental representations, including the trivial cases where 
$k$ or $l$ is $0$ or $N$.  
An important consequence from this identification, which we will require later, is as follows.

\begin{lem}\label{le:rinv}
The $R$ matrix $R(z)$ in Figure \ref{fig:trm} is invertible for generic $z$.
\end{lem}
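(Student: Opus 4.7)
The plan is to exploit the block decomposition (\ref{Rz}) and reduce invertibility of $R(z)$ to that of each block $R_{k,l}(z) \in \mathrm{End}(\mathbb{V}_k\otimes\mathbb{V}_l)$. As a preliminary observation, the $\alpha$-dependence noted after (\ref{Rz}) contributes only an overall nonzero scalar $\alpha^{k-l}$ on each block, so it is harmless and we may set $\alpha = 1$.

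The core step is to invoke the identification, recalled in the paragraph preceding the lemma and elaborated in \cite{BS06} and \cite[Chap.~11]{K22}, between $R_{k,l}(z)$ and the $U_p(\widehat{sl}_N)$ quantum affine $R$ matrix on the tensor product of the antisymmetric tensor representations (Kirillov--Reshetikhin modules corresponding to the fundamental representations) $\mathbb{V}_k$ and $\mathbb{V}_l$. Finite-dimensional quantum affine $R$ matrices on such evaluation modules satisfy an inversion (unitarity) relation, roughly of the form $R_{k,l}(z)\, P\, R_{l,k}(z^{-1})\, P = \rho_{k,l}(z)\,\mathrm{Id}$ with $\rho_{k,l}(z)$ a nonzero rational function of $z$ and $P$ the tensor flip. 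This immediately gives invertibility of $R_{k,l}(z)$ at all but countably many $z$, and hence of $R(z)$ for generic $z$.

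The main obstacle I foresee is verifying that the scalar gauge between the trace-constructed $R_{k,l}(z)$ in (\ref{tred}) and the textbook quantum $R$ matrix is a nontrivial function of $z$ on each block, rather than identically zero. I would settle this by computing a single diagonal entry per block: for $\mathbf{a}=\mathbf{i}$ and $\mathbf{b}=\mathbf{j}$ the product $\prod_{n=1}^N M^{a_n b_n}_{i_n j_n}$ is a monomial in $\mathbf{k}$ alone (no $\mathbf{a}^\pm$ appears), and by (\ref{posc}) the trace with $z^h$ collapses to a convergent geometric series that evaluates to a nonzero rational function of $z$. This ensures the gauge is nontrivial and, combined with the identification above, yields invertibility of each $R_{k,l}(z)$ for generic $z$.
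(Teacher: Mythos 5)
Your proposal is correct and follows essentially the same route as the paper: reduce via the block decomposition \eqref{Rz} to each $R_{k,l}(z)$ and invoke the standard inversion relation $PR_{l,k}(z^{-1})PR_{k,l}(z)=\varrho_{k,l}(z)\,\mathrm{Id}$ for the $U_p(\widehat{sl}_N)$ quantum $R$ matrices on the antisymmetric tensor (Kirillov--Reshetikhin) modules. Your extra check that the trace construction is not identically zero (via a diagonal entry reducing to a convergent geometric series in $\mathbf{k}$) is a harmless addition that the paper leaves implicit in the identification with the known $R$ matrices.
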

 
\begin{proof}
The claim reduces to the invertibility of each diagonal block $R_{k,l}(z)$.
It is a standard property that the quantum $R$ matrices satisfy the inversion relation
$PR_{l,k}(z^{-1}) PR_{k,l}(z) = \varrho_{k,l}(z)\mathrm{Id}_{\mathbb{V}_k \otimes \mathbb{V}_l}$,
where $P(u \otimes v) = v \otimes u$ denotes the transposition, and $\varrho_{k,l}(z)$ is a rational function of 
$z$ and $p$.
\end{proof}

\subsection{Tetrahedron equations and inversion relations}

Set $\mathbf{V}:=V^{\otimes 4} \otimes \mathcal{W}(q) \otimes \mathcal{W}(-q)$, 
and extend the actions of $\LL$ and $\MM$ to those on $\mathbf{V}$:
for $1 \le i < j  \le 4$ and $k=5,6$, we let $\LL_{ijk}$ and $\MM_{ijk}$ denote 
the operators acting as 
$\LL$ and $\MM$ on the $(i,j,k)$-th components of $\mathbf{V}$
and as identity elsewhere. 

\begin{prop}\label{pr:te4}
The operators $\LL$ and $\MM$ satisfy the following tetrahedron equations of four types:
\begin{itemize}
\item[(o)] Ordinary type: 
$\MM_{126} \MM_{346} \LL_{135} \LL_{245} 
= \LL_{245} \LL_{135} \MM_{346} \MM_{126}$. 
\begin{align}\label{LM-o}
\begin{tikzpicture}
\begin{scope}[>=latex,xshift=0pt]
\draw [-] (0,0) coordinate(A) to [out = 0, in = -135] (2,0.5) coordinate(B);
\draw [-] (1,1) coordinate(C) to [out = 0, in = 135] (B);
\draw [-] (C) to [out = 90, in = -45] (0.5,2) coordinate(D);
\draw [-] (A) to [out = 90, in = -135] (D);
\draw [-] (-0.5,0) node[left]{$1$} --(A); \draw [-] (0,-0.5) node[below]{$3$}--(A);
{\color{blue} 
\draw [<-] (-0.5,-0.5)--(A)--(C)--(1.5,1.5) node[above right]{$5$}[thick]; %
}
{\color{green}
\draw [<-] (2,0)--(B) to [out = 90, in = 0] (D) -- (0,2) node[left]{$6$}[thick];}
\draw [<-] (2.5,0)--(B); \draw [<-] (2.5,1)--(B);
\draw [-] (1,0.5) node[below]{$4$}--(C); \draw [-] (0.5,1) node[left]{$2$}--(C);
\draw [<-] (1,2.5)--(D); \draw [<-] (0,2.5)--(D);
\draw (3.3,1) node {$=$}; 
\end{scope}
\begin{scope}[>=latex,xshift=135pt]
\draw [-] (2,2) coordinate(A) to [out = 180, in = 45] (0,1.5) coordinate(B);
\draw [-] (1,1) coordinate(C) to [out = 180, in = -45] (B);
\draw [-] (C) to [out = -90, in = 135] (1.5,0) coordinate(D);
\draw [-] (A) to [out = -90, in = 45] (D);
\draw [<-] (2.5,2)--(A); \draw [<-] (2,2.5)--(A);
{\color{blue} 
\draw [->] (2.5,2.5) node[above right]{$5$}--(A)--(C)--(0.5,0.5) [thick]; %
}
{\color{green}
\draw [->] (0,2) node[above]{$6$}--(B) to [out = -90, in = 180] (D)--(2,0)[thick]; 
}
\draw [-] (-0.5,2) node[left]{$2$}--(B); \draw [-] (-0.5,1) node[left]{$1$}--(B);
\draw [<-] (1.5,1)--(C); \draw [<-] (1,1.5)--(C);
\draw [-] (1,-0.5) node[below left]{$3$}--(D); \draw [-] (2,-0.5) node[below right]{$4$}--(D); 
\end{scope}
\end{tikzpicture}
\end{align}

\item[(h)] Horizontally reversed type:
$[\MM_{346} \LL_{315} \LL_{425} \MM_{126}]_{\ast 6} 
= [\MM_{126} \LL_{425} \LL_{315} \MM_{346}]_{\ast 6}$ if $r'=s'$ and $g' = qf'$, 
where the subscript $\ast 6$ means that the multiplication order in $\mathcal{W}(-q)$ acting on the $6$-th component of $\mathbf{V}$ is reversed. 
\begin{align}\label{LM-h}
\begin{tikzpicture}
\begin{scope}[>=latex,xshift=0pt]
\draw [-] (0,0) coordinate(A) to [out = 0, in = -135] (2,0.5) coordinate(B);
\draw [-] (1,1) coordinate(C) to [out = 0, in = 135] (B);
\draw [-] (C) to [out = 90, in = -45] (0.5,2) coordinate(D);
\draw [-] (A) to [out = 90, in = -135] (D);
\draw [<-] (-0.5,0) node[left]{$1$} --(A); \draw [-] (0,-0.5) node[below]{$3$}--(A);
{\color{blue} 
\draw [-] (A)--(C) [thick]; %
\draw [<-] (-0.5,-0.5)--(A)[thick];%
\draw [-] (1.5,1.5) node[above right]{$5$}--(C)[thick]; %
}
{\color{green}
\draw [-] (B) to [out = 90, in = 0] (D)[thick]; 
\draw [<-] (2,0)--(B); [thick] 
\draw [-] (0,2) node[left]{$6$}--(D)[thick]; 
}
\draw [-] (2.5,0)--(B); \draw [-] (2.5,1)--(B);
\draw [-] (1,0.5) node[below]{$4$}--(C); \draw [<-] (0.5,1) node[left]{$2$}--(C);
\draw [<-] (1,2.5)--(D); \draw [<-] (0,2.5)--(D);
\draw (3.3,1) node {$=$}; 
\end{scope}
\begin{scope}[>=latex,xshift=135pt]
\draw [-] (2,2) coordinate(A) to [out = 180, in = 45] (0,1.5) coordinate(B);
\draw [-] (1,1) coordinate(C) to [out = 180, in = -45] (B);
\draw [-] (C) to [out = -90, in = 135] (1.5,0) coordinate(D);
\draw [-] (A) to [out = -90, in = 45] (D);
\draw [-] (2.5,2)--(A); \draw [<-] (2,2.5)--(A);
{\color{blue} 
\draw [-] (A)--(C) [thick]; %
\draw [-] (2.5,2.5) node[above right]{$5$} --(A) [thick];%
\draw [<-] (0.5,0.5)--(C)[thick];%
}
{\color{green}
\draw [-] (B) to [out = -90, in = 180] (D)[thick]; 
\draw [-] (0,2) node[above]{$6$}--(B); [thick] 
\draw [<-] (2,0)--(D)[thick]; 
}
\draw [<-] (-0.5,2) node[left]{$2$}--(B); \draw [<-] (-0.5,1) node[left]{$1$}--(B);
\draw [-] (1.5,1)--(C); \draw [<-] (1,1.5)--(C);
\draw [-] (1,-0.5) node[below left]{$3$}--(D); \draw [-] (2,-0.5) node[below right]{$4$}--(D); 
\end{scope}
\end{tikzpicture}
\end{align}
\item[(v)] Vertically reversed type: 
$\MM_{126} \LL_{315} \LL_{425} \MM_{346} 
= \MM_{346} \LL_{425} \LL_{315} \MM_{126}$ if $r'=s'$ and $g'=q^{-1}f'$.
\begin{align}\label{LM-v}
\begin{tikzpicture}
\begin{scope}[>=latex,xshift=0pt]
\draw [-] (0,0) coordinate(A) to [out = 0, in = -135] (2,0.5) coordinate(B);
\draw [-] (1,1) coordinate(C) to [out = 0, in = 135] (B);
\draw [-] (C) to [out = 90, in = -45] (0.5,2) coordinate(D);
\draw [-] (A) to [out = 90, in = -135] (D);
\draw [-] (-0.5,0) node[left]{$1$} --(A); \draw [<-] (0,-0.5) node[below]{$3$}--(A);
{\color{blue} 
\draw [-] (A)--(C) [thick]; %
\draw [<-] (-0.5,-0.5)--(A)[thick];%
\draw [-] (1.5,1.5) node[above right]{$5$}--(C)[thick]; %
}
{\color{green}
\draw [-] (B) to [out = 90, in = 0] (D)[thick]; 
\draw [<-] (2,0)--(B); [thick] 
\draw [-] (0,2) node[left]{$6$}--(D)[thick]; 
}
\draw [<-] (2.5,0)--(B); \draw [<-] (2.5,1)--(B);
\draw [<-] (1,0.5) node[below]{$4$}--(C); \draw [-] (0.5,1) node[left]{$2$}--(C);
\draw [-] (1,2.5)--(D); \draw [-] (0,2.5)--(D);
\draw (3.3,1) node {$=$}; 
\end{scope}
\begin{scope}[>=latex,xshift=135pt]
\draw [-] (2,2) coordinate(A) to [out = 180, in = 45] (0,1.5) coordinate(B);
\draw [-] (1,1) coordinate(C) to [out = 180, in = -45] (B);
\draw [-] (C) to [out = -90, in = 135] (1.5,0) coordinate(D);
\draw [-] (A) to [out = -90, in = 45] (D);
\draw [<-] (2.5,2)--(A); \draw [-] (2,2.5)--(A);
{\color{blue} 
\draw [-] (A)--(C) [thick]; %
\draw [-] (2.5,2.5) node[above right]{$5$} --(A) [thick];%
\draw [<-] (0.5,0.5)--(C)[thick];%
}
{\color{green}
\draw [-] (B) to [out = -90, in = 180] (D)[thick]; 
\draw [-] (0,2) node[above]{$6$}--(B); [thick] 
\draw [<-] (2,0)--(D)[thick]; 
}
\draw [-] (-0.5,2) node[left]{$2$}--(B); \draw [-] (-0.5,1) node[left]{$1$}--(B);
\draw [<-] (1.5,1)--(C); \draw [-] (1,1.5)--(C);
\draw [<-] (1,-0.5) node[below left]{$3$}--(D); \draw [<-] (2,-0.5) node[below right]{$4$}--(D); 
\end{scope}
\end{tikzpicture}
\end{align}
\item[(t)] Totally reversed type: 
$\LL_{135} \LL_{245} \MM_{126} \MM_{346} 
= \MM_{346} \MM_{126} \LL_{245} \LL_{135}$. 
\begin{align}\label{LM-t}
\begin{tikzpicture}
\begin{scope}[>=latex,xshift=0pt]
\draw [-] (0,0) coordinate(A) to [out = 0, in = -135] (2,0.5) coordinate(B);
\draw [-] (1,1) coordinate(C) to [out = 0, in = 135] (B);
\draw [-] (C) to [out = 90, in = -45] (0.5,2) coordinate(D);
\draw [-] (A) to [out = 90, in = -135] (D);
\draw [<-] (-0.5,0) node[left]{$1$} --(A); \draw [<-] (0,-0.5) node[below]{$3$}--(A);
{\color{blue} 
\draw [-] (A)--(C) [thick]; %
\draw [<-] (-0.5,-0.5)--(A)[thick];%
\draw [-] (1.5,1.5) node[above right]{$5$}--(C)[thick]; %
}
{\color{green}
\draw [-] (B) to [out = 90, in = 0] (D)[thick]; 
\draw [<-] (2,0)--(B); [thick] 
\draw [-] (0,2) node[left]{$6$}--(D)[thick]; 
}
\draw [-] (2.5,0)--(B); \draw [-] (2.5,1)--(B);
\draw [<-] (1,0.5) node[below]{$4$}--(C); \draw [<-] (0.5,1) node[left]{$2$}--(C);
\draw [-] (1,2.5)--(D); \draw [-] (0,2.5)--(D);
\draw (3.3,1) node {$=$}; 
\end{scope}
\begin{scope}[>=latex,xshift=135pt]
\draw [-] (2,2) coordinate(A) to [out = 180, in = 45] (0,1.5) coordinate(B);
\draw [-] (1,1) coordinate(C) to [out = 180, in = -45] (B);
\draw [-] (C) to [out = -90, in = 135] (1.5,0) coordinate(D);
\draw [-] (A) to [out = -90, in = 45] (D);
\draw [-] (2.5,2)--(A); \draw [-] (2,2.5)--(A);
{\color{blue} 
\draw [-] (A)--(C) [thick]; %
\draw [-] (2.5,2.5) node[above right]{$5$} --(A) [thick];%
\draw [<-] (0.5,0.5)--(C)[thick];%
}
{\color{green}
\draw [-] (B) to [out = -90, in = 180] (D)[thick]; 
\draw [-] (0,2) node[above]{$6$}--(B); [thick] 
\draw [<-] (2,0)--(D)[thick]; 
}
\draw [<-] (-0.5,2) node[left]{$2$}--(B); \draw [<-] (-0.5,1) node[left]{$1$}--(B);
\draw [-] (1.5,1)--(C); \draw [-] (1,1.5)--(C);
\draw [<-] (1,-0.5) node[below left]{$3$}--(D); \draw [<-] (2,-0.5) node[below right]{$4$}--(D); 
\end{scope}
\end{tikzpicture}
\end{align}
\end{itemize} 
\end{prop}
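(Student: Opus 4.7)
The plan is to verify each of the four tetrahedron equations by reducing it to a finite family of identities in $\mathcal{W}(q)\otimes\mathcal{W}(-q)$. Both sides of each equation lie in $\mathrm{End}(V^{\otimes 4})\otimes\mathcal{W}(q)\otimes\mathcal{W}(-q)$, so by evaluating matrix elements between basis vectors $v_{\mathbf{i}}$ and $v_{\mathbf{a}}$ of $V^{\otimes 4}$ indexed by $(i_1,\ldots,i_4),(a_1,\ldots,a_4)\in\{0,1\}^4$, each equation becomes a collection of scalar identities in the two Weyl algebras. For the ordinary type (o) this identity reads
\begin{equation*}
\sum_{b_1,\ldots,b_4}\mathscr{L}^{b_1 b_3}_{i_1 i_3}\mathscr{L}^{b_2 b_4}_{i_2 i_4}\otimes\mathscr{M}^{a_1 a_2}_{b_1 b_2}\mathscr{M}^{a_3 a_4}_{b_3 b_4}=\sum_{c_1,\ldots,c_4}\mathscr{L}^{a_2 a_4}_{c_2 c_4}\mathscr{L}^{a_1 a_3}_{c_1 c_3}\otimes\mathscr{M}^{c_3 c_4}_{i_3 i_4}\mathscr{M}^{c_1 c_2}_{i_1 i_2},
\end{equation*}
and the weight conservation \eqref{L2} for both $\mathscr{L}$ and $\mathscr{M}$ forces the intermediate indices to satisfy pairwise sum rules, collapsing each sum to at most a handful of terms.

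For (o) I would enumerate the non-trivial sectors by the common total weight $n=i_1+i_2+i_3+i_4=a_1+a_2+a_3+a_4\in\{0,1,2,3,4\}$ together with the pairwise constraints $i_1+i_3=a_1+a_3$ and $i_2+i_4=a_2+a_4$. In each sector I substitute the explicit matrix entries from \eqref{L3} and their $q\mapsto-q$ counterparts for $\mathscr{M}$, then verify the resulting identity in $\mathcal{W}(q)\otimes\mathcal{W}(-q)$ by normal-ordering all monomials using the commutation relations $e^\uu e^\ww=qe^\ww e^\uu$ and $e^{\uu'}e^{\ww'}=-qe^{\ww'}e^{\uu'}$, together with Baker--Campbell--Hausdorff to handle the mixed exponential $e^{2\uu+\ww}$ appearing in $\mathscr{L}^{01}_{10}$. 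The sectors $n\in\{0,4\}$ are immediate from the diagonal entries, those with $n\in\{1,3\}$ reduce to short linear identities in which $\mathscr{L}^{01}_{10}$ enters at most once, and the substantive work is concentrated at $n=2$.

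For the horizontally and vertically reversed types (h) and (v), the subscripts $\mathcal{L}_{315}$ and $\mathcal{L}_{425}$ indicate that the two $V$-slots of $\mathcal{L}$ are swapped between components of $\mathbf{V}$ relative to (o); algebraically this amounts to conjugating $\mathcal{L}$ by a transposition of the corresponding $V$-components, which produces a predictable modification of its matrix entries. The same sector-by-sector computation applies, and the parameter constraints $r'=s'$ together with $g'=q^{\pm1}f'$ should emerge as exactly the conditions under which the coupled identities in $\mathcal{W}(q)\otimes\mathcal{W}(-q)$ close; for (h), the reversed multiplication order $[\,\cdot\,]_{*6}$ on $\mathcal{W}(-q)$ arises from the opposite orientation of the green wire traversing $\mathcal{M}_{126}$ and $\mathcal{M}_{346}$. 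The totally reversed type (t) combines both reversals and is expected to hold without parameter constraints; it can be verified either directly or derived from (o) via the composite transformation, which preserves $\mathcal{M}$ for generic parameters.

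The principal obstacle is the bookkeeping for the weight-$n=2$ sectors of (o): the six pairwise-compatible pairs $((i_k),(a_k))$ produce on the order of a dozen independent scalar identities, each a polynomial relation involving several noncommuting factors from both Weyl algebras, where one must carefully track which factor sits to the left of which after each rewriting. The symmetry $\mathcal{L}(r,s,f,g;q)\leftrightarrow\mathcal{L}(s,r,g,f;q)$ induced by a $180^\circ$ rotation of the vertex, and its analogue for $\mathcal{M}$, cut the number of independent cases roughly in half and make the computation tractable; consistent choice of a normal-ordering convention in each algebra is then all that remains.
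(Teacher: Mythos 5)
Your proposal matches the paper's own proof, which establishes all four identities by direct calculation: writing each tetrahedron equation as a family of matrix-element identities in $\mathcal{W}(q)\otimes\mathcal{W}(-q)$ indexed by in/out states in $V^{\otimes 4}$, using the weight conservation \eqref{L2} to collapse the intermediate sums, and substituting the explicit entries \eqref{L3}; the parameter constraints $r'=s'$, $g'=q^{\pm 1}f'$ for types (h) and (v) emerge from these identities exactly as you predict. Your sector bookkeeping and symmetry reductions are a reasonable organization of the same computation the paper sketches via representative examples.
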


\begin{proof}
These are proven by direct calculation. 
We just demonstrate the calculation in a few examples for (o) and (h). 
Generators of $\mathcal{W}(q)$ (resp.~$\mathcal{W}(-q)$) acting on the $5$-th (resp.~$6$-th) component of 
$\mathbf{V}$  will be denoted by $e^{\pm \uu}, e^{\pm \ww}$ (resp. $e^{\pm \uu'}, e^{\pm \ww'}$)
as in the previous subsection. 
\\[1mm]
(o) For an in-state $\mathbf{i}=(i_1,i_2,i_3,i_4) \in V^{\otimes 4}$ and an out-state $\mathbf{j}=(j_1,j_2,j_3,j_4) \in V^{\otimes 4}$, \eqref{LM-o} is written as 
\begin{align}\label{mLM-o}
\sum_{k_1,k_2,k_3,k_4=0,1} \MM^{j_1 j_2}_{k_1 k_2} \MM^{j_3 j_4}_{k_3 k_4} 
\LL^{k_1k_3}_{i_1i_3} \LL^{k_2 k_4}_{i_2i_4}
=
\sum_{k_1,k_2,k_3,k_4=0,1} \LL^{j_2 j_4}_{k_2 k_4} \LL^{j_1 j_3}_{k_1 k_3} 
\MM^{k_3 k_4}_{i_3 i_4} \MM^{k_1 k_2}_{i_1 i_2}.
\end{align}
Let us see the case of $(\mathbf{i},\mathbf{j}) = (0,1,1,0,1,0,1,0)$. We have for the LHS of \eqref{mLM-o}
\begin{align*}
\MM^{10}_{01} \MM^{10}_{10} \LL^{01}_{01} \LL^{10}_{10} + \MM^{10}_{10} \MM^{10}_{01} \LL^{10}_{01} \LL^{01}_{10}
&= e^{-\ww'}f' e^{\uu'} g e^\uu f e^\uu + f'e^{\uu'} e^{-\ww'} e^{-\ww}(rse^\ww+fg e^{2\uu+\ww})
\\
&= f'e^{\uu'} e^{-\ww'} rs, 
\end{align*}
and for the RHS of \eqref{mLM-o}
\begin{align*}
\LL^{00}_{00} \LL^{11}_{11} \MM^{10}_{10} \MM^{10}_{01}
= rsf'e^{\uu'} e^{-\ww'}
\end{align*}
which equals to the LHS.
\\[1mm]
(h) For an in-state $\mathbf{i}=(i_1,i_2,i_3,i_4) \in V^{\otimes 4}$ and an out-state $\mathbf{j}=(j_1,j_2,j_3,j_4) \in V^{\otimes 4}$, \eqref{LM-h} is written as 
\begin{align}\label{mLM-h}
\sum_{k_1,k_2,k_3,k_4=0,1}  \MM^{k_1 k_2}_{i_1i_2} \MM^{j_3 j_4}_{k_3 k_4} \LL^{k_3 j_1}_{i_3 k_1} \LL^{k_4 j_2}_{i_4 k_2}  
=
\sum_{k_1,k_2,k_3,k_4=0,1}  \LL^{j_4 k_2}_{k_4 i_2} \LL^{j_3 k_1}_{k_3 i_1} \MM^{k_3 k_4}_{i_3 i_4} \MM^{j_1 j_2}_{k_1 k_2},
\end{align}
where we have applied the operation $[~~]_{\ast 6}$. 
For $(\mathbf{i},\mathbf{j}) = (0,1,1,0,1,0,1,0)$, the LHS of \eqref{mLM-h} becomes
\begin{align*}
\MM^{10}_{01} \MM^{10}_{10} \LL^{11}_{11} \LL^{00}_{00} + \MM^{01}_{01} \MM^{10}_{01} \LL^{01}_{10} \LL^{10}_{01}
&= e^{-\ww'} f' e^{\uu'} s r + g'e^{\uu'} e^{-\ww'} (rs e^\ww + fg e^{2\uu+\ww}) e^{-\ww} 
\\
&= sr e^{\uu'}e^{-\ww'}(g'-qf') + fg g'q^{-1}e^{\uu'}e^{-\ww'} e^{2\uu}
\end{align*}
and the RHS becomes 
\begin{align*}
\LL^{01}_{01} \LL^{10}_{10} \MM^{10}_{10} \MM^{10}_{01}
= g e^\uu f e^\uu f'e^{\uu'} e^{-\ww'}.
\end{align*}
Thus \eqref{mLM-h} reduces to $g' = q f'$.
For $(\mathbf{i},\mathbf{j}) = (0,1,0,1,0,0,1,1)$, \eqref{mLM-h} leads to
$$
\MM^{10}_{01} \MM^{11}_{11} \LL^{10}_{01} \LL^{10}_{10} = \LL^{10}_{01} \LL^{10}_{10} \MM^{10}_{01} \MM^{00}_{00} ,
$$
which reduces to $s' = r'$.
\end{proof}

\begin{remark}
One sees that the tetrahedron equation graphically interchanges two vertices corresponding to 
the operator $\LL$ for all types of (o), (h), (v) and (t) in \eqref{LM-o}--\eqref{LM-t}: 
an intersection of the $(1,3,5)$-th wires and the $(2,4,5)$-th wires are 
interchanged by moving the green arrow from northeast (NE) to southwest (SW).
We will use their two-dimensional projection diagrams as follows.
\begin{align}\label{2D-LM}
\begin{tikzpicture}
\begin{scope}[>=latex,xshift=0pt]
\draw (0,2.5) node[left] {(o)};
\draw[->] (0,1)--(2,1);
\draw[->] (1,0)--(1,2);
{\color{green}
\draw[->] (0.5,2)--(2,0.5)[thick];
}
\draw (2.5,1) node {$=$}; 
\draw[->] (3,1)--(5,1);
\draw[->] (4,0)--(4,2);
{\color{green}
\draw[->] (3,1.5)--(4.5,0)[thick];
}
\end{scope}
\begin{scope}[>=latex,xshift=200pt]
\draw (0,2.5) node[left] {(h)};
\draw[<-] (0,1)--(2,1);
\draw[->] (1,0)--(1,2);
{\color{green}
\draw[->] (0.5,2)--(2,0.5)[thick];
}
\draw (2.5,1) node {$=$}; 
\draw[<-] (3,1)--(5,1);
\draw[->] (4,0)--(4,2);
{\color{green}
\draw[->] (3,1.5)--(4.5,0)[thick];
}
\end{scope}
\begin{scope}[>=latex,yshift=-100]
\draw (0,2.5) node[left] {(v)};
\draw[->] (0,1)--(2,1);
\draw[<-] (1,0)--(1,2);
{\color{green}
\draw[->] (0.5,2)--(2,0.5)[thick];
}
\draw (2.5,1) node {$=$}; 
\draw[->] (3,1)--(5,1);
\draw[<-] (4,0)--(4,2);
{\color{green}
\draw[->] (3,1.5)--(4.5,0)[thick];
}
\end{scope}
\begin{scope}[>=latex,xshift=200pt,yshift=-100]
\draw (0,2.5) node[left] {(t)};
\draw[<-] (0,1)--(2,1);
\draw[<-] (1,0)--(1,2);
{\color{green}
\draw[->] (0.5,2)--(2,0.5)[thick];
}
\draw (2.5,1) node {$=$}; 
\draw[<-] (3,1)--(5,1);
\draw[<-] (4,0)--(4,2);
{\color{green}
\draw[->] (3,1.5)--(4.5,0)[thick];
}
\end{scope}
\end{tikzpicture}
\end{align}
\end{remark}

\begin{lem}\label{le:inv}
The operator $\MM = \mathcal{M}(r',s',f',g';q)$ \eqref{eq:M} satisfies the following inversion relations of two types:
\begin{itemize}
\item[(I)] Ordinary type: $\mathcal{M}(s',r',f',q^{-1}f';q) \mathcal{M}(s',r',f',q^{-1}f';q) = r' s' \cdot \mathrm{Id}$.
\begin{align}\label{M-I}
\begin{tikzpicture}
\begin{scope}[>=latex,xshift=0pt]
{\color{green} 
\draw [->] (-0.5,1.5) node[left]{$3$}-- (0,1.5) coordinate(A) to [out = 0, in = 90] (1.5,0) coordinate(B) -- (1.5,-0.5)[thick];
}
\draw [->] (-0.5,2) node[above left]{$2$} -- (2,-0.5);
\draw [->] (0,2) node[above]{$1$}-- (A) to [out=-90,in=180] (B) --(2,0); 
\draw (3,0.5) node {$=$}; 
\end{scope}
\begin{scope}[>=latex,xshift=135pt]
\draw [->] (-0.5,2) node[above left]{$2$}-- (2,-0.5);
\draw [->] (-0.3,2.2) node[above]{$1$}-- (2.2,-0.3);
{\color{green} 
\draw [->] (-0.7,1.8) node[left]{$3$}-- (1.8,-0.7)[thick];
}
\draw (2.3,0.5) node {$\times \,r's'$}; 
\end{scope}
\end{tikzpicture}
\end{align}

\item[(I')] Reversed type: $[\mathcal{M}(r',s',f',q f';q) \mathcal{M}(s',r',f',qf';q)]_{\ast 3} = r' s' \cdot \mathrm{Id}$.
Here the subscript $\ast 3$ means that the multiplication in $\mathcal{W}(-q)$ is reversed. 
\begin{align}\label{M-I'}
\begin{tikzpicture}
\begin{scope}[>=latex,xshift=0pt]
{\color{green} 
\draw [->] (-0.5,1.5) node[left]{$3$}-- (0,1.5) coordinate(A) to [out = 0, in = 90] (1.5,0) coordinate(B) -- (1.5,-0.5)[thick];
}
\draw [<-] (-0.5,2) node[above left]{$1$}-- (2,-0.5);
\draw [<-] (0,2) node[above]{$2$}-- (A) to [out=-90,in=180] (B) --(2,0); 
\draw (3,0.5) node {$=$}; 
\end{scope}
\begin{scope}[>=latex,xshift=135pt]
\draw [<-] (-0.5,2) node[above left]{$1$}-- (2,-0.5);
\draw [<-] (-0.3,2.2) node[above]{$2$}-- (2.2,-0.3);
{\color{green} 
\draw [->] (-0.7,1.8)node[left]{$3$} -- (1.8,-0.7)[thick];
}
\draw (2.3,0.5) node {$\times \,r's'$}; 
\end{scope}
\end{tikzpicture}
\end{align}
\end{itemize}
In other words, the operator $\MM$  satisfies the inversion relation of type (I) (resp. type (I')) if $r~=s'$ and $g'=q^{-1}f'$ (resp. $g'=q f'$).  
\end{lem}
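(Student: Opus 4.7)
The plan is to verify both \eqref{M-I} and \eqref{M-I'} by direct matrix-element calculation, in the same spirit as the proof of Proposition~\ref{pr:te4}. First I would translate the pictorial identities into component form. Reading off the wire orientations in the diagram of \eqref{M-I}, the composition of the two $\MM$-vertices amounts to a sum
\begin{equation*}
\sum_{k_1,k_2\in\{0,1\}} \MM^{j_1 j_2}_{k_1 k_2}(s',r',f',q^{-1}f')\; \MM^{k_1 k_2}_{i_1 i_2}(s',r',f',q^{-1}f')
\end{equation*}
on the $V^{\otimes 2}$ factors, with the two $\mathcal{W}(-q)$-factors multiplied in the order fixed by the orientation of the green wire (and reversed via $[\,\cdot\,]_{\ast 3}$ in the case of \eqref{M-I'}). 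The claim then becomes a finite system of scalar identities in $\mathcal{W}(-q)$ indexed by $(i_1,i_2,j_1,j_2)\in\{0,1\}^4$.

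Next I would apply the weight-conservation law \eqref{cl} to prune the case analysis. Conservation forces $i_1+i_2=j_1+j_2=k_1+k_2$, so only six external configurations can contribute: $(i_1 i_2,j_1 j_2)\in\{(00,00),(11,11),(01,01),(10,10),(01,10),(10,01)\}$. For each I would substitute the entries from \eqref{L3} under the parameter specialization $(r,s,f,g)\mapsto(s',r',f',q^{-1}f')$ and normal-order using $e^{\uu'}e^{\ww'}=-q\,e^{\ww'}e^{\uu'}$. The scalar diagonals $(00,00)$ and $(11,11)$ directly give $(s')^2$ and $(r')^2$, both equal to $r's'$ once the implicit condition $r'=s'$ built into the parameter slots is read off. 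The weight-one diagonals $(01,01)$ and $(10,10)$ each receive two intermediate-index contributions, which combine via the Baker--Campbell--Hausdorff identity $rs\,e^{\ww}+fg\,e^{2\uu+\ww}=(rs+q\,fg\,e^{2\uu})\,e^{\ww}$ together with $g'=q^{-1}f'$ to collapse to $r's'\cdot\mathrm{Id}$. Finally, the genuinely off-diagonal cases $(01,10)$ and $(10,01)$ must vanish, and it is precisely their vanishing that \emph{forces} the relation $g'=q^{-1}f'$ in type (I).

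The calculation for \eqref{M-I'} is parallel; the reversal $[\,\cdot\,]_{\ast 3}$ effectively exchanges $q\leftrightarrow q^{-1}$ in every Weyl-commutation step performed on the third tensor factor, which is what flips the required parameter constraint to $g'=q f'$ and leaves the scalar $r's'$ unchanged. The main obstacle is not conceptual but purely combinatorial: one must pin down the exact index placements and the correct $\mathcal{W}(-q)$-multiplication order from the pictures so that the six matrix-element identities are stated with the right conventions. Once these choices are fixed, every case reduces to a short manipulation using only \eqref{L3}, the Weyl commutation relation, and the parameter constraints $r'=s'$ and $g'=q^{\mp 1}f'$.
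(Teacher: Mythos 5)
Your proposal is correct and follows essentially the same route as the paper: both reduce the pictorial identity to the six matrix-element equations in $\mathcal{W}(-q)$ allowed by weight conservation and check them using \eqref{L3} and the Weyl relation, with the diagonal scalar entries forcing $r'=s'$, the off-diagonal entries forcing $g'=q^{\mp 1}f'$, and the reversal $[\,\cdot\,]_{\ast 3}$ accounting for the sign of the exponent of $q$ in type (I'). The only (immaterial) difference is that the paper starts from a general product $\mathcal{M}(r,s,f,g)\mathcal{M}(r',s',f',g')$ and derives the parameter constraints, whereas you verify the already-specialized product.
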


\begin{proof}
(I) An equation $\mathcal{M}(r,s,f,g;q) \mathcal{M}(r',s',f',g';q) = r's' \cdot \mathrm{Id}$ leads to
\begin{align}
\label{inv-1}
&rr' = s s' = r's',
\\
\label{inv-2}
&g g'e^{2\uu'} + rs + fg e^{2\uu'+\ww'} e^{-\ww'} = r's',
\\
&g e^\uu (r's' e^{\ww'} + f'g' e^{2\uu'+\ww'})+ (rs e^{\ww'} + fg e^{2\uu'+\ww'})f'e^{\uu'} = 0,
\\
&g' e^{-\ww'} e^{\uu'} + f e^{\uu'} e^{-\ww'} = 0,
\\
\label{inv-5}
&e^{-\ww'}(r's' e^{\ww'} + f'g' e^{2\uu'+\ww'})+ff'e^{2\uu'} = r's',
\end{align}
and we obtain $r=s', ~s=r', ~f=q g'$ and $g=q^{-1}f'$.
This reduces to \eqref{M-I} by further setting $g' = q^{-1}f'$.
\\
(I') An equation $[\mathcal{M}(r,s,f,g;q) \mathcal{M}(r',s',f',g';q)]_{\ast 3} = r's' \cdot \mathrm{Id}$ leads to \eqref{inv-1}--\eqref{inv-5} but operators in \eqref{inv-2}--\eqref{inv-5} are in reverse order due to $[~~]_{\ast 3}$;
for example \eqref{inv-2} becomes 
$g g'e^{2\uu'} + rs + fg e^{-\ww'} e^{2\uu'+\ww'} = r's'$.
Thus we obtain $r=s', ~s=r', ~f=q^{-1} g'$ and $g=q f'$, and 
get \eqref{M-I'} by further setting $g' = q f'$.
\end{proof}

\begin{remark}
In the two-dimensional view, the operations (I) and (I') are depicted as follows. 
\begin{align}\label{2D-I}
\begin{tikzpicture}
\begin{scope}[>=latex,xshift=0pt]
\draw (0,2.5) node[left] {(I)};
\draw[->] (0,2)--(2,0);
{\color{green}
\draw[->] (0,1.5)--(0.5,1.5) to [out=0, in=90] (1.5,0.5)--(1.5,0)[thick];
}
\draw (2.5,1) node {$=$}; 
\draw[->] (3,2)--(5,0);
{\color{green}
\draw[->] (2.7,1.7)--(4.7,-0.3)[thick];
}
\end{scope}
\begin{scope}[>=latex,xshift=200pt]
\draw (0,2.5) node[left] {(I')};
\draw[<-] (0,2)--(2,0);
{\color{green}
\draw[->] (0,1.5)--(0.5,1.5) to [out=0, in=90] (1.5,0.5)--(1.5,0)[thick];
}
\draw (2.5,1) node {$=$}; 
\draw[<-] (3,2)--(5,0);
{\color{green}
\draw[->] (2.7,1.7)--(4.7,-0.3)[thick];
}
\end{scope}
\end{tikzpicture}
\end{align}
We also have the inverses (iI) of (I) and the inverse (iI') of (I') depicted as follows.
\begin{align}\label{2D-iI}
\begin{tikzpicture}
\begin{scope}[>=latex,xshift=0pt]
\draw (0,2.5) node[left] {(iI)};
\draw[->] (0,2)--(2,0);
{\color{green}
\draw[->] (0.3,2.3)--(2.3,0.3)[thick];
}
\draw (2.5,1) node {$=$}; 
\draw[->] (3,2)--(5,0);
{\color{green}
\draw[->] (3.5,2)--(3.5,1.5) to [out=-90, in=180] (4.5,0.5)--(5,0.5)[thick];
}
\end{scope}
\begin{scope}[>=latex,xshift=200pt]
\draw (0,2.5) node[left] {(iI')};
\draw[<-] (0,2)--(2,0);
{\color{green}
\draw[->] (0.3,2.3)--(2.3,0.3)[thick];
}
\draw (2.5,1) node {$=$}; 
\draw[<-] (3,2)--(5,0);
{\color{green}
\draw[->] (3.5,2)--(3.5,1.5) to [out=-90, in=180] (4.5,0.5)--(5,0.5)[thick];
}
\end{scope}
\end{tikzpicture}
\end{align}

\end{remark}

\subsection{$RLLL$ relation}\label{sec:RLLL}

We briefly recall the $RLLL$ relation \eqref{RLLL} for the q-6v model studied in \cite[\S 7]{IKSTY}. See \cite{IKSTY} for the detail.
For three canonical pairs $\uu_i$ and $\ww_i$, and parameters $\mathscr{P}_i:=(a_i, b_i, c_i, d_i, e_i) \in \C^5$ satisfying $a_i+b_i+c_i+d_i+e_i=0$ for $i=1,2,3$, the $R$-operator $\mathscr{R}=\mathscr{R}(\mathscr{P}_1 ,\mathscr{P}_2 ,\mathscr{P}_3)$ is defined as
\begin{align}
\label{r123}
\mathscr{R} 
&= 
\Psi_q(\e^{-d_1-c_2-b_3+u_1+u_3+w_1-w_2+w_3})^{-1}
\Psi_q(\e^{-d_1-c_2-b_3-e_3+u_1-u_3+w_1-w_2+w_3})^{-1}
\nonumber\\
&\quad \times P_{123} 
\Psi_q(\e^{-b_1-a_2-d_3-e_3+u_1-u_3+w_1-w_2+w_3})
\Psi_q(\e^{-b_1-a_2-d_3+u_1+u_3+w_1-w_2+w_3}),
\\
P_{123} &=  
\e^{\tfrac{1}{\hbar}(u_3-u_2)w_1}
\e^{\tfrac{\lambda_0}{\hbar}(-w_1-w_2+w_3)}
\e^{\tfrac{1}{\hbar}(\lambda_1u_1+\lambda_2u_2+\lambda_3u_3)}\rho_{23}
\label{pijk} 
\end{align}
Here $\Psi_q$ is the quantum dilogarithm, 
\begin{align}\label{Psiq}
\Psi_q(Y) = \frac{1}{(-qY; q^2)_\infty}, \quad (z;q)_\infty = \prod_{n=0}^\infty (1-zq^n),
\end{align}
$\rho_{23}$ is the permutation of the labels of the canonical pairs, and $\lambda_r=\lambda_r(\mathscr{P}_1,\mathscr{P}_2,\mathscr{P}_3)$
for $r=0$, $1$, $2$, $3$ are defined by 
\begin{align}\label{lad}
\lambda_0 = \frac{e_2-e_3}{2},\quad 
\lambda_1 = a_2-a_3+b_2-b_3 + \lambda_0,\quad
\lambda_2 = -a_1-b_2+b_3-\lambda_0,\quad
\lambda_3 = c_1-c_2+c_3.
\end{align}

\begin{thm}\cite[Theorem 7.1]{IKSTY}
\label{th:RLLL}
The $R$-operator \eqref{r123} and the operator $\mathscr{L}$ \eqref{L1} satisfy  the $RLLL$ relation \eqref{RLLL}, where we set $\mathscr{R}_{456} = \mathscr{R}$, $\mathscr{L}_{124} = \mathscr{L}(r_1,s_1,f_1,g_1;q)$,
$\mathscr{L}_{135}=\mathscr{L}(r_2,s_2,f_2,g_2;q)$ and $\mathscr{L}_{236}=\mathscr{L}(r_3,s_3,f_3,g_3;q)$ on the both sides of \eqref{RLLL}, and identify the parameters as
  \begin{equation}
    r_i = \e^{c_i}, \quad
    s_i = \e^{a_i}, \quad
    f_i = \e^{-b_i}, \quad
    g_i = \e^{-d_i},
  \end{equation}
for $i=1,2,3$.
For $\mathscr{L}(r_i,s_i,f_i,g_i;q)$, we use the $q$-Weyl pair defined by a canonical pair $\uu_i$ and $\ww_i$ satisfying $[\uu_i,\ww_i] = \hbar$, which generates the $i$-th component of $\mathcal{W}(q)^{\otimes 3}$. 
\end{thm}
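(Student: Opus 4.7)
The plan is to verify \eqref{RLLL} by a cluster-algebraic analysis, following the strategy of \cite{IKSTY}. The core idea is to realize each $\mathscr{L}$ as (a matrix of) quantum cluster transformations on a local piece of the symmetric butterfly (SB) quiver, so that the two sides of \eqref{RLLL} become two mutation sequences on a common ambient quiver, with $\mathscr{R}$ being precisely the operator implementing the change of mutation order.

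First I would make the cluster realization of $\mathscr{L}$ explicit. Each matrix element $\mathscr{L}^{ab}_{ij}$ in Figure \ref{fig:6v} is a Laurent expression in $e^{\pm \uu}, e^{\pm \ww}$ whose coefficients involve the frozen parameters $r,s,f,g$. After an appropriate gauge, these elements should be readable as the transition data of a short mutation sequence on a subquiver with a small number of mutable vertices and several frozen ones, where the exponents in $r=\e^c, s=\e^a, f=\e^{-b}, g=\e^{-d}$ play the role of frozen cluster variables attached to the $\mathscr{L}$-vertex, and the auxiliary parameter $e_i$, fixed by $a_i+b_i+c_i+d_i+e_i=0$, records a central/monomial shift.

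Next I would analyze the composite $\mathscr{L}_{124}\mathscr{L}_{135}\mathscr{L}_{236}$. Because the three $\mathscr{L}$'s share the $q$-Weyl factors labeled $4,5,6$, the three local subquivers glue along these shared external vertices into the full SB-quiver. The two orderings in \eqref{RLLL} correspond to two mutation sequences on this quiver associated with the two triangulations of the tetrahedron whose spectator labels are $1,2,3$. The passage from one sequence to the other is governed by the quantum pentagon identity
\begin{equation*}
\Psi_q(U)\Psi_q(V)=\Psi_q(V)\Psi_q(U+V)\Psi_q(U),\qquad UV=q^2 VU,
\end{equation*}
which contributes one dilogarithm factor per step. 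Collecting the four resulting dilogarithms, together with the monomial part of the cluster transformation, should reproduce both the product of $\Psi_q$'s in \eqref{r123} and the operator $P_{123}$ in \eqref{pijk}, provided one uses the parameter dictionary $r_i=\e^{c_i}$, $s_i=\e^{a_i}$, $f_i=\e^{-b_i}$, $g_i=\e^{-d_i}$.

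The main obstacle will be the bookkeeping: one must carefully track how each mutation shifts the frozen exponents so that the final monomial factor matches $P_{123}$ with the precise coefficients $\lambda_0,\lambda_1,\lambda_2,\lambda_3$ specified in \eqref{lad}, including the delicate half-integer $\lambda_0=(e_2-e_3)/2$. The constraint $a_i+b_i+c_i+d_i+e_i=0$ plays an essential role here, since it ensures the weight conservation $a+b=i+j$ built into $\mathscr{L}$ is compatible with the cluster transformation at each step. Once both the dilogarithm factors and the monomial part are matched, \eqref{RLLL} follows from the pentagon identity together with the elementary $q$-Weyl commutations. If the quiver-theoretic route proved too indirect, an alternative would be to expand both sides in the $V^{\otimes 3}$ basis, use the conservation law to cut down to a handful of independent identities in $\mathcal{W}(q)^{\otimes 3}$, and verify each one by the difference equation $\Psi_q(q^2 Y)/\Psi_q(Y)=1/(1+qY)$ together with the pentagon identity.
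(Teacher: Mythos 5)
The first thing to note is that the paper does not prove this statement at all: Theorem \ref{th:RLLL} is imported verbatim from \cite[Theorem 7.1]{IKSTY}, so there is no in-paper argument to compare against. Your sketch is therefore an attempt to reconstruct the cited proof, and while it correctly names the ingredients that \cite{IKSTY} is known to use (the symmetric butterfly quiver, a mutation sequence, quantum dilogarithms, the monomial operator $P_{123}$), as written it is a plan rather than a proof. Every decisive step is deferred: the claim that the matrix elements $\mathscr{L}^{ab}_{ij}$ ``should be readable as the transition data of a short mutation sequence,'' the claim that the two sides of \eqref{RLLL} are two mutation sequences differing by pentagon moves, and the claim that collecting the pentagon contributions ``should reproduce'' the four $\Psi_q$ factors and the exponents $\lambda_0,\dots,\lambda_3$ of \eqref{lad} are all asserted, not established. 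Note also that the structure of \eqref{r123} --- two \emph{inverse} dilogarithms, then $P_{123}$, then two dilogarithms, with arguments involving the asymmetric combination $-e_3$ --- is not what a naive ``one $\Psi_q$ per pentagon step'' count would produce, so the matching you describe as ``bookkeeping'' is in fact the substance of the proof.

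There is also a concrete error of reasoning: you write that the constraint $a_i+b_i+c_i+d_i+e_i=0$ ``ensures the weight conservation $a+b=i+j$ built into $\mathscr{L}$.'' This conflates two unrelated uses of the letters $a,b$: in \eqref{L2} they are the $\{0,1\}$-valued indices of $\mathscr{L}^{ab}_{ij}$, and weight conservation holds by definition of $\mathscr{L}$ for arbitrary parameters, whereas in $\mathscr{P}_i=(a_i,b_i,c_i,d_i,e_i)$ they are exponents of the parameters $s_i=\e^{a_i}$, $f_i=\e^{-b_i}$; the constraint merely fixes the auxiliary $e_i$ (on which $\mathscr{L}$ does not depend) as a normalization entering $\mathscr{R}$ through $\lambda_0$ and the shifted dilogarithm arguments. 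To turn your outline into a proof you would either have to carry out the cluster realization explicitly, or fall back on the elementary route you mention at the end --- expand \eqref{RLLL} in the $V^{\otimes 3}$ basis as in \eqref{qybe}, use \eqref{L2} to reduce to finitely many identities in $\mathcal{W}(q)^{\otimes 3}$, and verify each using $\Psi_q(q^2Y)/\Psi_q(Y)=(1+qY)^{-1}$ and the Weyl relations. That second route is complete-able but is an entire computation in itself; in the present paper the correct ``proof'' is simply the citation of \cite[Theorem 7.1]{IKSTY}.
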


This $RLLL$ relation has the representation with wiring diagrams: 
\begin{align}\label{RLLL-dgm}
\begin{tikzpicture}
\begin{scope}[>=latex,xshift=0pt]
\draw (-0.5,1) node[left]{$\displaystyle{\sum_{\alpha,\beta,\gamma=0,1}} \mathscr{R} ~\circ$};
\draw [-] (-0.3,2.3)--(-0.5,2.3)--(-0.5,-0.3) --(-0.3,-0.3);
\draw [-] (4.3,2.3)--(4.5,2.3)--(4.5,-0.3) --(4.3,-0.3);
\draw [->] (0,0) node[left]{$k$} to [out = 0, in = -135] (2,0.5) coordinate(B) node[below=2pt]{$2$}-- (3,1.5) coordinate(A) node[above=2pt]{$3$}to [out = 45, in = 180] (4,2) node[right]{$c$};
\draw [->] (0,1) node[left]{$j$} to [out = 0, in = -135] (1,1.5) coordinate(C) node[above=2pt]{$1$} to [out = 45, in = 135] (A) to [out = -45, in = 180] (4,1) node[right]{$b$};
\draw [->] (0,2) node[left]{$i$} to [out = 0, in = 135] (C) -- (B) to [out = -45, in = 180] (4,0) node[right]{$a$};
\draw (2.5,1) node[right]{$\gamma$}; 
\draw (2,1.9) node[above]{$\beta$}; 
\draw (1.5,1) node[left]{$\alpha$};
{\color{blue}
\draw [->] (2.4,0.6)--(1.6,0.4);
\draw [->] (1.4,1.6)--(0.6,1.4);
\draw [->] (3.4,1.6)--(2.6,1.4);
}
\draw (5,1) node{$=$};
\end{scope}
\begin{scope}[>=latex,xshift=200pt]
\draw (-0.5,1) node[left]{$\displaystyle{\sum_{\alpha,\beta,\gamma=0,1}}$};
\draw [-] (-0.3,2.3)--(-0.5,2.3)--(-0.5,-0.3) --(-0.3,-0.3);
\draw [-] (4.3,2.3)--(4.5,2.3)--(4.5,-0.3) --(4.3,-0.3);
\draw [->] (0,2) node[left]{$i$} to [out = 0, in = 135] (2,1.5) coordinate(B) node[above=2pt]{$2$} -- (3,0.5) coordinate(C) node[below=2pt]{$1$} to [out = -45, in = 180] (4,0) node[right]{$a$};
\draw [->] (0,1) node[left]{$j$} to [out = 0, in = 135] (1,0.5) coordinate(A) node[below=2pt]{$3$} to [out = -45, in = -135] (C) to [out = 45, in = 180] (4,1) node[right]{$b$};
\draw [->] (0,0) node[left]{$k$} to [out = 0, in = -135] (A) -- (B) to [out = 45, in = 180] (4,2) node[right]{$c$};
\draw (1.5,1) node[left]{$\gamma$}; 
\draw (2,0.2) node[below]{$\beta$}; 
\draw (2.5,1) node[right]{$\alpha$};
{\color{blue}
\draw [->] (2.4,1.6)--(1.6,1.4);
\draw [->] (1.4,0.6)--(0.6,0.4);
\draw [->] (3.4,0.6)--(2.6,0.4);
}
\draw (4.5,1) node[right]{$\circ ~\mathscr{R}$};
\end{scope}
\end{tikzpicture}
\end{align}
for arbitrary $a$, $b$, $c$, $i$, $j$, $k \in \{0,1\}$.
Equivalently, in terms of the components $\mathscr{L}^{ab}_{ij}$ of $\mathscr{L}$ it is written as 
\begin{equation}\label{qybe}
  \mathscr{R} \sum_{\alpha, \beta, \gamma=0,1}(\mathscr{L}^{\alpha \beta}_{ij} 
  \otimes \mathscr{L}^{a \gamma}_{\alpha k} \otimes \mathscr{L}^{bc}_{\beta\gamma})  
  = \sum_{\alpha, \beta, \gamma=0,1}
  (\mathscr{L}^{ab}_{\alpha\beta} \otimes 
  \mathscr{L}^{\alpha c}_{i\gamma} \otimes \mathscr{L}^{\beta\gamma}_{jk}) \, \mathscr{R}.
\end{equation}

\section{Commuting transfer matrices}

\subsection{Ordinary case}

For positive integers $m$ and $n$, consider a square grid $G = G_{m,n}$ 
on a torus given by $m$ horizontal and $n$ vertical directed closed wires as Figure \ref{fig:Gmn}.
We assign the operator $\LL$ \eqref{L1} on each vertex of $G$.

\begin{figure}[ht]
\[
\begin{tikzpicture}
\begin{scope}[>=latex,xshift=0pt]
\draw [->] (-0.5,0)--(4.5,0);
\draw [->] (-0.5,2)--(4.5,2);
\draw [->] (-0.5,3)--(4.5,3);
\draw [->] (0,-0.5)--(0,3.5);
\draw [->] (1,-0.5)--(1,3.5);
\draw [->] (2,-0.5)--(2,3.5);
\draw [->] (4,-0.5)--(4,3.5);
{\color{red}
\draw [-] (-0.5,3.5)--(4.5,3.5);
\draw [-] (-0.5,-0.5)--(4.5,-0.5);
\draw [-] (-0.5,-0.5)--(-0.5,3.5);
\draw [-] (4.5,-0.5)--(4.5,3.5);
}
\draw [-] (-0.7,3.3) to [out=200,in=90] (-0.8,3.0)--(-0.8,1.7) to [out=-90,in=0] (-0.9,1.5) node[left] {$m$} to [out=0,in=90] (-0.8,1.3) -- (-0.8,0) to [out=-90,in=180] (-0.7,-0.2);
\draw [-] (-0.2,3.7) to [out=90,in=0] (0,3.8)--(1.8,3.8) to [out=0,in=-90] (2,3.9) node[above] {$n$} to [out=-90,in=180] (2.2,3.8) -- (4,3.8) to [out=0,in=90] (4.2,3.7); 
\end{scope}
\begin{scope}[>=latex,xshift=190pt]
\draw [->] (-0.5,0)--(4.5,0);
\draw [->] (-0.5,2)--(4.5,2);
\draw [->] (-0.5,3)--(4.5,3);
\draw [->] (0,-0.5)--(0,3.5);
\draw [->] (1,-0.5)--(1,3.5);
\draw [->] (2,-0.5)--(2,3.5);
\draw [->] (4,-0.5)--(4,3.5);
{\color{blue} 
\draw [<-] (-0.4,-0.3) -- (0.4,0.3)[thick];
\draw [<-] (0.6,-0.3) -- (1.4,0.3)[thick];
\draw [<-] (1.6,-0.3) -- (2.4,0.3)[thick];
\draw [<-] (3.6,-0.3) -- (4.4,0.3)[thick];
\draw [<-] (-0.4,1.7) -- (0.4,2.3)[thick];
\draw [<-] (0.6,1.7) -- (1.4,2.3)[thick];
\draw [<-] (1.6,1.7) -- (2.4,2.3)[thick];
\draw [<-] (3.6,1.7) -- (4.4,2.3)[thick];
\draw [<-] (-0.4,2.7) -- (0.4,3.3)[thick];
\draw [<-] (0.6,2.7) -- (1.4,3.3)[thick];
\draw [<-] (1.6,2.7) -- (2.4,3.3)[thick];
\draw [<-] (3.6,2.7) -- (4.4,3.3)[thick];
}
{\color{red}
\draw [-] (-0.5,3.5)--(4.5,3.5);
\draw [-] (-0.5,-0.5)--(4.5,-0.5);
\draw [-] (-0.5,-0.5)--(-0.5,3.5);
\draw [-] (4.5,-0.5)--(4.5,3.5);
}
\end{scope}
\end{tikzpicture}
\]
\caption{The square grid $G_{m,n}$ (left), and the corresponding vertex model (right). 
The red rectangle is a fundamental domain of a torus.}
\label{fig:Gmn}
\end{figure}
%
Let $\mathcal{W}_{mn}(q)$ be a tensor product of $mn$ copies of the $q$-Weyl algebra $\mathcal{W}(q)$. 
Similarly to (\ref{bi}),  we write 
$v_{\mathbf{i}}=v_{i_1} \otimes \cdots \otimes v_{i_m} \in V^{\otimes m}$ for 
$\mathbf{i}=(i_1,i_2,\ldots,i_m) \in \{0,1\}^m$,
and 
$v_{\mathbf{j}} = v_{j_1}  \otimes \cdots \otimes v_{j_n} \in V^{\otimes n}$
 for $\mathbf{j}=(j_1,j_2,\ldots,j_n) \in \{0,1\}^n$, 
 and set $|\mathbf{i}| := \sum_{k=1,\ldots,m} i_k$ and  $|\mathbf{j}| := \sum_{k=1,\ldots,n} j_k$.     
We fix a fundamental domain of a torus, and define a monodromy matrix 
$\mathcal{T}(x,y) \in \mathrm{End}(V^{\otimes m} \otimes V^{\otimes n}) \otimes \mathcal{W}_{mn}(q)$ on $G$ by 
\begin{align}
\mathcal{T}(x,y)(v_{\mathbf{i}} \otimes v_{\mathbf{j}})
= \sum_{\mathbf{a} \in \{0,1\}^m, \mathbf{b} \in \{0,1\}^n}
x^{|\mathbf{a}|} y^{|\mathbf{b}|}
v_{\mathbf{a}} \otimes v_{\mathbf{b}} \otimes 
T^{\mathbf{a},\mathbf{b}}_{\mathbf{i},\mathbf{j}}.
\end{align}
Here $x,y$ are spectral parameters and 
$T^{\mathbf{a},\mathbf{b}}_{\mathbf{i},\mathbf{j}} \in \mathcal{W}_{mn}(q)$ is 
graphically defined as
\begin{align}
\begin{tikzpicture}
\begin{scope}[>=latex,xshift=0pt]
\draw (-3,1.5) node{$\displaystyle{T^{\mathbf{a},\mathbf{b}}_{\mathbf{i},\mathbf{j}} 
= \sum_{\{0,1\}^{\text{inner edges}}}}$};
\draw [->] (-0.5,0) node[left]{$i_m$}--(4.5,0) node[right]{$a_m$};
\draw [->] (-0.5,2) node[left]{$i_2$}--(4.5,2) node[right]{$a_2$};
\draw [->] (-0.5,3) node[left]{$i_1$}--(4.5,3) node[right]{$a_1$};
\draw [->] (0,-0.5) node[below]{$j_1$}--(0,3.5) node[above]{$b_1$};
\draw [->] (1,-0.5) node[below]{$j_2$}--(1,3.5) node[above]{$b_2$};
\draw [->] (2,-0.5) node[below]{$j_3$}--(2,3.5) node[above]{$b_3$};
\draw [->] (4,-0.5) node[below]{$j_n$}--(4,3.5) node[above]{$b_n$};
{\color{blue} 
\draw [<-] (-0.4,-0.3) -- (0.4,0.3)[thick];
\draw [<-] (0.6,-0.3) -- (1.4,0.3)[thick];
\draw [<-] (1.6,-0.3) -- (2.4,0.3)[thick];
\draw [<-] (3.6,-0.3) -- (4.4,0.3)[thick];
\draw [<-] (-0.4,1.7) -- (0.4,2.3)[thick];
\draw [<-] (0.6,1.7) -- (1.4,2.3)[thick];
\draw [<-] (1.6,1.7) -- (2.4,2.3)[thick];
\draw [<-] (3.6,1.7) -- (4.4,2.3)[thick];
\draw [<-] (-0.4,2.7) -- (0.4,3.3)[thick];
\draw [<-] (0.6,2.7) -- (1.4,3.3)[thick];
\draw [<-] (1.6,2.7) -- (2.4,3.3)[thick];
\draw [<-] (3.6,2.7) -- (4.4,3.3)[thick];
}
\end{scope}
\end{tikzpicture}.
\end{align}
The monodromy matrix $\mathcal{T}(x,y)$ itself is depicted as (see (\ref{h}) for the definition of $h$)
\begin{figure}[ht]
\[
\begin{tikzpicture}
\begin{scope}[>=latex,xshift=0pt]
\draw [<-] (3.5,0)--(-0.5,0); 
\draw[fill=blue!20] (3.65,0) circle[radius=0.15] node[right=2pt] {$x^{h}$}; 
\draw [<-] (3.5,2)--(-0.5,2);
\draw[fill=blue!20] (3.65,2) circle[radius=0.15] node[right=2pt] {$x^{h}$}; 
\draw [<-] (0,2.5)--(0,-0.5);
\draw[fill=blue!20] (0,2.65) circle[radius=0.15] node[above=2pt] {$y^{h}$}; 
\draw [<-] (3,2.5)--(3,-0.5);
\draw[fill=blue!20] (3,2.65) circle[radius=0.15] node[above=2pt] {$y^{h}$}; 
\draw (1.6,2.15) node {$\cdots$};
\draw (1.6,0.15) node {$\cdots$};
\draw (0.2,1.15) node {$\vdots$};
\draw (3.2,1.15) node {$\vdots$};
{\color{blue} 
\draw [<-] (-0.4,-0.3) -- (0.4,0.3)[thick];
\draw [<-] (2.6,-0.3) -- (3.4,0.3)[thick];
\draw [<-] (-0.4,1.7) -- (0.4,2.3)[thick];
\draw [<-] (2.6,1.7) -- (3.4,2.3)[thick];
}
\end{scope}
\end{tikzpicture}
.
\]
\end{figure}

We call the part $V^{\otimes m}\otimes V^{\otimes n}$ the {\em auxiliary space} 
of the monodromy matrix $\mathcal{T}(x,y)$.
Now the transfer matrix $T_G(x,y)$, the main object of our study, is defined   
by taking the trace of the monodromy matrix over the auxiliary space:
\begin{align}\label{T-G}
T_G(x,y) &= \mathrm{Tr}_{V^{\otimes m}\otimes V^{\otimes n}}\left(\mathcal{T}(x,y)\right)
= \sum_{\mathbf{i} \in \{0,1\}^m, \mathbf{j} \in \{0,1\}^n} 
T^{\mathbf{i},\mathbf{j}}_{\mathbf{i}, \mathbf{j}} \,
x^{|\mathbf{i}|} y^{|\mathbf{j}|} \in \mathcal{W}_{mn}(q).
\end{align}

One may interpret $T_G(x,y)$ as a $\mathcal{W}_{nm}(q)$-valued partition function 
of the quantized six-vertex model with periodic boundary condition 
twisted by ``boundary magnetic filed" $x^h$ on the horizontal edges and 
$y^h$ on the vertical edges.   
Our first result is the following theorem, which is proved by applying one of the tetrahedron equation (o) \eqref{LM-o}. 
\begin{thm}\label{thm:TT1}
For the square grid $G$ on a torus, the transfer matrices form a 
two parameter commuting family, i.e., the following holds for any $x,y,u,w$:
\begin{align}\label{G-TT}
[ \, T_G(x,y), \,T_G(u,w)\,] = 0. 
\end{align}
\end{thm}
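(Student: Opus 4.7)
The plan is to deploy the three-dimensional analogue of the quantum inverse scattering method. Writing $\mathcal{T}(x,y)_i$ for the monodromy acting on the $i$-th copy $A_i = V^{\otimes m} \otimes V^{\otimes n}$ of the auxiliary space ($i=1,2$), one has
\[
T_G(x,y)\,T_G(u,w) \;=\; \mathrm{Tr}_{A_1 \otimes A_2}\bigl(\mathcal{T}(x,y)_1\,\mathcal{T}(u,w)_2\bigr),
\]
so commutativity will follow from an $RTT$-type intertwining relation
\[
\mathcal{R}\,\mathcal{T}(x,y)_1\,\mathcal{T}(u,w)_2 \;=\; \mathcal{T}(u,w)_2\,\mathcal{T}(x,y)_1\,\mathcal{R}
\]
for a suitable invertible $\mathcal{R} \in \mathrm{End}(A_1 \otimes A_2)$, together with cyclicity of trace and multiplication by $\mathcal{R}^{-1}$.

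The intertwiner $\mathcal{R}$ is built via the trace construction of Figure \ref{fig:trm}: one introduces ``green wire'' $\mathcal{V}_+$-auxiliaries and, at each vertex $(i,j) \in G$, inserts a pair of $\MM$-operators, one attached to the horizontal pair $(H^i_1, H^i_2)$ coming from the two auxiliary copies and one to the vertical pair $(V^j_1, V^j_2)$. After tracing over the green wires, which close up into cycles along the torus, the resulting $\mathcal{R}$ is a product of $R$-matrices of the type \eqref{tred}, specialized in spectral parameters depending on $x/u$ and $y/w$ in a way consistent with the conservation law \eqref{cl}. Invertibility of $\mathcal{R}$ for generic $(x/u, y/w)$ is guaranteed by Lemma \ref{le:rinv}.

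The $RTT$ relation itself is proved by applying the ordinary tetrahedron equation \eqref{LM-o} once per vertex of $G$. At vertex $(i,j)$ the two $\LL$'s from the two layers share the common $\mathcal{W}(q)$-factor at that vertex (playing the role of the ``blue wire'' 5 of \eqref{LM-o}) and the two local $\MM$'s share the green wire (space 6), which is precisely the configuration of type (o); one application permutes the pair $\LL_{135}\LL_{245}$ past $\MM_{126}\MM_{346}$. Iterating this move over all $mn$ vertices slides $\mathcal{R}$ across $\mathcal{T}(x,y)_1\mathcal{T}(u,w)_2$ while simultaneously exchanging the two monodromies. Taking the trace $\mathrm{Tr}_{A_1 \otimes A_2}$ and conjugating by $\mathcal{R}^{-1}$ yields $T_G(x,y)\,T_G(u,w) = T_G(u,w)\,T_G(x,y)$ for generic spectral parameters, and hence for all by rational continuation.

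The main obstacle is the global bookkeeping on the torus. One must verify that the green wires of $\mathcal{R}$ close up consistently into cycles whose intersection pattern with the blue quantum-space factors matches what is needed for successive applications of (o); that the $\MM$'s attached to disjoint $V$-pairs (which commute by the block structure \eqref{cons}) can be reordered so that the tetrahedron moves may be carried out in any convenient order; and that the boundary twists $x^{h}, y^{h}, u^{h}, w^{h}$ intertwine correctly with the $\MM$'s, a point controlled by the second identity of \eqref{cl} and its graphical expression \eqref{fig:Lh}. Once these are handled the remaining calculation is purely formal, and the essential analytic input is invertibility of $\mathcal{R}$ supplied by Lemma \ref{le:rinv}.
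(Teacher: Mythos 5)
Your proposal follows essentially the same route as the paper: a green $\MM$-wire is swept across the torus by $mn$ applications of the ordinary tetrahedron equation (o), the boundary twists are handled via the conservation law \eqref{cl}, the wire is closed into the trace-construction $R$-matrix of \eqref{tred}, and invertibility (Lemma \ref{le:rinv}) plus cyclicity of the trace over the auxiliary space give commutativity from the resulting $RTT$ relation. The one correction to your bookkeeping: the green wire crosses each doubled wire exactly once, so there are $m+n$ operators $\MM$ in total (not a pair per vertex of $G$), and after tracing over $\mathcal{V}_+$ the intertwiner is the single $R$-matrix $R\bigl(\tfrac{xw}{yu}\bigr)$ on $V^{\otimes (m+n)}\otimes V^{\otimes (m+n)}$, depending on the one combined ratio rather than being a product of $R$-matrices depending separately on $x/u$ and $y/w$.
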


\begin{proof}
Consider the compositions of $2mn$ $\LL$'s \eqref{L1} and $(m+n)$ $\MM$'s  \eqref{eq:M} as shown in the left picture below.
Starting from the top right corner, one can apply the tetrahedron equation of ordinary type \eqref{LM-o}  
successively, $nm$ times in total, to transform the green arrow into L shape, thereby 
changing the diagram into the right one as follows:
\begin{align}
\begin{tikzpicture}
\begin{scope}[>=latex,xshift=0pt]
\draw [->] (-0.5,0) --(3,0) to [out=0, in=-150] (4.5,0.32) --(5,0.57); 
\draw [->] (-0.5,2) -- (3,2) to [out=0, in=-150] (4.5,2.32) --(5,2.57);
\draw [->] (0,-0.5) --(0,2) to [out=90, in=-120](0.4,3.5)--(0.8,4.2);
\draw [->] (3,-0.5) --(3,2) to [out=90, in=-120](3.4,3.5)--(3.8,4.2);
\draw (1.8,2.25) node {$\cdots$};
\draw (1.8,0.25) node {$\cdots$};
\draw (0.4,1.25) node {$\vdots$};
\draw (3.4,1.25) node {$\vdots$};
\draw (0,2) node[below right] {\scriptsize $1 1$};
\draw (0,0) node[below right] {\scriptsize $m 1$};
\draw (3,2) node[below right] {\scriptsize $1 n$};
\draw (3,0) node[below right] {\scriptsize $m n$};
\draw [->] (0.3,0.64) --(3.8,0.64) to [out=0, in=135] (4.5,0.32) -- (4.8,0);
\draw [->] (0.3,2.64) --(3.8,2.64) to [out=0, in=135] (4.5,2.32) -- (4.8,2);
\draw [->] (0.8,0.14) --(0.8,2.64) to [out=90, in=-45](0.4,3.5) --(0,3.9);
\draw [->] (3.8,0.14) --(3.8,2.64) to [out=90, in=-45] (3.4,3.5) --(3,3.9);
{\color{blue} 
\draw [<-] (-0.4,-0.32) -- (1,0.8)[thick];
\draw [<-] (2.6,-0.32) -- (4,0.8)[thick];
\draw [<-] (-0.4,1.68) -- (1,2.8)[thick];
\draw [<-] (2.6,1.68) -- (4,2.8)[thick];
}
{\color{green}
\draw [->] (-0.3,3.5)--(3.4,3.5) to [out=0,in=90] (4.5,2.32)--(4.5,-0.3)[thick];}
\draw (6,2) node{$\stackrel{\eqref{LM-o}}{=}$};
\end{scope}
\begin{scope}[>=latex,xshift=240pt]
\draw [<-] (3.5,0)--(0,0) to [out=180, in=-45] (-0.7,0.32) --(-1.02,0.64); 
\draw [<-] (3.5,2)--(0,2) to [out=180, in=-45] (-0.7,2.32) --(-1.02,2.64);
\draw [<-] (0,2.5)--(0,0) to [out=-90, in=135] (0.4,-0.86)--(0.8,-1.26);
\draw [<-] (3,2.5)--(3,0) to [out=-90, in=135] (3.4,-0.86)--(3.8,-1.26);
\draw (1.8,2.25) node {$\cdots$};
\draw (1.8,0.25) node {$\cdots$};
\draw (0.4,1.25) node {$\vdots$};
\draw (3.4,1.25) node {$\vdots$};
\draw (0,2) node[below right] {\scriptsize $1 1$};
\draw (0,0) node[below right] {\scriptsize $m 1$};
\draw (3,2) node[below right] {\scriptsize $1 n$};
\draw (3,0) node[below right] {\scriptsize $m n$};
%
\draw [<-] (4.3,0.64)--(0.8,0.64) to [out=180, in=30] (-0.7,0.32) -- (-1.254,0);\draw [<-] (4.3,2.64)--(0.8,2.64) to [out=180, in=30] (-0.7,2.32) -- (-1.254,2);
\draw [<-] (0.8,3.14)--(0.8,0.64) to [out=-90, in=60] (0.4,-0.86) --(0,-1.55);
\draw [<-] (3.8,3.14)--(3.8,0.64) to [out=-90, in=60] (3.4,-0.86) --(3,-1.55);
{\color{blue} 
\draw [<-] (-0.4,-0.32) -- (1,0.8)[thick];
\draw [<-] (2.6,-0.32) -- (4,0.8)[thick];
\draw [<-] (-0.4,1.68) -- (1,2.8)[thick];
\draw [<-] (2.6,1.68) -- (4,2.8)[thick];
}
{\color{green}
\draw [->] (-0.7,2.82)--(-0.7,0.32) to [out=-90,in=180] (0.4,-0.86)--(4,-0.86)[thick];
}
\end{scope}
\end{tikzpicture}
.
\label{grw}
\end{align}
For the spectral parameters $x,y,u,w$, 
apply the operator $x^{h}, y^{h}, u^{h}, w^{h},$ to the top and the right boundaries of the above picture.
On the LHS,  we further apply the equality (\ref{fig:Lh}).
This leads to the relation  shown in the following diagram.
\begin{align}
\begin{tikzpicture}
\begin{scope}[>=latex,xshift=0pt]
\draw [->] (-0.5,0) --(3,0) to [out=0, in=-150] (4.5,0.32) --(5,0.57); 
\draw[fill=blue!20] (5.15,0.57) circle[radius=0.15] node[above=2pt] {$u^{h}$}; 
\draw [->] (-0.5,2) -- (3,2) to [out=0, in=-150] (4.5,2.32) --(5,2.57);
\draw[fill=blue!20] (5.15,2.57) circle[radius=0.15] node[above=2pt] {$u^{h}$}; 
\draw [->] (0,-0.5) --(0,2) to [out=90, in=-120](0.4,3.5)--(0.8,4.2);
\draw[fill=blue!20] (0.91,4.31) circle[radius=0.15] node[above=2pt] {$w^{h}$}; 
\draw [->] (3,-0.5) --(3,2) to [out=90, in=-120](3.4,3.5)--(3.8,4.2);
\draw[fill=blue!20] (3.91,4.31) circle[radius=0.15] node[above=2pt] {$w^{h}$}; 
\draw (1.8,2.25) node {$\cdots$};
\draw (1.8,0.25) node {$\cdots$};
\draw (0.4,1.25) node {$\vdots$};
\draw (3.4,1.25) node {$\vdots$};
\draw [->] (0.3,0.64) --(3.8,0.64) to [out=0, in=135] (4.5,0.32) -- (4.8,0);
\draw[fill=blue!20] (4.91,-0.11) circle[radius=0.15] node[below=2pt] {$x^{h}$}; 
\draw [->] (0.3,2.64) --(3.8,2.64) to [out=0, in=135] (4.5,2.32) -- (4.8,2);
\draw[fill=blue!20] (4.91,1.89) circle[radius=0.15] node[below=2pt] {$x^{h}$}; 
\draw [->] (0.8,0.14) --(0.8,2.64) to [out=90, in=-45](0.4,3.5) --(0,3.9);
\draw[fill=blue!20] (-0.11,4.01) circle[radius=0.15] node[above=2pt] {$y^{h}$}; 
\draw [->] (3.8,0.14) --(3.8,2.64) to [out=90, in=-45] (3.4,3.5) --(3,3.9);
\draw[fill=blue!20] (2.89,4.01) circle[radius=0.15] node[above=2pt] {$y^{h}$}; 
{\color{blue} 
\draw [<-] (-0.4,-0.32) -- (1,0.8)[thick];
\draw [<-] (2.6,-0.32) -- (4,0.8)[thick];
\draw [<-] (-0.4,1.68) -- (1,2.8)[thick];
\draw [<-] (2.6,1.68) -- (4,2.8)[thick];
}
{\color{green}
\draw [->] (-0.3,3.5)--(3.4,3.5) to [out=0,in=90] (4.5,2.32)--(4.5,-0.3)[thick];}
\draw (5.8,2) node{$=$};
\end{scope}
\begin{scope}[>=latex,xshift=220pt]
\draw [<-] (3.5,0)--(0,0) to [out=180, in=-45] (-0.7,0.32) --(-1.02,0.64); 
\draw[fill=blue!20] (3.65,0) circle[radius=0.15] node[right=2pt] {$x^{h}$}; 
\draw [<-] (3.5,2)--(0,2) to [out=180, in=-45] (-0.7,2.32) --(-1.02,2.64);
\draw[fill=blue!20] (3.65,2) circle[radius=0.15] node[right=2pt] {$x^{h}$}; 
\draw [<-] (0,2.5)--(0,0) to [out=-90, in=135] (0.4,-0.86)--(0.8,-1.26);
\draw[fill=blue!20] (0,2.65) circle[radius=0.15] node[above=2pt] {$y^{h}$}; 
\draw [<-] (3,2.5)--(3,0) to [out=-90, in=135] (3.4,-0.86)--(3.8,-1.26);
\draw[fill=blue!20] (3,2.65) circle[radius=0.15] node[above=2pt] {$y^{h}$}; 
\draw (1.8,2.25) node {$\cdots$};
\draw (1.8,0.25) node {$\cdots$};
\draw (0.4,1.25) node {$\vdots$};
\draw (3.4,1.25) node {$\vdots$};
\draw [<-] (4.3,0.64)--(0.8,0.64) to [out=180, in=30] (-0.7,0.32) -- (-1.254,0);\draw[fill=blue!20] (4.45,0.64) circle[radius=0.15] node[right=2pt] {$u^{h}$}; 
\draw [<-] (4.3,2.64)--(0.8,2.64) to [out=180, in=30] (-0.7,2.32) -- (-1.254,2);\draw[fill=blue!20] (4.45,2.64) circle[radius=0.15] node[right=2pt] {$u^{h}$}; 
\draw [<-] (0.8,3.14)--(0.8,0.64) to [out=-90, in=60] (0.4,-0.86) --(0,-1.55);
\draw[fill=blue!20] (0.8,3.29) circle[radius=0.15] node[above=2pt] {$w^{h}$}; 
\draw [<-] (3.8,3.14)--(3.8,0.64) to [out=-90, in=60] (3.4,-0.86) --(3,-1.55);
\draw[fill=blue!20] (3.8,3.29) circle[radius=0.15] node[above=2pt] {$w^{h}$}; 
{\color{blue} 
\draw [<-] (-0.4,-0.32) -- (1,0.8)[thick];
\draw [<-] (2.6,-0.32) -- (4,0.8)[thick];
\draw [<-] (-0.4,1.68) -- (1,2.8)[thick];
\draw [<-] (2.6,1.68) -- (4,2.8)[thick];
}
{\color{green}
\draw [->] (-0.7,2.82)--(-0.7,0.32) to [out=-90,in=180] (0.4,-0.86)--(4,-0.86)[thick];
}
\end{scope}
\begin{scope}[>=latex,yshift=-160pt]
\draw (2,4.5) node{\rotatebox{90}{$=$}};
\draw (2.1,4.5) node[right] {\scriptsize \eqref{fig:Lh}};
\draw [->] (-0.5,0) --(3,0) to [out=0, in=-150] (4.5,0.32) --(5,0.57); 
\draw[fill=blue!20] (3.5,0) circle[radius=0.15] node[below=2pt] {$u^{h}$}; 
\draw [->] (-0.5,2) -- (3,2) to [out=0, in=-150] (4.5,2.32) --(5,2.57);
\draw[fill=blue!20] (3.5,2) circle[radius=0.15] node[below=2pt] {$u^{h}$}; 
\draw [->] (0,-0.5) --(0,2) to [out=90, in=-120](0.4,3.5)--(0.8,4.2);
\draw[fill=blue!20] (0.1,2.9) circle[radius=0.15] node[left=2pt] {$w^{h}$}; 
\draw [->] (3,-0.5) --(3,2) to [out=90, in=-120](3.4,3.5)--(3.8,4.2);
\draw[fill=blue!20] (3.1,2.9) circle[radius=0.15] node[left=2pt] {$w^{h}$}; 
\draw (1.8,2.25) node {$\cdots$};
\draw (1.8,0.25) node {$\cdots$};
\draw (0.4,1.25) node {$\vdots$};
\draw (3.4,1.25) node {$\vdots$};
\draw [->] (0.3,0.64) --(3.8,0.64) to [out=0, in=135] (4.5,0.32) -- (4.8,0);
\draw[fill=blue!20] (4.2,0.61) circle[radius=0.15] node[above right=1pt] {$x^{h}$}; 
\draw [->] (0.3,2.64) --(3.8,2.64) to [out=0, in=135] (4.5,2.32) -- (4.8,2);
\draw[fill=blue!20] (4.2,2.61) circle[radius=0.15] node[above right=1pt] {$x^{h}$}; 
\draw [->] (0.8,0.14) --(0.8,2.64) to [out=90, in=-45](0.4,3.5) --(0,3.9);
\draw[fill=blue!20] (0.65,3.2) circle[radius=0.15] node[above right] {$y^{h}$}; 
\draw [->] (3.8,0.14) --(3.8,2.64) to [out=90, in=-45] (3.4,3.5) --(3,3.9);
\draw[fill=blue!20] (3.65,3.2) circle[radius=0.15] node[above right] {$y^{h}$}; {\color{blue} 
\draw [<-] (-0.4,-0.32) -- (1,0.8)[thick];
\draw [<-] (2.6,-0.32) -- (4,0.8)[thick];
\draw [<-] (-0.4,1.68) -- (1,2.8)[thick];
\draw [<-] (2.6,1.68) -- (4,2.8)[thick];
}
{\color{green}
\draw [->] (-0.3,3.5)--(3.4,3.5) to [out=0,in=90] (4.5,2.32)--(4.5,-0.3)[thick];}
\draw[fill=blue!20] (4.17,3.17) circle[radius=0.15] node[above right=1pt] {$(\frac{xw}{yu})^{h}$};
\draw[fill=blue!20] (-0.45,3.5) circle[radius=0.15] node[left=1pt] {$(\frac{y}{w})^{h}$};
\draw[fill=blue!20] (4.5,-0.45) circle[radius=0.15] node[right=2pt] {$(\frac{u}{x})^{h}$};
\end{scope}
\end{tikzpicture}
\label{T313}
\end{align}
By multiplying $(x/u)^h$ from the left and 
$(w/y)^h$ from the right on the component corresponding to the green arrow, 
we get 
\begin{align}\label{pSTT=TTS}
\begin{tikzpicture}
\begin{scope}[>=latex,xshift=0pt]
\draw [->] (-0.5,0) --(3,0) to [out=0, in=-150] (4.5,0.32) --(5,0.57); 
\draw[fill=blue!20] (3.5,0) circle[radius=0.15] node[below=2pt] {$u^{h}$}; 
\draw [->] (-0.5,2) -- (3,2) to [out=0, in=-150] (4.5,2.32) --(5,2.57);
\draw[fill=blue!20] (3.5,2) circle[radius=0.15] node[below=2pt] {$u^{h}$}; 
\draw [->] (0,-0.5) --(0,2) to [out=90, in=-120](0.4,3.5)--(0.8,4.2);
\draw[fill=blue!20] (0.1,2.9) circle[radius=0.15] node[left=2pt] {$w^{h}$}; 
\draw [->] (3,-0.5) --(3,2) to [out=90, in=-120](3.4,3.5)--(3.8,4.2);
\draw[fill=blue!20] (3.1,2.9) circle[radius=0.15] node[left=2pt] {$w^{h}$}; 
\draw (1.8,2.25) node {$\cdots$};
\draw (1.8,0.25) node {$\cdots$};
\draw (0.4,1.25) node {$\vdots$};
\draw (3.4,1.25) node {$\vdots$};
\draw [->] (0.3,0.64) --(3.8,0.64) to [out=0, in=135] (4.5,0.32) -- (4.8,0);
\draw[fill=blue!20] (4.2,0.61) circle[radius=0.15] node[above right=1pt] {$x^{h}$}; 
\draw [->] (0.3,2.64) --(3.8,2.64) to [out=0, in=135] (4.5,2.32) -- (4.8,2);
\draw[fill=blue!20] (4.2,2.61) circle[radius=0.15] node[above right=1pt] {$x^{h}$}; 
\draw [->] (0.8,0.14) --(0.8,2.64) to [out=90, in=-45](0.4,3.5) --(0,3.9);
\draw[fill=blue!20] (0.65,3.2) circle[radius=0.15] node[above right] {$y^{h}$}; 
\draw [->] (3.8,0.14) --(3.8,2.64) to [out=90, in=-45] (3.4,3.5) --(3,3.9);
\draw[fill=blue!20] (3.65,3.2) circle[radius=0.15] node[above right] {$y^{h}$}; {\color{blue} 
\draw [<-] (-0.4,-0.32) -- (1,0.8)[thick];
\draw [<-] (2.6,-0.32) -- (4,0.8)[thick];
\draw [<-] (-0.4,1.68) -- (1,2.8)[thick];
\draw [<-] (2.6,1.68) -- (4,2.8)[thick];
}
{\color{green}
\draw [->] (-0.3,3.5)--(3.4,3.5) to [out=0,in=90] (4.5,2.32)--(4.5,-0.3)[thick];}
\draw[fill=blue!20] (4.17,3.17) circle[radius=0.15] node[above right=1pt] {$(\frac{xw}{yu})^{h}$};
%
\draw (5.8,2) node{$=$};
\end{scope}
\begin{scope}[>=latex,xshift=220pt]
\draw [<-] (3.5,0)--(0,0) to [out=180, in=-45] (-0.7,0.32) --(-1.02,0.64); 
\draw[fill=blue!20] (3.65,0) circle[radius=0.15] node[right=2pt] {$x^{h}$}; 
\draw [<-] (3.5,2)--(0,2) to [out=180, in=-45] (-0.7,2.32) --(-1.02,2.64);
\draw[fill=blue!20] (3.65,2) circle[radius=0.15] node[right=2pt] {$x^{h}$}; 
\draw [<-] (0,2.5)--(0,0) to [out=-90, in=135] (0.4,-0.86)--(0.8,-1.26);
\draw[fill=blue!20] (0,2.65) circle[radius=0.15] node[above=2pt] {$y^{h}$}; 
\draw [<-] (3,2.5)--(3,0) to [out=-90, in=135] (3.4,-0.86)--(3.8,-1.26);
\draw[fill=blue!20] (3,2.65) circle[radius=0.15] node[above=2pt] {$y^{h}$}; 
\draw (1.8,2.25) node {$\cdots$};
\draw (1.8,0.25) node {$\cdots$};
\draw (0.4,1.25) node {$\vdots$};
\draw (3.4,1.25) node {$\vdots$};
\draw [<-] (4.3,0.64)--(0.8,0.64) to [out=180, in=30] (-0.7,0.32) -- (-1.254,0);\draw[fill=blue!20] (4.45,0.64) circle[radius=0.15] node[right=2pt] {$u^{h}$}; 
\draw [<-] (4.3,2.64)--(0.8,2.64) to [out=180, in=30] (-0.7,2.32) -- (-1.254,2);\draw[fill=blue!20] (4.45,2.64) circle[radius=0.15] node[right=2pt] {$u^{h}$}; 
\draw [<-] (0.8,3.14)--(0.8,0.64) to [out=-90, in=60] (0.4,-0.86) --(0,-1.55);
\draw[fill=blue!20] (0.8,3.29) circle[radius=0.15] node[above=2pt] {$w^{h}$}; 
\draw [<-] (3.8,3.14)--(3.8,0.64) to [out=-90, in=60] (3.4,-0.86) --(3,-1.55);
\draw[fill=blue!20] (3.8,3.29) circle[radius=0.15] node[above=2pt] {$w^{h}$}; 
{\color{blue} 
\draw [<-] (-0.4,-0.32) -- (1,0.8)[thick];
\draw [<-] (2.6,-0.32) -- (4,0.8)[thick];
\draw [<-] (-0.4,1.68) -- (1,2.8)[thick];
\draw [<-] (2.6,1.68) -- (4,2.8)[thick];
}
{\color{green}
\draw [->] (-0.7,2.82)--(-0.7,0.32) to [out=-90,in=180] (0.4,-0.86)--(4,-0.86)[thick];
}
\draw[fill=blue!20] (-0.7,2.97) circle[radius=0.15] node[left=1pt] {$(\frac{w}{y})^{h}$};
\draw[fill=blue!20] (4.15,-0.86) circle[radius=0.15] node[right=2pt] {$(\frac{x}{u})^{h}$};
\end{scope}
\end{tikzpicture}
.
\end{align}
Green arrows here are compositions of the operator
$\MM \in \mathrm{End}(V \otimes V) \otimes \mathcal{W}(-q)$ in (\ref{eq:M}) and Figure \ref{fig:LM}.
Employ its representation explained in (\ref{uwp})--(\ref{posc}), which amounts to replacing 
$\MM$ by $M \in \mathrm{End}(V \otimes V \otimes \mathcal{V}_+)$ in (\ref{M}).
Furthermore, take the trace over $\mathcal{V}_+$.
Then the green arrows are closed to form a circle as in Figure \ref{fig:trm} with $N=n+m$, 
and yield $R\bigl(\frac{xw}{yu}\bigr)$ (\ref{Rz}) 
on both sides thanks to the cyclicity of the trace.\footnote{Precisely speaking,  to achieve 
the identification with (\ref{pSTT=TTS}), one needs to reverse the order and shift the indices of 
$(\ref{tred})|_{N=n+m}$ as 
 $i_1,\ldots, i_{m+n} \rightarrow i_n,\ldots, i_2, i_1, i_{m+n},\ldots, i_{n+2},i_{n+1}$ 
 and similarly for $a_k, b_k$ and $j_k$.
 This rearrangement does not influence the assertion of Lemma \ref{le:rinv}.
For simplicity we write the $R$ matrix with this conventional adjustment also as $R(z)$ in the main text.}
In this way, we obtain 
\begin{align}\label{STT=TTS}
R\Bigl(\frac{xw}{yu}\Bigr)\mathcal{T}(u,w) \mathcal{T}(x,y)
= \mathcal{T}(x,y) \mathcal{T}(u,w) R\Bigl(\frac{xw}{yu}\Bigr),
\end{align}
which is depicted as follows:
\begin{align}\label{RTT=TTR}
\begin{tikzpicture}
\begin{scope}[>=latex,xshift=0pt]
\draw [->] (-0.5,0) --(3,0) to [out=0, in=-150] (4.5,0.32) --(5,0.57); 
\draw[fill=blue!20] (3.5,0) circle[radius=0.15] node[below=2pt] {$u^{h}$}; 
\draw [->] (-0.5,2) -- (3,2) to [out=0, in=-150] (4.5,2.32) --(5,2.57);
\draw[fill=blue!20] (3.5,2) circle[radius=0.15] node[below=2pt] {$u^{h}$}; 
\draw [->] (0,-0.5) --(0,2) to [out=90, in=-120](0.4,3.5)--(0.8,4.2);
\draw[fill=blue!20] (0.1,2.9) circle[radius=0.15] node[left=2pt] {$w^{h}$}; 
\draw [->] (3,-0.5) --(3,2) to [out=90, in=-120](3.4,3.5)--(3.8,4.2);
\draw[fill=blue!20] (3.1,2.9) circle[radius=0.15] node[left=2pt] {$w^{h}$}; 
\draw (1.8,2.25) node {$\cdots$};
\draw (1.8,0.25) node {$\cdots$};
\draw (0.4,1.25) node {$\vdots$};
\draw (3.4,1.25) node {$\vdots$};
\draw [->] (0.3,0.64) --(3.8,0.64) to [out=0, in=135] (4.5,0.32) -- (4.8,0);
\draw[fill=blue!20] (4.2,0.61) circle[radius=0.15] node[above right=1pt] {$x^{h}$}; 
\draw [->] (0.3,2.64) --(3.8,2.64) to [out=0, in=135] (4.5,2.32) -- (4.8,2);
\draw[fill=blue!20] (4.2,2.61) circle[radius=0.15] node[above right=1pt] {$x^{h}$}; 
\draw [->] (0.8,0.14) --(0.8,2.64) to [out=90, in=-45](0.4,3.5) --(0,3.9);
\draw[fill=blue!20] (0.65,3.2) circle[radius=0.15] node[above right] {$y^{h}$}; 
\draw [->] (3.8,0.14) --(3.8,2.64) to [out=90, in=-45] (3.4,3.5) --(3,3.9);
\draw[fill=blue!20] (3.65,3.2) circle[radius=0.15] node[above right] {$y^{h}$}; {\color{blue} 
\draw [<-] (-0.4,-0.32) -- (1,0.8)[thick];
\draw [<-] (2.6,-0.32) -- (4,0.8)[thick];
\draw [<-] (-0.4,1.68) -- (1,2.8)[thick];
\draw [<-] (2.6,1.68) -- (4,2.8)[thick];
}
{\color{green}
\draw [-] (-0.2,3.5)--(3.4,3.5) to [out=0,in=90] (4.5,2.32)--(4.5,-0.2) to [out=-90,in=180] (4.8,-0.5) to [out=0,in=-90] (5.1,-0.2)--(5.1,2.5) to [out=90,in=0] (3.5,3.98)--(-0.2,3.98) to [out=180,in=90] (-0.44,3.74) to [out=-90,in=180] (-0.2,3.5)
[thick];
\draw [->] (-0.2,3.5)--(2.0,3.5)[thick];
}
\draw[fill=blue!20] (4.17,3.17) circle[radius=0.15] node[above right=1pt] {$(\frac{xw}{yu})^{h}$};
%
\draw (5.8,2) node{$=$};
\end{scope}
\begin{scope}[>=latex,xshift=220pt]
\draw [<-] (3.5,0)--(0,0) to [out=180, in=-45] (-0.7,0.32) --(-1.02,0.64); 
\draw[fill=blue!20] (3.65,0) circle[radius=0.15] node[right=2pt] {$x^{h}$}; 
\draw [<-] (3.5,2)--(0,2) to [out=180, in=-45] (-0.7,2.32) --(-1.02,2.64);
\draw[fill=blue!20] (3.65,2) circle[radius=0.15] node[right=2pt] {$x^{h}$}; 
\draw [<-] (0,2.5)--(0,0) to [out=-90, in=135] (0.4,-0.86)--(0.8,-1.26);
\draw[fill=blue!20] (0,2.65) circle[radius=0.15] node[above=2pt] {$y^{h}$}; 
\draw [<-] (3,2.5)--(3,0) to [out=-90, in=135] (3.4,-0.86)--(3.8,-1.26);
\draw[fill=blue!20] (3,2.65) circle[radius=0.15] node[above=2pt] {$y^{h}$}; 
\draw (1.8,2.25) node {$\cdots$};
\draw (1.8,0.25) node {$\cdots$};
\draw (0.4,1.25) node {$\vdots$};
\draw (3.4,1.25) node {$\vdots$};
\draw [<-] (4.3,0.64)--(0.8,0.64) to [out=180, in=30] (-0.7,0.32) -- (-1.254,0);\draw[fill=blue!20] (4.45,0.64) circle[radius=0.15] node[right=2pt] {$u^{h}$}; 
\draw [<-] (4.3,2.64)--(0.8,2.64) to [out=180, in=30] (-0.7,2.32) -- (-1.254,2);\draw[fill=blue!20] (4.45,2.64) circle[radius=0.15] node[right=2pt] {$u^{h}$}; 
\draw [<-] (0.8,3.14)--(0.8,0.64) to [out=-90, in=60] (0.4,-0.86) --(0,-1.55);
\draw[fill=blue!20] (0.8,3.29) circle[radius=0.15] node[above=2pt] {$w^{h}$}; 
\draw [<-] (3.8,3.14)--(3.8,0.64) to [out=-90, in=60] (3.4,-0.86) --(3,-1.55);
\draw[fill=blue!20] (3.8,3.29) circle[radius=0.15] node[above=2pt] {$w^{h}$}; 
{\color{blue} 
\draw [<-] (-0.4,-0.32) -- (1,0.8)[thick];
\draw [<-] (2.6,-0.32) -- (4,0.8)[thick];
\draw [<-] (-0.4,1.68) -- (1,2.8)[thick];
\draw [<-] (2.6,1.68) -- (4,2.8)[thick];
}
{\color{green}
\draw [-] (-0.7,2.8)--(-0.7,0.32) to [out=-90,in=180] (0.4,-0.86)--(4,-0.86) to [out=0,in=90] (4.24,-1.1) to [out=-90,in=0] (4,-1.34)--(0.4,-1.34) to [out=180,in=-90] (-1.3,0.32)--(-1.3,2.8) to [out=90,in=180] (-1,3.1) to [out=0,in=90] (-0.7,2.8)
[thick];
\draw [->] (0.4,-0.86)--(2,-0.86)[thick];
}
\draw[fill=blue!20] (4.24,-1.1) circle[radius=0.15] node[right=2pt] {$(\frac{xw}{yu})^{h}$};
\end{scope}
\end{tikzpicture}
.
\end{align}
This is an equality in 
$\mathrm{End}\left((V^{\otimes m} \otimes V^{\otimes n})^{\otimes 2}\right) \otimes \mathcal{W}_{mn}(q)$.
Here $(V^{\otimes m} \otimes V^{\otimes n})^{\otimes 2}$
is the two copies of the auxiliary space for the two monodromy matrices, which correspond to the front 
and the back layers of diagram (\ref{RTT=TTR}).
On the other hand, the $R$ matrix 
$R\bigl(\frac{xw}{yu}\bigr)$ only acts on the auxiliary space 
$(V^{\otimes m} \otimes V^{\otimes n})^{\otimes 2}= V^{\otimes m+n} \otimes V^{\otimes m+n}$.
Now we invoke  Lemma \ref{le:rinv} with $N=m+n$, 
which guarantees that the $R$ matrix is invertible for generic $\frac{xw}{yu}$.
Therefore one may multiply $R\bigl(\frac{xw}{yu}\bigr)^{-1}$ from the right to both sides of (\ref{STT=TTS}) 
and take the trace over the auxiliary space. 
In view of the definition (\ref{T-G}), the result is nothing but $T_G(u,w)T_G(x,y) = T_G(x,y)T_G(u,w)$.
\end{proof}

\begin{remark}\label{re:inho}
The monodromy matrix $\mathcal{T}(x,y)$ admits two types of inhomogeneous generalizations which preserves the commutativity of the transfer matrices as stated in Theorem \ref{thm:TT1}.
\\
(i) Inhomogeneous parameters for $\mathcal{L}$.
In a wiring diagram $G$, we place $\mathcal{L}(r_i,s_i,f_i,g_i;q)$ at vertex $i$ as we did in \S \ref{sec:RLLL}, and define the monodromy matrix $\mathcal{T}_G(x,y)$ with these inhomogeneous parameters. This inhomogeneity does not affect $\mathcal{M}$, nor the (four) tetrahedron equations or the inversion relations.
Hence the commutativity of transfer matrices \eqref{G-TT} still holds.    
\\
(ii) Inhomogeneous spectral parameters. In the top-left diagram in (\ref{T313}), replace the $m$ pairs of parameters $(x,u)$  with
$(x\mu_1, u\mu_1), \ldots, (x\mu_m, u\mu_m)$ from top to bottom, and similarly, replace
the $n$ pairs of $(y,w)$ with $(y\nu_1,w\nu_1), \ldots, (y\nu_n,w\nu_n)$ from left to right,
where $\mu_i, \nu_i$ are arbitrary nonzero parameters.
Then the subsequent argument remains valid, as it only concerns the pairwise ratios, which remain unchanged;
$x\mu_i/u\mu_i = x/u$ and $y\nu_i/w\nu_i= y/w$.
As a result, one obtains the {\em inhomogeneous} generalization 
$T_G(x,y| \mu_1,\ldots, \mu_m, \nu_1,\ldots, \nu_n)$ of $T_G(x,y)$, which still satisfies the 
commutativity:
\begin{align}\label{Tmunu}
[T_G(x,y|\mu_1,\ldots, \mu_m, \nu_1,\ldots, \nu_n),T_G(u,w| \mu_1,\ldots, \mu_m, \nu_1,\ldots, \nu_n)]=0.
\end{align}
This extension naturally serves as a 3D analogue of the well-known ``inhomogeneity in the bulk" 
 and the ``boundary magnetic fields" that can be introduced in 2D commuting transfer matrices.
We stress that (\ref{Tmunu}) implies the commutativity of 
the transfer matrix $T_G(x,y)$ corresponding to an arbitrary {\em fixed} boundary condition.
In fact, the local states at the boundary edges can be frozen to $0$ or $1$  
by setting the corresponding magnetic fields to zero or infinity. 
This fact will be utilized in \S \ref{ss:FP} to reproduce the celebrated free parafermion model.
\end{remark}

\subsection{Admissible case}

Now we consider more general wiring diagrams on a torus, consisting of a finite numbers of directed closed  wires. We assume that in a diagram only two wires intersect at each crossing. See Figure \ref{fig:Gs}.
In a diagram on a torus, an intersection point is called a vertex, 
and an area surrounded by wires is called a face. 
For such a diagram $G$, we define a (quantized) six-vertex model in the same manner as the square grid case, by assigning the operator $\mathcal{L}$ \eqref{L1} to each vertex. 

\begin{figure}[ht]
\[
\begin{tikzpicture}
\begin{scope}[>=latex,xshift=0pt]
\draw [->] (2.5,0)--(2.5,4);
\draw [->] (1.5,0)--(1.5,1) to [out=90,in=0](0,2);
\draw [->] (5,2)--(3.5,2) to [out=180,in=-40](2.5,2.5) to [out=140,in=-90](1.5,4);
\draw [->] (3.5,0) to [out=90,in=180](5,1);
\draw [->] (0,1)--(1.5,1) to [out=0,in=-90](3.5,2)--(3.5,4);
{\color{red}
\draw [-] (0,4)--(5,4);
\draw [-] (0,0)--(5,0);
\draw [-] (0,0)--(0,4);
\draw [-] (5,0)--(5,4);
}
\end{scope}
\begin{scope}[>=latex,xshift=190pt]
\draw [<-] (2.5,0)--(2.5,4);
\draw [->] (1.5,0)--(1.5,1) to [out=90,in=0](0,2);
\draw [->] (5,2)--(3.5,2) to [out=180,in=-40](2.5,2.5) to [out=140,in=-90](1.5,4);
\draw [->] (3.5,0) to [out=90,in=180](5,1);
\draw [->] (0,1)--(1.5,1) to [out=0,in=-90](3.5,2)--(3.5,4);
{\color{red}
\draw [-] (0,4)--(5,4);
\draw [-] (0,0)--(5,0);
\draw [-] (0,0)--(0,4);
\draw [-] (5,0)--(5,4);
}
\end{scope}
\end{tikzpicture}
\]
\caption{Wiring diagrams on a torus. The left one is admissible, and the right one is not. 
The only difference is the orientation of the middle vertical wire.}
\label{fig:Gs}
\end{figure}
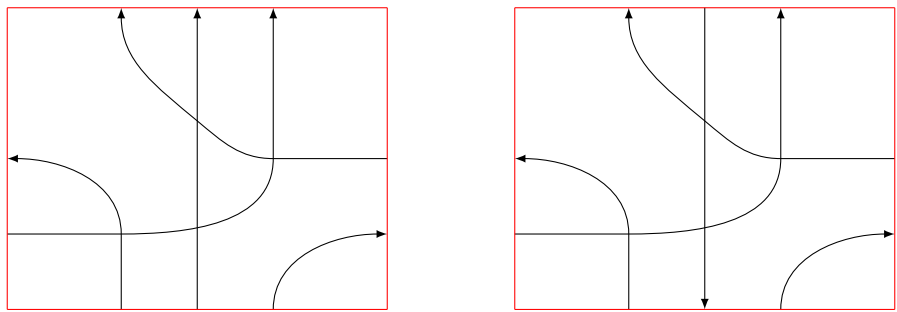

Fix a fundamental domain of the torus, so that its boundary (a red rectangle) 
does not overlap the vertices of $G$.
As in (\ref{grw}), we introduce a green arrow going from the NW to the SE via the NE corner,
and attempt to move it to the one passing through the SW corner. 
The initial and final arrows will be referred to as the NE arrow and the SW arrow, respectively.
In a similar manner to the ordinary case, this is achieved by applying the tetrahedron equation \eqref{LM-o}--\eqref{LM-t} and the inversion relations \eqref{M-I}--\eqref{M-I'}.  
We classify the operations (o), (h), (v), (t), (I), (iI), (I') and (iI') into three types as
\begin{align}\label{types}
\text{type A: (v), (I), (iI)}, \qquad
\text{type B: (h), (I'),(iI')}, \qquad
\text{type C: (o), (t)}.
\end{align}
These groupings are introduced according to the types of constrains among the parameters
in Proposition \ref{pr:te4} and Lemma \ref{le:inv}.
Types A and B correspond to the distinct  constraints $(r', g') =(s',q^{-1}f')$ and 
$(r', g') =(s',qf')$, respectively, whereas type C imposes no condition.
Evidently,  the parameters $(r',s',f',g')$ for $\mathcal{M}$ must be chosen 
in a specific manner when sending the NE green arrow to the NE 
by applying the tetrahedron equations and inversion relations.
This observation motivates the following definition.

\begin{definition}\label{def:adm}
A wiring diagram is said to be {\it admissible} if the initial NE arrow 
can be transformed into the final SW arrow by a sequence of the eight operations (\ref{types}), 
in such a way that operations of types A and B do not coexist.
\end{definition}

For example, in Figure \ref{fig:Gs}, the left diagram is admissible, but the right one is not. 
In Figure \ref{fig:ad-G}, we show a sequence of operations for the left diagram: (h)-(h)-(o)(I')-(iI')-(o). For the right diagram, the sequence is modified to (h)-(t)-(v)(I')-(iI')-(o), where operations of types A and B coexist.

\begin{figure}[H]
\[
\begin{tikzpicture}[scale=0.8, font=\small]
\begin{scope}[>=latex,xshift=0pt]
\draw [->] (2.5,0)--(2.5,4);
\draw [->] (1.5,0)--(1.5,1) to [out=90,in=0](0,2);
\draw [->] (5,2)--(3.5,2) to [out=180,in=-40](2.5,2.5) to [out=140,in=-90](1.5,4);
\draw [->] (3.5,0) to [out=90,in=180](5,1);
\draw [->] (0,1)--(1.5,1) to [out=0,in=-90](3.5,2)--(3.5,4);
{\color{red}
\draw [-] (0,4)--(5,4);
\draw [-] (0,0)--(5,0);
\draw [-] (0,0)--(0,4);
\draw [-] (5,0)--(5,4);
}
{\color{green}
\draw [->] (0,4.2)--(5,4.2) to [out=0,in=90] (5.2,4)--(5.2,0)[thick];
}
\draw (2,4.5) node[right=-24pt] {\small the initial NE arrow};
\end{scope}
\begin{scope}[>=latex,xshift=190pt]
\draw [->] (2.5,0)--(2.5,4);
\draw [->] (1.5,0)--(1.5,1) to [out=90,in=0](0,2);
\draw [->] (5,2)--(3.5,2) to [out=180,in=-40](2.5,2.5) to [out=140,in=-90](1.5,4);
\draw [->] (3.5,0) to [out=90,in=180](5,1);
\draw [->] (0,1)--(1.5,1) to [out=0,in=-90](3.5,2)--(3.5,4);
{\color{red}
\draw [-] (0,4)--(5,4);
\draw [-] (0,0)--(5,0);
\draw [-] (0,0)--(0,4);
\draw [-] (5,0)--(5,4);
}
{\color{green}
\draw [->] (0,4.2)--(1,4.2) to [out=0,in=180] (1.5,3)--(3.5,3) to [out=0,in=180] (5,-0.2)[thick];
\draw[fill] (3.5,2) circle[radius=0.1];
}
\draw (-0.5,2) node[left] {$=$};
\draw (2.3,-0.6) node[right] {$\downarrow$ ~(h)};
\end{scope}
\begin{scope}[>=latex,xshift=190pt, yshift=-150]
\draw [->] (2.5,0)--(2.5,4);
\draw [->] (1.5,0)--(1.5,1) to [out=90,in=0](0,2);
\draw [->] (5,2)--(3.5,2) to [out=180,in=-40](2.5,2.5) to [out=140,in=-90](1.5,4);
\draw [->] (3.5,0) to [out=90,in=180](5,1);
\draw [->] (0,1)--(1.5,1) to [out=0,in=-90](3.5,2)--(3.5,4);
{\color{red}
\draw [-] (0,4)--(5,4);
\draw [-] (0,0)--(5,0);
\draw [-] (0,0)--(0,4);
\draw [-] (5,0)--(5,4);
}
{\color{green}
\draw [->] (0,4.2)--(1,4.2) to [out=0,in=180] (1.5,3)--(2.5,3) to [out=0,in=90] (3,1) to [out=-90,in=180] (3.5,0.5) to [out=0,in=180] (5,-0.2)[thick];
\draw[fill] (2.5,2.5) circle[radius=0.1];
}
\draw (-0.5,2) node[left] {$\leftarrow$};
\draw (-0.4,2.4) node[left] {(h)};

\end{scope}
\begin{scope}[>=latex,yshift=-150]
\draw [->] (2.5,0)--(2.5,4);
\draw [->] (1.5,0)--(1.5,1) to [out=90,in=0](0,2);
\draw [->] (5,2)--(3.5,2) to [out=180,in=-40](2.5,2.5) to [out=140,in=-90](1.5,4);
\draw [->] (3.5,0) to [out=90,in=180](5,1);
\draw [->] (0,1)--(1.5,1) to [out=0,in=-90](3.5,2)--(3.5,4);
{\color{red}
\draw [-] (0,4)--(5,4);
\draw [-] (0,0)--(5,0);
\draw [-] (0,0)--(0,4);
\draw [-] (5,0)--(5,4);
}
{\color{green}
\draw [->] (0,4.2)--(1,4.2) to [out=0,in=100] (1.4,3.5) to [out=-80,in=180] (2,3.2) to [out=0,in=100] (2.2,2) to [out=-80,in=100] (4,1) to [out=-80,in=180] (5,-0.2)[thick];
\draw[fill] (2.5,1.1) circle[radius=0.1];
\draw (2.1,3) node {$\ast$};
}
\draw (2.3,-0.6) node[right] {$\downarrow$ ~(o),~(I')};
\end{scope}
\begin{scope}[>=latex,yshift=-300]
\draw [->] (2.5,0)--(2.5,4);
\draw [->] (1.5,0)--(1.5,1) to [out=90,in=0](0,2);
\draw [->] (5,2)--(3.5,2) to [out=180,in=-40](2.5,2.5) to [out=140,in=-90](1.5,4);
\draw [->] (3.5,0) to [out=90,in=180](5,1);
\draw [->] (0,1)--(1.5,1) to [out=0,in=-90](3.5,2)--(3.5,4);
{\color{red}
\draw [-] (0,4)--(5,4);
\draw [-] (0,0)--(5,0);
\draw [-] (0,0)--(0,4);
\draw [-] (5,0)--(5,4);
}
{\color{green}
\draw [->] (0,4.2)--(0.5,4.2) to [out=0,in=100] (1,2.5) to [out=-80,in=100] (2,1) to [out=-80,in=180] (4,0.5) to [out=0,in=180] (5,-0.2)[thick];
}
\draw (5.5,2) node[right] {$\rightarrow$};
\draw (5.4,2.4) node[right] {(iI')};
\end{scope}
\begin{scope}[>=latex,xshift=190,yshift=-300]
\draw [->] (2.5,0)--(2.5,4);
\draw [->] (1.5,0)--(1.5,1) to [out=90,in=0](0,2);
\draw [->] (5,2)--(3.5,2) to [out=180,in=-40](2.5,2.5) to [out=140,in=-90](1.5,4);
\draw [->] (3.5,0) to [out=90,in=180](5,1);
\draw [->] (0,1)--(1.5,1) to [out=0,in=-90](3.5,2)--(3.5,4);
{\color{red}
\draw [-] (0,4)--(5,4);
\draw [-] (0,0)--(5,0);
\draw [-] (0,0)--(0,4);
\draw [-] (5,0)--(5,4);
}
{\color{green}
\draw [->] (0,4.2)--(0.3,4.2) to [out=0,in=100] (1,1.5) to [out=-80,in=100] (2,1.1) to [out=-80,in=180] (4,0.5) to [out=0,in=180] (5,-0.2)[thick];
\draw[fill] (1.5,1) circle[radius=0.1];
\draw (1.15,1.5) node {$\ast$};
}
\draw (2.3,-0.6) node[right] {$\downarrow$ ~(o)};
\end{scope}
\begin{scope}[>=latex,xshift=190,yshift=-450]
\draw [->] (2.5,0)--(2.5,4);
\draw [->] (1.5,0)--(1.5,1) to [out=90,in=0](0,2);
\draw [->] (5,2)--(3.5,2) to [out=180,in=-40](2.5,2.5) to [out=140,in=-90](1.5,4);
\draw [->] (3.5,0) to [out=90,in=180](5,1);
\draw [->] (0,1)--(1.5,1) to [out=0,in=-90](3.5,2)--(3.5,4);
{\color{red}
\draw [-] (0,4)--(5,4);
\draw [-] (0,0)--(5,0);
\draw [-] (0,0)--(0,4);
\draw [-] (5,0)--(5,4);
}
{\color{green}
\draw [->] (0,4.2)--(0.2,4.2) to [out=0,in=100] (1.2,0.7) to [out=-80,in=180] (1.5,0.4) -- (4,0.4) to [out=0,in=180] (5,-0.2)[thick];
}
\draw (-0.5,2) node[left] {$=$};
\end{scope}
\begin{scope}[>=latex,yshift=-450pt]
\draw [->] (2.5,0)--(2.5,4);
\draw [->] (1.5,0)--(1.5,1) to [out=90,in=0](0,2);
\draw [->] (5,2)--(3.5,2) to [out=180,in=-40](2.5,2.5) to [out=140,in=-90](1.5,4);
\draw [->] (3.5,0) to [out=90,in=180](5,1);
\draw [->] (0,1)--(1.5,1) to [out=0,in=-90](3.5,2)--(3.5,4);
{\color{red}
\draw [-] (0,4)--(5,4);
\draw [-] (0,0)--(5,0);
\draw [-] (0,0)--(0,4);
\draw [-] (5,0)--(5,4);
}
{\color{green}
\draw [->] (-0.2,4)--(-0.2,0) to [out=-90,in=180] (0,-0.2)--(5,-0.2)[thick];
}
\draw (-0.1,-0.5) node[right] {\small the final SW arrow};
\end{scope}
\end{tikzpicture}
\]
\caption{An admissible transformation. The vertices affected in the next step are shown in green. The region marked with $*$ denotes where the inversion relation is applied.}
\label{fig:ad-G}
\end{figure}
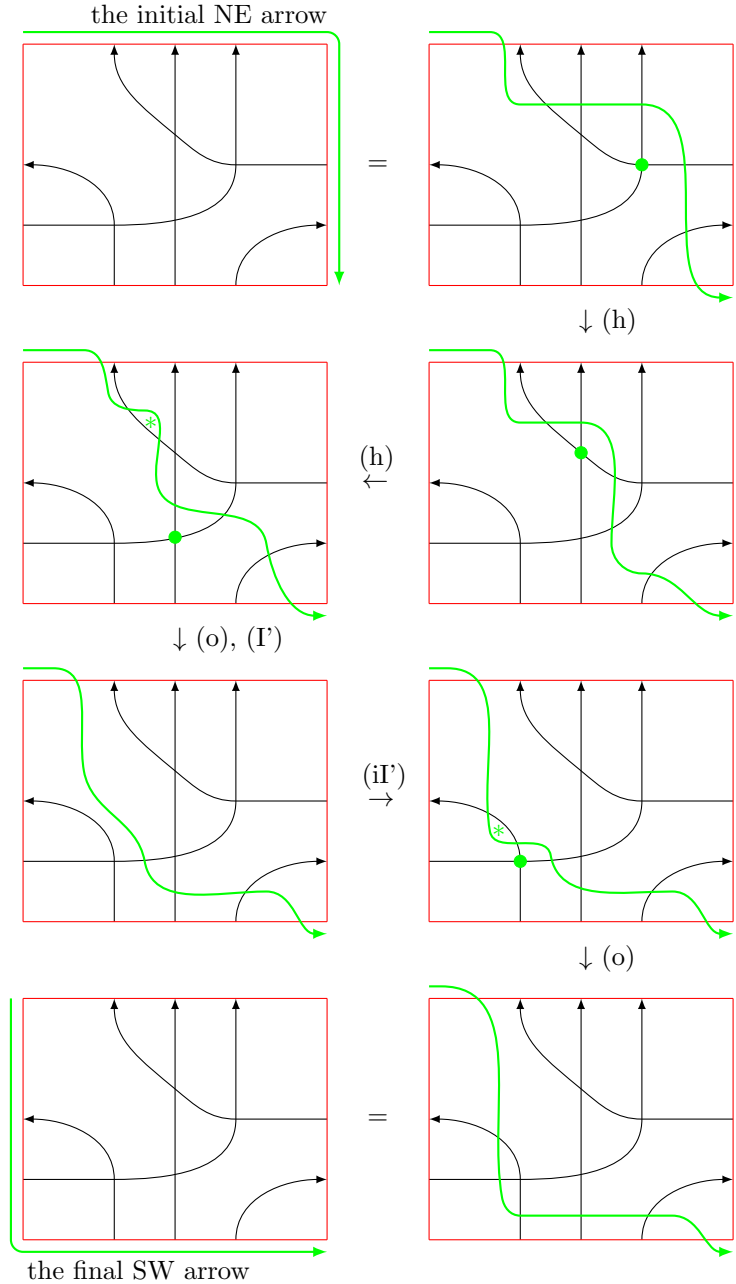

Now we describe the main results. Let $G$ be a diagram on a torus as described above, with $K$ vertices. Assign the operator $\mathcal{L}$ \eqref{L1} to each vertex in $G$, and fix a fundamental domain of the torus. The monodromy matrix $\mathcal{T}_G(x,y)$ is defined schematically by the following picture:
\begin{align}\label{pT_G}
\begin{tikzpicture}
\begin{scope}[>=latex,xshift=0pt]
{\color{red}
\draw[-] (-0,-0)--(-0,3)--(4,3)--(4,0)--(0,0);
}
\draw [-] (-0.5,0.3)--(0,0.3);  \draw [->] (4,0.3)--(4.5,0.3);
\draw[fill=blue!20] (4.65,0.3) circle[radius=0.15] node[right=2pt] {$x^{h}$}; 
\draw [<-] (-0.5,1.7)--(0,1.7); \draw [-] (4,1.7)--(4.5,1.7); 
\draw[fill=blue!20] (4.65,1.7) circle[radius=0.15] node[right=2pt] {$x^{-h}$}; 
\draw [-] (-0.5,2.7)--(0,2.7);  \draw [->] (4,2.7)--(4.5,2.7);
\draw[fill=blue!20] (4.65,2.7) circle[radius=0.15] node[right=2pt] {$x^{h}$}; 
\draw [-] (0.5,-0.5)--(0.5,0);  \draw [->] (0.5,3)--(0.5,3.5); 
\draw[fill=blue!20] (0.5,3.65) circle[radius=0.15] node[above=2pt] {$y^{h}$}; 
\draw [<-] (1.5,-0.5)--(1.5,0);  \draw [-] (1.5,3)--(1.5,3.5); 
\draw[fill=blue!20] (1.5,3.65) circle[radius=0.15] node[above=2pt] {$y^{-h}$}; 
\draw [<-] (3.5,-0.5)--(3.5,0);  \draw [-] (3.5,3)--(3.5,3.5); 
\draw[fill=blue!20] (3.5,3.65) circle[radius=0.15] node[above=2pt] {$y^{-h}$}; 
\draw (2.5,3.15) node {$\cdots$};
\draw (2.5,-0.15) node {$\cdots$};
\draw (-0.15,1.15) node {$\vdots$};
\draw (4.15,1.15) node {$\vdots$};
\draw [->] (2.5,1) -- (3.5,1); \draw [->] (3,0.5) -- (3,1.5);
{\color{blue} 
\draw [<-] (2.6,0.7) -- (3.3,1.27)[thick];
}
\draw [<-] (0.7,2) -- (1.7,2); \draw [->] (1.2,1.5) -- (1.2,2.5);
{\color{blue} 
\draw [<-] (0.8,1.7) -- (1.5,2.27)[thick];
}
\end{scope}
\end{tikzpicture}
\ .
\end{align}
Here we attach each outgoing (resp. incoming) arrow at the east boundary with $x^h$ (resp. $x^{-h}$), and attach each outgoing (resp. incoming) arrow at the north boundary with $y^h$ (resp. $y^{-h}$).
We assume that the total number of arrows on the north boundary (resp. east boundary) is $n$ (resp. $m$), and write $N$ for $m+n$.
For each arrow on the east boundary we define their sign by 
$$
  s_x(k) = 
\begin{cases}
  1 & \text{if the $k$-th arrow is outgoing},
  \\
  -1 & \text{if the $k$-th arrow is incoming}.
\end{cases} 
$$
for $k=1,\ldots,m$, 
and for arrows on the north boundary define $s_y(k)$ for $k=1,\ldots,n$ in the same manner.
Under this setting, we have 
$\mathcal{T}_G(x,y) \in \mathrm{End}(V^{\otimes N}) \otimes \mathcal{W}_K(q)$.

As in the ordinary case, 
we define the transfer matrix $T_G(x,y)$ as $\Tr_{V^{\otimes N}} \mathcal{T}_G(x,y)$:
\begin{align}\label{T-Gg}
T_G(x,y) &= \mathrm{Tr}_{V^{\otimes N}}\left(\mathcal{T}_G(x,y)\right)
= \sum_{\mathbf{i} \in \{0,1\}^m, \,\mathbf{j} \in \{0,1\}^n} 
T^{\mathbf{i},\mathbf{j}}_{\mathbf{i}, \mathbf{j}} \,
x^{||\mathbf{i}||} y^{||\mathbf{j}||} \in \mathcal{W}_{K}(q),
\end{align}
where we set $||\mathbf{i}|| := \sum_{k=1,\ldots,m} s_x(k) i_k$ 
and $||\mathbf{j}|| := \sum_{k=1,\ldots,n} s_y(k) j_k$.
The following theorem is proved in the same manner as Theorem \ref{thm:TT1}.
 
\begin{thm}\label{thm:TT2}
If a wiring diagram $G$ on a torus with a fixed fundamental domain is admissible, then for any $x,y,u,w \in \C$, it holds that 
\begin{align}
[T_G(x,y), ~T_G(u,w)] = 0.
\end{align}
\end{thm}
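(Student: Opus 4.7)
The plan is to adapt the proof of Theorem \ref{thm:TT1} to the admissible setting, replacing the single move (o) used there by the prescribed admissible sequence of the eight operations in (\ref{types}). First I would stack two copies of $G$, a front layer with spectral parameters $(x,y)$ and a back layer with $(u,w)$, and insert along the NE corner of the fundamental domain a green $\MM$-arrow playing the role of the one in (\ref{grw}). By Definition~\ref{def:adm}, there is a sequence of moves drawn from the tetrahedron equations \eqref{LM-o}--\eqref{LM-t} and the inversion relations \eqref{M-I}, \eqref{M-I'} (and their inverses) that transports this initial NE arrow to the final SW arrow. Because the sequence uses only type~C moves together with at most one of types~A or~B, a single consistent specialization of $(r',s',f',g')$ in $\MM$ is available, so every step in the sequence is a genuine operator identity on the two-layer diagram.

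Second, I would attach the boundary weights as in (\ref{pT_G}): $x^{\pm h}$ on east edges and $y^{\pm h}$ on north edges of the front layer (with signs $s_x(k), s_y(k)$), and similarly $u^{\pm h}, w^{\pm h}$ for the back layer. Using the conservation law of Lemma~\ref{lem:L-h}, equivalently (\ref{fig:Lh}), each weight can be slid across an $\LL$-vertex the green arrow passes through, at the cost of transferring a factor along the green arrow. Summed over the route taken by the arrow around the torus, these contributions telescope to a single insertion $(xw/yu)^{h}$ on the closed green loop, exactly as in the passage from (\ref{T313}) to (\ref{pSTT=TTS}). The signs $s_x, s_y$ coming from reversed wire orientations in the general admissible $G$ are absorbed into the spectral-weight powers $\|\mathbf i\|, \|\mathbf j\|$ of (\ref{T-Gg}).

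Third, I would close the green loop by invoking the oscillator representation (\ref{posc}) of $\MM$ on $\mathcal{V}_+$ and taking the trace. By the cyclicity of $\Tr_{\mathcal{V}_+}$, this yields the generating $R$-matrix $R(xw/yu)$ of (\ref{Rz}) on the auxiliary space $V^{\otimes N}\otimes V^{\otimes N}$, and the whole identity reads
\begin{align*}
R\!\left(\tfrac{xw}{yu}\right)\mathcal{T}_G(u,w)\,\mathcal{T}_G(x,y)
= \mathcal{T}_G(x,y)\,\mathcal{T}_G(u,w)\,R\!\left(\tfrac{xw}{yu}\right),
\end{align*}
in $\mathrm{End}(V^{\otimes N}\otimes V^{\otimes N})\otimes \mathcal{W}_K(q)$. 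Lemma~\ref{le:rinv} then supplies invertibility of $R$ for generic spectral parameters, so multiplying by $R(xw/yu)^{-1}$ on the right and tracing over $V^{\otimes N}$ collapses the identity to $[T_G(x,y),T_G(u,w)]=0$.

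The main obstacle I expect is combinatorial rather than algebraic: one must check that an admissible move sequence, now potentially involving (h), (v), (t), (I), (iI), (I'), (iI') in addition to (o), really does transport the NE arrow to the SW arrow as an identity on the full two-layer diagram, and that the boundary weights can still be collected coherently into the single factor $(xw/yu)^{h}$ regardless of the intermediate route. This amounts to a case-by-case check that each of the eight two-dimensional projection moves in (\ref{2D-LM}) and (\ref{2D-I})--(\ref{2D-iI}) respects the attachment of $x^{\pm h}, y^{\pm h}, u^{\pm h}, w^{\pm h}$ via (\ref{fig:Lh}), together with the observation that the type-A versus type-B dichotomy in (\ref{types}) is exactly what prevents an obstruction to choosing $(r',s',f',g')$ globally. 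The remaining step—cyclic rearrangement of indices needed to identify the trace with $R(z)$—is handled as in the footnote of the square-grid case.
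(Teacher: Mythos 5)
Your overall strategy is the same as the paper's: repeat the argument of Theorem \ref{thm:TT1} with the single move (o) replaced by the admissible sequence of the eight operations, use the type A/B dichotomy to fix the parameters of $\MM$ consistently, slide the boundary weights into a single $(xw/yu)^h$ insertion on the green loop, close the loop by the trace over $\mathcal{V}_+$, and cancel the resulting $R$-matrix by Lemma \ref{le:rinv}. Two points, however, are asserted rather than established, and the first is a genuine gap.

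First, in a general admissible diagram some wires are \emph{incoming} at the east or north boundary, so at the corresponding vertices of the NE green arrow the operator that appears is not $M$ of \eqref{M} but its transpose $\overline{M}$ with respect to $V\otimes V$. Consequently the closed green loop does \emph{not} produce the matrix $R(xw/yu)$ of \eqref{Rz}, and Lemma \ref{le:rinv} does not apply directly, contrary to your third step. The paper fills this by writing $\overline{M}=\sigma\cdot(1\otimes A)\cdot M\cdot(A^{-1}\otimes 1)\cdot\sigma$ with $\sigma(v_i\otimes v_j)=v_{1-i}\otimes v_{1-j}$ and $A(v_i)=(q\alpha^2)^i v_i$, so that $\mathrm{Tr}_{\mathcal{V}_+}(z^h M_1\cdots \overline{M}_k\cdots M_N)=(\sigma\cdot 1\otimes A)_k\, R(z)\,(A^{-1}\otimes 1\cdot\sigma)_k$ is a conjugate of $R(z)$ and hence still invertible; some such argument is indispensable. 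Second, the "single consistent specialization" of $(r',s',f',g')$ must be compatible with the representation $M=M(\alpha)$ of \eqref{Malpha} used to close the loop: since $M(\alpha)$ has $r'=s'=1$, $f'=\alpha$, $g'=q^{-1}\alpha^{-1}$, the type B constraint $g'=qf'$ forces $\alpha=\pm q^{-1}$ and the type A constraint $g'=q^{-1}f'$ forces $\alpha=\pm 1$; this determination of $\alpha$ is what actually makes every step of the admissible sequence an operator identity after the trace is taken, and it is missing from your write-up. Your remaining steps (the modified form $x^{-h_1}y^{-h_2}\cdot\LL\cdot(y/x)^{h_3}=(y/x)^{h_3}\cdot\LL\cdot x^{-h_1}y^{-h_2}$ of Lemma \ref{lem:L-h} for reversed boundary edges, and the cyclic rearrangement of indices) match the paper.
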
 

\begin{proof}
We just present some points different from the proof of Theorem \ref{thm:TT1}.
For an admissible $G$, an equality of diagrams similar to \eqref{pSTT=TTS} holds, 
where we set $r'=s'$ and $g' = qf'$ (resp. $r'=s'$ and $g' = q^{-1}f'$) in 
$\mathcal{M}$ \eqref{eq:M}, when we apply only the type B and C moves (resp. type A and C moves) to $G$. 
When the operators $x^{-h}$ or $y^{-h}$ are attached to arrows, Lemma \ref{lem:L-h} is applied in a form as 
$$
  x^{-h_1} y^{-h_2} \cdot \mathcal{L}\cdot (y/x)^{h_3} = (y/x)^{h_3} \cdot \mathcal{L} \cdot x^{-h_1} y^{-h_2}.
$$

For a vertex on the NE arrow, which has incoming arrow at the east or north boundary,  
we use the transpose $\overline{M}$ of $M$ \eqref{M} with respect to $V \otimes V$ instead of $M$. 
The operator $\overline{M}$ is written as $\sigma \cdot (1 \otimes A) \cdot M \cdot (A^{-1} \otimes 1) \cdot \sigma$, where $\sigma$ acts on $V\otimes V$ as $\sigma (v_i \otimes v_j) = v_{1-i} \otimes v_{1-j}$, and $A$ acts on $V$ as $A(v_i) = (q \alpha^2)^i \,v_i$. As both $\sigma$ and $A$ are invertible, the trace of the NE arrow gives an invertible operator: if the $k$-th vertex has $\overline{M}$, for example, we get
$$
\mathrm{Tr}_{\mathcal{V}_+}
\left(z^h M_1 \cdots \overline{M}_k \cdots M_N \right) = (\sigma \cdot 1 \otimes A)_k \cdot R(z) \cdot (A^{-1} \otimes 1 \cdot \sigma)_k,
$$
which is invertible. 
If we used only the type B and C moves (resp. type A and C moves) to $G$,
we set $\alpha=\pm q^{-1}$ (resp. $\alpha= \pm 1)$ for $M$ and $\overline{M}$. 
\end{proof} 

\begin{remark}\label{rmk:inho2}
The transfer matrix $T_G(x,y)$ for an admissible $G$ also allows the two types of inhomogeneous generalization as described in Remark \ref{re:inho},
which preserves the commutativity.
\end{remark}

The transfer matrix $T_G(x,y)$ has the following symmetries.

\begin{prop}\label{prop:inv}
\begin{itemize}
\item[(1)] 
The transfer matrix $T_G(x,y)$ is invariant under translation of the fundamental domain.
\item[(2)]
The $SL(2,\Z)$-action on the torus 
induces the following transformation of $T_G(x,y)$,
$$
  T_G(x,y) = \sum_{\mathbf{i},\,\mathbf{j}} 
T^{\mathbf{i},\mathbf{j}}_{\mathbf{i}, \mathbf{j}} x^{||\mathbf{i}||} y^{||\mathbf{j}||} \stackrel{M}{\longmapsto}
  T_G(x',y') = \sum_{\mathbf{i},\,\mathbf{j}} T^{\mathbf{i},\mathbf{j}}_{\mathbf{i}, \mathbf{j}} {x'}^{n_x} {y'}^{n_y},  
$$
for $S =\begin{pmatrix} a & b \\ c & d \end{pmatrix} \in SL(2,\Z)$. Here $(x',y')=(x^ay^b,x^cy^d)$ is the spectral parameters associated to the resulting fundamental domain, and $(n_x,n_y) := (||\mathbf{i}||,||\mathbf{j}||) S^{-1}.$
\end{itemize}
\end{prop}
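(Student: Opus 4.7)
For part (1), the plan is to reduce the claim to an elementary translation of the fundamental domain past a single vertex $v\in G$; an arbitrary translation is a composition of such moves. At the level of the monodromy, an elementary translation amounts to a cyclic reshuffling of the operator product together with a local repositioning of the $x^{\pm h}$ and $y^{\pm h}$ factors incident to $v$. The local identity \eqref{fig:Lh} coming from Lemma \ref{lem:L-h} performs the repositioning at $v$, and the cyclicity of $\mathrm{Tr}_{V^{\otimes N}}$ along each closed wire absorbs the reshuffling, giving the invariance.

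For part (2), the plan is to combine part (1) with the direct algebraic identity relating the exponents of $(x,y)$ and $(x',y')$. Fix $S = \begin{pmatrix} a & b \\ c & d \end{pmatrix} \in SL(2,\Z)$ and put $x' = x^a y^b$, $y' = x^c y^d$. Any change of fundamental domain induced by $S$ may be realised as a sequence of translations followed by a relabelling of which pair of homology cycles defines the boundary, so by part (1) the operator coefficients $T^{\mathbf{i},\mathbf{j}}_{\mathbf{i},\mathbf{j}} \in \mathcal{W}_K(q)$ are intrinsic to $G$ and do not change; only the exponents of the spectral parameters have to be relabelled. With $(n_x,n_y) := (||\mathbf{i}||,||\mathbf{j}||)\, S^{-1}$, so that $(n_x,n_y)\, S = (||\mathbf{i}||,||\mathbf{j}||)$, the direct computation
\begin{align*}
(x')^{n_x} (y')^{n_y} = x^{a n_x + c n_y}\, y^{b n_x + d n_y} = x^{||\mathbf{i}||}\, y^{||\mathbf{j}||}
\end{align*}
then identifies both expansions as the same element of $\mathcal{W}_K(q)[x^{\pm 1}, y^{\pm 1}]$.

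The main obstacle will be the local repositioning step in part (1), since at the mixing components $\mathcal{L}^{10}_{01}$ and $\mathcal{L}^{01}_{10}$ the boundary-state sums $||\mathbf{i}||$ and $||\mathbf{j}||$ are not pointwise preserved by the operator product along each wire. The resolution is that the conserved quantity at $v$ is not simply $x^{h_1} y^{h_2}$ but $x^{h_1} y^{h_2} (y/x)^{h_3}$, as stated in Lemma \ref{lem:L-h}; so the local repositioning of boundary factors is accompanied by a compensating action on the Weyl-algebra ``quantum'' direction at $v$. When one sums over all vertices and takes the trace, these compensations close up along each closed wire and disappear from the answer, which is exactly the mechanism that already featured in the proof of Theorem \ref{thm:TT1}.
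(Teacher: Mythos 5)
Your part (2) is essentially the algebraic half of the paper's argument: the identity $(x')^{n_x}(y')^{n_y}=x^{||\mathbf{i}||}y^{||\mathbf{j}||}$ is exactly the closing remark of the paper's proof. But the paper also does the substantive half there: it computes, by a conservation-of-crossings argument on the new parallelogram, that the signed crossing numbers of each closed loop with the new boundaries are $(k_x,k_y)S^{-1}$; you instead posit the exponents and check that the monomials match, and you justify the invariance of the coefficients by asserting that the $S$-change of domain ``may be realised as a sequence of translations,'' which is not literally true (a sheared parallelogram is not a translate of the rectangle). This is repairable --- the coefficients $T^{\mathbf{i},\mathbf{j}}_{\mathbf{i},\mathbf{j}}$ are intrinsic because they are sums over configurations of products of local weights, with no reference to a domain --- but as written the origin of $(n_x,n_y)=(||\mathbf{i}||,||\mathbf{j}||)S^{-1}$ is assumed rather than derived.

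The genuine gap is in part (1). The paper's proof is combinatorial and homological: by Figure \ref{fig:6v}, each configuration contributing to $T^{\mathbf{i},\mathbf{j}}_{\mathbf{i},\mathbf{j}}$ decomposes into oriented closed loops of local state $1$, and $||\mathbf{i}||$, $||\mathbf{j}||$ are the signed intersection numbers of these loops with the boundary cycles, hence homology invariants unchanged by translation. Your operator-algebraic route via Lemma \ref{lem:L-h} and cyclicity of $\mathrm{Tr}_{V^{\otimes N}}$ correctly identifies the obstacle but does not resolve it. Moving $x^{h_1}$ past an $\LL$ produces, besides a telescoping chain of $y^{\pm h_2}$ along the vertical wire (which does cancel around the closed wire), a conjugation by $(y/x)^{h_3}$ of the Weyl-algebra component at every vertex of the column being crossed. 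Those components are \emph{not} traced over --- the trace in \eqref{T-Gg} is only over the auxiliary space $V^{\otimes N}$ --- so ``taking the trace'' cannot remove them, and the appeal to the mechanism of Theorem \ref{thm:TT1} is inapt: there the $(y/x)^h$ factors accumulate on the oscillator space $\mathcal{V}_+$, which \emph{is} traced over to produce $R(xw/yu)$. Concretely, the residual conjugation multiplies the contribution of each configuration by $(y/x)^{|\mathbf{i}|-|\mathbf{i}'|}$, with $\mathbf{i}$, $\mathbf{i}'$ the horizontal states on the old and new boundaries; its triviality is equivalent to $|\mathbf{i}|=|\mathbf{i}'|$, which is precisely the flux/homology invariance that the paper's proof establishes directly. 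So your argument either stops one step short or, once completed, collapses onto the paper's; for general admissible $G$, where wires are curved and cross the boundary several times with signs $s_x$, $s_y$, the configuration argument is also the more robust one.
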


\begin{proof}
(1) For $\mathbf{i}$ and $\mathbf{j}$ such that $\mathbf{i} \neq 0$ or $\mathbf{j} \neq 0$, take a configuration $W$ on $G$, which is contributing to $T^{\mathbf{i},\mathbf{j}}_{\mathbf{i}, \mathbf{j}}$. By Figure \ref{fig:6v}, $W$ consists of some oriented closed paths of weight $1$. Take one of them and write $p$ for this. Assume it respectively crosses the north and east boundary $k_y$ and $k_x$ times. Here we count the number of crossings with sign: $+1$ and $-1$ when the path is outgoing and incoming respectively. These numbers $k_x$ and $k_y$ are invariant to translation of the fundamental domain. When $\mathbf{i} = 0$ and $\mathbf{j} = 0$, it is trivial that $||\mathbf{i}||=||\mathbf{j}||=n_x=n_y=0$.
Consequently, the claim follows.
\\
(2) By $S = \begin{pmatrix} a & b \\ c & d \end{pmatrix} \in SL(2,\Z)$, the original fundamental domain spanned by $\vec{A} := (1,0)$ and $\vec{B} := (0,1)$ is transformed to that spanned by $\vec{A'}:=(1,0) S$ and $\vec{B'}:=(0,1) S$. Hence the spectral parameters are related by $(x',y') = (x^a y^b, x^c y^d)$. We compute how many times the oriented closed path $p$ (taken in (1)) crosses the new boundaries using the following pictures.

\begin{align*}
\begin{tikzpicture}
\begin{scope}[>=latex,xshift=0pt,yshift=30pt]
{\color{red}
\draw [->] (0,0)--(1,0) node[below left]{$\vec{A}$};
\draw [-] (0,1)--(1,1);
\draw [->] (0,0)--(0,1) node[below left]{$\vec{B}$};
\draw [-] (1,0)--(1,1);
}
\draw [->] (0.5,0.8)  --(0.5,1.3) node [above]{$k_y$};
\draw [->] (0.8,0.5)  --(1.3,0.5) node [right]{$k_x$};
\draw (2,0.5) node[right]{$\stackrel{S}{\longmapsto}$};
\end{scope}
\begin{scope}[>=latex,xshift=100pt]
{\color{red}
\draw [->] (0,0)--(3,2) node[below ]{$\vec{A'}$};
\draw [-] (1,1)--(4,3);
\draw [->] (0,0)--(1,1) node[left]{$\vec{B'}$};
\draw [-] (3,2)--(4,3);
}
\draw [->] (2.4,1.8)  --(2.6,2.4) node [above]{$k_y'$};
\draw [->] (3.3,2.4)  --(3.9,2.6) node [right]{$k_x'$};
\draw [-] (2,0)--(5,2)-- (5,0)--(2,0);
\draw [->] (3.4,0.7)  --(3.6,1.3) node [above]{$k_y'$};
\draw [->] (3.5,-0.3)node [below]{$a k_y$} --(3.5,0.3);
\draw [->] (4.7,1)  --(5.3,1) node [right]{$b k_x$};
\draw [-] (6,2)--(6,3)-- (7,3)--(6,2); 
\draw [->] (6.4,2.5)  --(6.9,2.7) node [right]{$k_x'$};
\draw [->] (6.5,2.8)  --(6.5,3.3) node [above]{$c k_y$};
\draw [->] (5.7,2.5)node [left]{$d k_x$} --(6.2,2.5);
\end{scope}
\end{tikzpicture}
\end{align*}
Due to the conservation of weights we have $a k_y = k'_y + b k_x$ and $d k_x = c k_y + k_x'$, thus we get $(k_x',k_y') = (k_x,k_y) S^{-1}$. By summing up all configurations contributing to $T^{\mathbf{i},\mathbf{j}}_{\mathbf{i}, \mathbf{j}}$, we obtain  $(n_x,n_y) := (||\mathbf{i}||,||\mathbf{j}||) S^{-1}$. 
Note that $x^{||\mathbf{i}||} y^{||\mathbf{j}||} = {x'}^{n_x} {y'}^{n_y}$. 
\end{proof}

Utilizing the $RLLL$ relation \eqref{RLLL-dgm}, we define the {\it Yang-Baxter move} (YB move) of the q-6V model.  
Let $R_{ijk}$ be the local transformation of wiring diagrams given by
\begin{equation}\label{YBm}
\begin{tikzpicture}
\begin{scope}[>=latex,xshift=0pt]
\draw [->] (0,2) to [out = 0, in = 135] (2,1.5) coordinate(B) node[above=2pt]{$j$} -- (3,0.5) coordinate(C) node[below=2pt]{$i$} to [out = -45, in = 180] (4,0);
\draw [->] (0,1) to [out = 0, in = 135] (1,0.5) coordinate(A) node[below=2pt]{$k$} to [out = -45, in = -135] (C) to [out = 45, in = 180] (4,1);
\draw [->] (0,0) to [out = 0, in = -135] (A) -- (B) to [out = 45, in = 180] (4,2);
\draw[->] (4.5,1) -- (5.5,1);
\draw (5,1.3) node{$R_{ijk}$};
\end{scope}
\begin{scope}[>=latex,xshift=170pt]
\draw [->] (0,0) to [out = 0, in = -135] (2,0.5) coordinate(B) node[below=2pt]{$j$}-- (3,1.5) coordinate(A) node[above=2pt]{$k$}to [out = 45, in = 180] (4,2);
\draw [->] (0,1) to [out = 0, in = -135] (1,1.5) coordinate(C) node[above=2pt]{$i$} to [out = 45, in = 135] (A) to [out = -45, in = 180] (4,1);
\draw [->] (0,2) to [out = 0, in = 135] (C) -- (B) to [out = -45, in = 180] (4,0);
\end{scope}
\end{tikzpicture}
\ .
\end{equation}
For a wiring diagram $G$ on a torus, which includes a sub diagram as the left side of \eqref{YBm}, write $G'$ for $R_{ijk}(G)$. 
Let $\mathcal{T}_G(x,y)$ and $\mathcal{T}_{G'}(x,y)$ be the monodromy matrices for the q-6v model on $G$ and $G'$ respectively. Write $\mathscr{R}_{ijk}$ for the $R$-operator \eqref{r123} with the subscripts of all parameters $a, b, c, d, e$ and operators $\uu, \ww$ changed from $1$, $2$, $3$ to $i$, $j$, $k$.
As the corollary of Theorem \ref{th:RLLL}, we obtain the following.

\begin{cor}\label{cor:YB}
For the monodromy matrices $\mathcal{T}_{G}(x,y)$ and $\mathcal{T}_{G'}(x,y)$, it holds that
\begin{align}\label{YBM}
\mathcal{T}_{G}(x,y) = \mathscr{R}_{ijk} \mathcal{T}_{G'}(x,y)\mathscr{R}_{ijk}^{-1}.
\end{align}
In particular, when \( G \) is admissible, the transfer matrices associated with \( G' \) commute:
\[
[T_{G'}(x,y), \, T_{G'}(u,w)] = 0
\]
\end{cor}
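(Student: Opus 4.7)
The plan is to lift the local RLLL identity \eqref{qybe} from the three vertices involved in the Yang--Baxter move to the full monodromy matrix, and then descend to the transfer matrices by taking the trace over the auxiliary space.

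First I would express $\mathcal{T}_G(x,y)$ and $\mathcal{T}_{G'}(x,y)$ as ordered compositions of local operators: one factor of $\mathscr{L}$ at each vertex (with the first two subscripts labelling the two wires crossing at that vertex) together with the boundary factors $x^{\pm h}$, $y^{\pm h}$ along the fundamental domain. Because the transformation $R_{ijk}$ is strictly local, the diagrams $G$ and $G'$ coincide outside the three crossings $i,j,k$. Choosing the evaluation order so that the three factors at $i,j,k$ appear consecutively, I can write $\mathcal{T}_G(x,y) = A\cdot\Pi_G\cdot B$ and $\mathcal{T}_{G'}(x,y) = A\cdot\Pi_{G'}\cdot B$ with common outside operators $A$ and $B$, where $\Pi_G$ and $\Pi_{G'}$ are, up to a straightforward relabelling identifying the auxiliary labels $1,2,3$ with the three local wires and the $q$-Weyl labels $4,5,6$ with $i,j,k$, precisely the two sides of the RLLL relation.

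Next I would apply Theorem~\ref{th:RLLL} to obtain $\Pi_G = \mathscr{R}_{ijk}\,\Pi_{G'}\,\mathscr{R}_{ijk}^{-1}$. Direct inspection of \eqref{r123}--\eqref{lad} shows that $\mathscr{R}_{ijk}$ acts as the identity on the auxiliary space $V^{\otimes N}$ and on every $q$-Weyl factor other than $i,j,k$. Hence $\mathscr{R}_{ijk}$ commutes with $A$, $B$ and with every boundary factor, so the local conjugation propagates globally to
\[
\mathcal{T}_G(x,y) = A\,\mathscr{R}_{ijk}\,\Pi_{G'}\,\mathscr{R}_{ijk}^{-1}\,B = \mathscr{R}_{ijk}\,\mathcal{T}_{G'}(x,y)\,\mathscr{R}_{ijk}^{-1},
\]
which is \eqref{YBM}. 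Taking the trace over $V^{\otimes N}$, and using once more that $\mathscr{R}_{ijk}$ acts trivially on the auxiliary factor so it pulls out of the trace, I obtain $T_G(x,y) = \mathscr{R}_{ijk}\,T_{G'}(x,y)\,\mathscr{R}_{ijk}^{-1}$. Since $\mathscr{R}_{ijk}$ is independent of the spectral parameters, when $G$ is admissible the identity $[T_G(x,y), T_G(u,w)] = 0$ from Theorem~\ref{thm:TT2} is preserved under simultaneous conjugation, yielding $[T_{G'}(x,y), T_{G'}(u,w)] = 0$.

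The main bookkeeping is in the first step: verifying that the three $\mathscr{L}$ operators at $i,j,k$ genuinely form a consecutive subproduct in the factorisation of $\mathcal{T}_G$, and that under the intended relabelling the resulting product matches the LHS (or RHS) of \eqref{qybe} on the nose --- in particular that the wire orientations in \eqref{YBm} are compatible with the index placement of $\mathscr{L}_{124},\mathscr{L}_{135},\mathscr{L}_{236}$ in \eqref{RLLL-dgm}. This is a careful notational check rather than a deep issue; no algebraic input beyond Theorem~\ref{th:RLLL} is required.
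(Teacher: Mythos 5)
Your proposal is correct and follows exactly the route the paper intends: the paper presents this as an immediate corollary of Theorem \ref{th:RLLL}, relying on precisely the observations you spell out — that $\mathscr{R}_{ijk}$ acts only on the $q$-Weyl factors at $i,j,k$ (not on the auxiliary space $V^{\otimes N}$ nor on the other $\mathcal{W}(q)$ factors), that the three local $\mathscr{L}$'s form a consecutive subproduct so \eqref{qybe} applies, and that the conjugation survives the trace and preserves commutativity. Your write-up just makes explicit the bookkeeping the paper leaves to the reader.
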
  

When $G$ is admissible, it is easy to see that the wiring diagram $G'$ obtained from $G$ by the YB move is also admissible. Thus this corollary is consistent with Theorem \ref{thm:TT2}.

It is of interest to find a sufficient condition for $G$ to be admissible.
We call a face in a wiring diagram {\it oriented} if its boundary is oriented.
We propose the following conjecture.

\begin{conjecture}\label{thm:adm}
If a wiring diagram $G$ on a torus includes no oriented faces, then it is admissible. 
\end{conjecture}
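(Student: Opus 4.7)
The plan is to prove that any wiring diagram $G$ on the torus containing no oriented face admits a sweep of the initial NE green arrow into the final SW green arrow using operations of types A and C only (or of types B and C only), thereby avoiding the forbidden coexistence. First, I would analyze the local data that determines which move arises at each crossing. A crossing of two oriented wires with direction vectors $v_1, v_2$ is \emph{positive} or \emph{negative} according to the sign of $v_1 \wedge v_2$: a sweep past a positive crossing triggers a type C move (either (o) or (t), depending on whether the NE sector of the crossing is the ``both outgoing'' or ``both incoming'' corner), while a sweep past a negative crossing triggers either (h) (type B) or (v) (type A), depending on whether the ``outgoing'' boundary segment of the NE sector lies clockwise or counterclockwise of the ``incoming'' segment. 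Similarly, the inversion moves that arise when the arrow crosses a wire away from a vertex are classified as types A or B by the orientation of the crossed wire relative to the sweep direction. Thus admissibility reduces to whether all sub-types (A or B) encountered in some sweep can be made uniform.

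Next, I would establish the combinatorial heart of the argument: the absence of oriented faces forces this uniformity of sub-types. The strategy is to prove the contrapositive---if $G$ contains both an (h)-type and a (v)-type negative crossing, then $G$ contains an oriented face. A careful case check at each sector reveals that a face is CCW-oriented precisely when every one of its corners is either the NW sector of an (o)-crossing, the SE sector of a (t)-crossing, the NE sector of a (v)-crossing, or the SW sector of an (h)-crossing, and analogously for CW-oriented faces. Consequently, both (h) and (v) sub-types can coexist in the diagram only if their corresponding mixed corners can be simultaneously realized on the boundary of a single face; using a path-tracing argument along shared faces between the mismatched crossings, one should be able to exhibit such a face and complete the contrapositive.

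With uniformity of sub-types established, I would exhibit an explicit admissible sequence of operations, modeled on the sweep in Figure \ref{fig:ad-G}: pass the green arrow vertex by vertex across the fundamental domain, applying type C moves at positive crossings, the chosen uniform A or B move at each negative crossing, and the appropriate inversion moves whenever the arrow crosses a wire away from a vertex.

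The main obstacle is the second step: rigorously constructing an oriented face from mismatched sub-types. This is a global statement on the torus, and I expect the cleanest approach is to introduce a $\mathbb{Z}/2\mathbb{Z}$-valued invariant on negative crossings and reinterpret an oriented face as the vanishing of a cocycle around its boundary. A careful induction---possibly simplifying $G$ via the Yang--Baxter move (Corollary~\ref{cor:YB}), which preserves both the no-oriented-face property and the combinatorial data we track---should reduce to easily checked base cases, though additional care will be needed for wires with nontrivial homology classes and for boundary effects arising from the choice of fundamental domain.
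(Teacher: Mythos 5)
The statement you are proving is presented in the paper as \cref{thm:adm}, i.e.\ as an open \emph{conjecture}: the authors prove only the converse direction (\cref{prop:I}, that an oriented face forces non-admissibility) and offer no proof of the implication you address. So there is no paper proof to compare against, and the question is whether your proposal closes the gap. It does not. The entire difficulty of the conjecture is concentrated in your second step --- showing that the absence of oriented faces forces a sweep in which all type-A/type-B events have a uniform sub-type --- and that is precisely the step you leave as a plan (``a path-tracing argument \dots should be able to exhibit such a face'', ``I expect the cleanest approach is \dots''). Your first and third steps (classifying local moves and then executing a uniform sweep) are essentially bookkeeping that the paper already sets up via the tangle diagrams \eqref{tan} and the equivalences \eqref{tan-equiv}; they do not advance the conjecture.

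Two further concrete problems. First, the contrapositive you propose --- ``if $G$ contains both an (h)-type and a (v)-type negative crossing, then $G$ contains an oriented face'' --- is not well-posed as a property of $G$ alone: which of (o), (h), (v), (t) occurs at a crossing depends on which pair of opposite sectors the green arrow passes through, i.e.\ on the chosen sweep, and the equivalences \eqref{tan-equiv} show that an (h) can be traded for an (o) at the cost of introducing (I') and (iI') events elsewhere. Admissibility quantifies existentially over all sweeps, so you must rule out \emph{every} sweep mixing types A and B, or exhibit \emph{one} sweep avoiding the mixture; a pointwise classification of crossings does neither. Second, your proposed induction via the Yang--Baxter move requires that the move preserves the no-oriented-face property, which you assert but do not verify (the paper only notes that admissibility is preserved, in the remark after \cref{cor:YB}); adjacent faces change under the move and this needs an argument. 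As it stands the proposal is a reasonable research outline, but the conjecture remains unproved.
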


In Figure \ref{fig:Gs}, the admissible diagram on the left contains no oriented faces, whereas the non-admissible diagram on the right contains two oriented faces marked with $\sharp$ and \# respectively.
\begin{align*}
\begin{tikzpicture}
\begin{scope}[>=latex,xshift=0pt]
\draw [<-] (2.5,0)--(2.5,4);
\draw [->] (1.5,0)--(1.5,1) to [out=90,in=0](0,2);
\draw [->] (5,2)--(3.5,2) to [out=180,in=-40](2.5,2.5) to [out=140,in=-90](1.5,4);
\draw [->] (3.5,0) to [out=90,in=180](5,1);
\draw [->] (0,1)--(1.5,1) to [out=0,in=-90](3.5,2)--(3.5,4);
{\color{red}
\draw [-] (0,4)--(5,4);
\draw [-] (0,0)--(5,0);
\draw [-] (0,0)--(0,4);
\draw [-] (5,0)--(5,4);
}
\draw (3,1.7) node {$\sharp$};
\draw (2,0.5) node {\#};
\draw (2,3.5) node {\#};
%
\end{scope}
\end{tikzpicture}
\end{align*}

It is easier to show the inverse of the above conjecture,
by using the tangle diagrams introduced in the next subsection.

\begin{prop}\label{prop:I}
If a wiring diagram $G$ on a torus includes at least one oriented face,  then it is non-admissible. 
\end{prop}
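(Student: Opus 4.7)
The plan is to show that if $G$ contains an oriented face $F$, then the admissibility sweep of the NE green arrow to the SW arrow must invoke both a type~A and a type~B operation, contradicting the definition of admissibility.

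The starting point is the observation that the tetrahedron move applied when the green arrow crosses a vertex $v$ is forced by the orientations of the two wires meeting at $v$: consulting \eqref{2D-LM}, the four configurations (horizontal right or left)$\,\times\,$(vertical up or down) match one-to-one with the moves of types (o), (h), (v), (t). Consequently an (h)-vertex compels a type~B move and a (v)-vertex compels a type~A move. Since the NE and SW arrows together bound the whole fundamental domain of the torus, every vertex of $G$ is crossed at least once during any admissibility sequence. It therefore suffices to prove that every oriented face carries on its boundary both an (h)-vertex and a (v)-vertex.

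Fix $F$ oriented counter-clockwise. Each wire segment on $\partial F$ then carries the CCW direction with $F$ on its left, which pins down both wire orientations at every corner. A direct inspection of the four possible quadrants in which $F$ can sit relative to a corner vertex $v$ yields the correspondence
\begin{align*}
F \text{ in the NE quadrant of } v &\;\Longleftrightarrow\; v \text{ is of type (v),} \\
F \text{ in the NW quadrant of } v &\;\Longleftrightarrow\; v \text{ is of type (o),} \\
F \text{ in the SW quadrant of } v &\;\Longleftrightarrow\; v \text{ is of type (h),} \\
F \text{ in the SE quadrant of } v &\;\Longleftrightarrow\; v \text{ is of type (t).}
\end{align*}
At each crossing the four local quadrants belong to four distinct faces, so every corner of $\partial F$ has interior angle exactly $90^{\circ}$. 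Assuming $F$ is a topological disk, the angle-sum identity $90\,n = 180\,(n-2)$ forces $n=4$. Tracking the tangent direction of the CCW traversal, the four $90^{\circ}$ left turns implement the transitions $S\to E$, $E\to N$, $N\to W$, $W\to S$ corresponding respectively to the NE, NW, SW, SE entries above. All four transitions must occur for the direction to return to its initial value, so $\partial F$ visits one vertex of each of the four types. In particular both a (v)- and an (h)-vertex appear, completing the argument.

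The main obstacle I anticipate is handling oriented faces that are not topological disks on the torus: for an annular face the angle-sum identity used above fails, and one must argue on each boundary component separately, showing by a winding argument that both a (v)- and an (h)-corner still appear somewhere on $\partial F$. A secondary (routine) point is to justify precisely that the green arrow must cross every vertex during the sweep; this should follow by comparing the homology classes of the initial and final arrows, but deserves a careful statement in the tangle-diagram language of the next subsection.
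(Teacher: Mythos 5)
Your strategy is to localize the type-A/type-B conflict at the crossings lying on the boundary of the oriented face, via the claim that every oriented face carries both an (h)-corner and a (v)-corner. This claim is false, and that is the fatal gap. Consider an oriented bigon: a wire $1$ running to the right, and a wire $2$ that crosses it upward at a right-hand crossing, arcs over it leftward, and crosses it downward at a left-hand crossing; the enclosed face is oriented, yet its two corners are of types $(\to,\uparrow)=$ (o) and $(\to,\downarrow)=$ (v), i.e.\ types C and A only, with no type-B vertex anywhere on its boundary. (Equivalently, in your quadrant language the face sits in the NW quadrant at one corner and the NE quadrant at the other, so the transitions are $E\to N$ and $S\to E$ and your heading never closes up --- the missing $180^\circ$ of turning lives on the curved arc of wire $2$, not at a corner.) The same example shows that the Euclidean angle-sum step ``$90n=180(n-2)$ hence $n=4$'' is not applicable: faces of a wiring diagram are combinatorial, oriented bigons, triangles (e.g.\ the face marked $\sharp$ in Figure \ref{fig:Gs}) and longer polygons all occur, and the wire segments are not straight, so neither the quadrilateral conclusion nor the four-heading cycle survives. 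A further unaddressed point is that the assignment of a crossing to one of (o), (h), (v), (t) is not a function of the two wire orientations alone; it depends on the local frame in which the green arrow is pushed past the vertex, which is data of the sweep rather than of $G$.

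The paper's proof locates the forced type-A and type-B operations elsewhere: not at the vertices of the face boundary but at the caps and cups of the closed cycle it traces in the tangle diagram. Using the equivalences \eqref{tan-equiv} one first rewrites any (v) or (h) in terms of (o), (t) and the inversion moves, so that the only type-A operations remaining are (I), (iI) and the only type-B ones are (I$'$), (iI$'$). By Proposition \ref{prop:inv}(1) the oriented cycle can be assumed disjoint from the boundary of the fundamental domain, hence it appears as a closed directed curve in the tangle diagram; such a curve necessarily has a topmost cap and a bottommost cup, and its orientation forces one of these to be of type A and the other of type B. This handles bigons, triangles, non-disk faces and curved wires uniformly, and it is the mechanism your vertex-by-vertex count cannot see. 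If you want to salvage your approach, the statement to prove is not about the corners of the face but about the extremal inversion events of its boundary cycle with respect to the sweep.
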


\subsection{Tangle diagrams for the eight transformations}

We introduce the following tangle diagrams corresponding to the eight transformations \eqref{2D-LM}, \eqref{2D-I} and \eqref{2D-iI} divided into three types as \eqref{types}.

\begin{align}\label{tan}
\begin{tikzpicture}
\begin{scope}[>=latex,xshift=0pt]
\draw (-2,0.75) node[right] {type A:};
{\color{green}
\draw [->,dashed] (0,1.5)--(2,1.5);
\draw [->,dashed] (0,0)--(2,0);
}
\draw [->] (0.5,0)--(1.5,1.5);
\draw [->] (0.5,1.5)--(1.5,0);
\draw (1,-0.5) node {(v)};
{\color{green}
\draw [->,dashed] (3,1.5)--(5,1.5);
\draw [->,dashed] (3,0)--(5,0);
}
\draw [->] (3.5,0) to [out=90,in=180] (4,0.5) to [out=0,in=90] (4.5,0);
\draw (4,-0.5) node {(iI)};
{\color{green}
\draw [->,dashed] (6,1.5)--(8,1.5);
\draw [->,dashed] (6,0)--(8,0);
}
\draw [->] (6.5,1.5) to [out=-90,in=180] (7,1) to [out=0,in=-90] (7.5,1.5);
\draw (7,-0.5) node {(I)};
\end{scope}
\begin{scope}[>=latex,yshift=-80pt]
\draw (-2,0.75) node[right] {type B:};
{\color{green}
\draw [->,dashed] (0,1.5)--(2,1.5);
\draw [->,dashed] (0,0)--(2,0);
}
\draw [<-] (0.5,0)--(1.5,1.5);
\draw [<-] (0.5,1.5)--(1.5,0);
\draw (1,-0.5) node {(h)};
{\color{green}
\draw [->,dashed] (3,1.5)--(5,1.5);
\draw [->,dashed] (3,0)--(5,0);
}
\draw [<-] (3.5,0) to [out=90,in=180] (4,0.5) to [out=0,in=90] (4.5,0);
\draw (4,-0.5) node {(iI')};
{\color{green}
\draw [->,dashed] (6,1.5)--(8,1.5);
\draw [->,dashed] (6,0)--(8,0);
}
\draw [<-] (6.5,1.5) to [out=-90,in=180] (7,1) to [out=0,in=-90] (7.5,1.5);
\draw (7,-0.5) node {(I')};
\end{scope}
\begin{scope}[>=latex,yshift=-160pt]
\draw (-2,0.75) node[right] {type C:};
{\color{green}
\draw [->,dashed] (0,1.5)--(2,1.5);
\draw [->,dashed] (0,0)--(2,0);
}
\draw [->] (0.5,0)--(1.5,1.5);
\draw [<-] (0.5,1.5)--(1.5,0);
\draw (1,-0.5) node {(o)};
{\color{green}
\draw [->,dashed] (3,1.5)--(5,1.5);
\draw [->,dashed] (3,0)--(5,0);
}
\draw [<-] (3.5,0)--(4.5,1.5);
\draw [->] (3.5,1.5)--(4.5,0);
\draw (4,-0.5) node {(t)};
\end{scope}
\end{tikzpicture}
\end{align}
Here the green dashed arrows correspond to green thick arrows in \eqref{2D-LM}, \eqref{2D-I} and \eqref{2D-iI}, and the moves of thick arrows from NE to SW
are denoted by dashed arrows from top to bottom in \eqref{tan}.

We have the following equivalences of diagrams.

\begin{align}\label{tan-equiv}
\begin{tikzpicture}
\begin{scope}[>=latex,xshift=0pt]
{\color{green}
\draw [->,dashed] (0,1.5)--(4,1.5);
\draw [->,dashed] (0,0)--(4,0);
}
\draw [->] (0.5,0)--(0.5,1.5);
\draw [->] (0.5,1.5) to [out=90,in=180] (1,2) to [out=0,in=90] (1.5,1.5);
\draw [->] (1.5,1.5)--(2.5,0);
\draw [->] (2.5,1.5)--(1.5,0);
\draw [->] (2.5,0) to [out=-90,in=180] (3,-0.5) to [out=0,in=-90] (3.5,0);
\draw [->] (3.5,0)--(3.5,1.5);
\draw (4.5,0.75) node {=};
{\color{green}
\draw [->,dashed] (5,1.5)--(7,1.5);
\draw [->,dashed] (5,0)--(7,0);
}
\draw [->] (5.5,0)--(6.5,1.5);
\draw [->] (5.5,1.5)--(6.5,0);
\draw (7.5,0.75) node {=};
{\color{green}
\draw [->,dashed] (8,1.5)--(12,1.5);
\draw [->,dashed] (8,0)--(12,0);
}
\draw [->] (8.5,0) to [out=-90,in=180] (9,-0.5) to [out=0,in=-90] (9.5,0);
\draw [->] (8.5,1.5)--(8.5,0);
\draw [->] (9.5,0)--(10.5,1.5);
\draw [->] (10.5,0)--(9.5,1.5);
\draw [->] (11.5,1.5)--(11.5,0);
\draw [->] (10.5,1.5) to [out=90,in=180] (11,2) to [out=0,in=90] (11.5,1.5);
\draw (-0.5,2.5) node[right] {(iI)-(t)-(I) = (v) = (I)-(o)-(iI):};
\end{scope}
\begin{scope}[>=latex,yshift=-100pt]
{\color{green}
\draw [->,dashed] (0,1.5)--(4,1.5);
\draw [->,dashed] (0,0)--(4,0);
}
\draw [<-] (0.5,0)--(0.5,1.5);
\draw [<-] (0.5,1.5) to [out=90,in=180] (1,2) to [out=0,in=90] (1.5,1.5);
\draw [<-] (1.5,1.5)--(2.5,0);
\draw [<-] (2.5,1.5)--(1.5,0);
\draw [<-] (2.5,0) to [out=-90,in=180] (3,-0.5) to [out=0,in=-90] (3.5,0);
\draw [<-] (3.5,0)--(3.5,1.5);
\draw (4.5,0.75) node {=};
{\color{green}
\draw [->,dashed] (5,1.5)--(7,1.5);
\draw [->,dashed] (5,0)--(7,0);
}
\draw [<-] (5.5,0)--(6.5,1.5);
\draw [<-] (5.5,1.5)--(6.5,0);
\draw (7.5,0.75) node {=};
{\color{green}
\draw [->,dashed] (8,1.5)--(12,1.5);
\draw [->,dashed] (8,0)--(12,0);
}
\draw [<-] (8.5,0) to [out=-90,in=180] (9,-0.5) to [out=0,in=-90] (9.5,0);
\draw [<-] (8.5,1.5)--(8.5,0);
\draw [<-] (9.5,0)--(10.5,1.5);
\draw [<-] (10.5,0)--(9.5,1.5);
\draw [<-] (11.5,1.5)--(11.5,0);
\draw [<-] (10.5,1.5) to [out=90,in=180] (11,2) to [out=0,in=90] (11.5,1.5);
\draw (-0.5,2.5) node[right] {(iI')-(o)-(I') = (h) = (I')-(t)-(iI'):};
\end{scope}
\begin{scope}[>=latex,yshift=-160pt]
{\color{green}
\draw [->,dashed] (0,0)--(3,0);
}
\draw [->] (0.5,-0.5) to [out=90,in=180] (1,0.5) to [out=0,in=90] (1.5,0);
\draw [->] (1.5,0) to [out=-90,in=180] (2,-0.5) to [out=0,in=-90] (2.5,0.5);
\draw (3.5,0) node {=};
{\color{green}
\draw [->,dashed] (4,0)--(5,0);
}
\draw [<-] (4.5,0.5)--(4.5,-0.5);
\draw (5.5,0) node {=};
{\color{green}
\draw [->,dashed] (6,0)--(9,0);
}
\draw [<-] (6.5,0.5) to [out=-90,in=180] (7,-0.5) to [out=0,in=-90] (7.5,0);
\draw [<-] (7.5,0) to [out=90,in=180] (8,0.5) to [out=0,in=90] (8.5,-0.5);
\draw (-0.5,1) node[right] {(iI)-(I) = Id = (I')-(iI'):};
\end{scope}
\begin{scope}[>=latex,yshift=-230pt]
{\color{green}
\draw [->,dashed] (0,0)--(3,0);
}
\draw [<-] (0.5,-0.5) to [out=90,in=180] (1,0.5) to [out=0,in=90] (1.5,0);
\draw [<-] (1.5,0) to [out=-90,in=180] (2,-0.5) to [out=0,in=-90] (2.5,0.5);
\draw (3.5,0) node {=};
{\color{green}
\draw [->,dashed] (4,0)--(5,0);
}
\draw [->] (4.5,0.5)--(4.5,-0.5);
\draw (5.5,0) node {=};
{\color{green}
\draw [->,dashed] (6,0)--(9,0);
}
\draw [->] (6.5,0.5) to [out=-90,in=180] (7,-0.5) to [out=0,in=-90] (7.5,0);
\draw [->] (7.5,0) to [out=90,in=180] (8,0.5) to [out=0,in=90] (8.5,-0.5);
\draw (-0.5,1) node[right] {(iI')-(I') = Id = (I)-(iI):};
\end{scope}
\end{tikzpicture}
\end{align}
Using these equivalences, any tangle diagram is transformed into that containing neither (v) nor (h). 

For a wiring diagram $G$ on a torus with a fixed fundamental domain,  
let $D_G$ denote the tangle diagram associated with $G$,  
obtained by gluing the parts in \eqref{tan} as simply as possible.
Here, “simply” means that local configurations such as (I)-(iI), (I')-(iI'), etc.\  
are removed as much as possible by applying the third and fourth relations in \eqref{tan-equiv}.
Note that the diagram $D_G$ is not unique in general.

See Figure \ref{tan-ex} for the tangle diagram corresponding to the admissible transformation of Figure \ref{fig:ad-G}.

\begin{figure}[H]
\[
\begin{tikzpicture}
\begin{scope}[>=latex,xshift=0pt]
{\color{green}
\draw [->,dashed] (0,0)--(7,0);
\draw [->,dashed] (0,1)--(7,1);
\draw [->,dashed] (0,2)--(7,2);
\draw [->,dashed] (0,3)--(7,3);
\draw [->,dashed] (0,4)--(7,4);
\draw [->,dashed] (0,5)--(7,5);
}
\draw [->] (1,4)--(1,5);
\draw [->] (2,4)--(2,5);
\draw [<-] (3,4)--(5,5);
\draw [->] (5,4)--(3,5);
\draw [->] (6,4)--(6,5);
\draw [->] (1,3)--(1,4);
\draw [<-] (2,3)--(3,4);
\draw [->] (3,3)--(2,4);
\draw [->] (5,3)--(5,4);
\draw [->] (6,3)--(6,4);
\draw [<-] (1,3) to [out=-95,in=-85] (2,3);
\draw [->] (3,2)--(5,3);
\draw [->] (5,2)--(3,3);
\draw [->] (6,2)--(6,3);
\draw [<-] (1,1) to [out=95,in=85] (2,1);
\draw [->] (3,1)--(3,2);
\draw [->] (5,1)--(5,2);
\draw [->] (6,1)--(6,2);
\draw [<-] (1,0)--(1,1);
\draw [->] (2,0)--(3,1);
\draw [->] (3,0)--(2,1);
\draw [->] (5,0)--(5,1);
\draw [->] (6,0)--(6,1);
\draw (5,4.5) node {(h)};
\draw (3,3.5) node {(h)};
\draw (0.8,2.5) node {(I')};
\draw (5,2.5) node {(o)};
\draw (0.8,1.5) node {(iI')};
\draw (3,0.5) node {(o)};
\end{scope}
\end{tikzpicture}
\]
\caption{The tangle description of the admissible transformation of Figure \ref{fig:ad-G}.}
\label{tan-ex}
\end{figure}
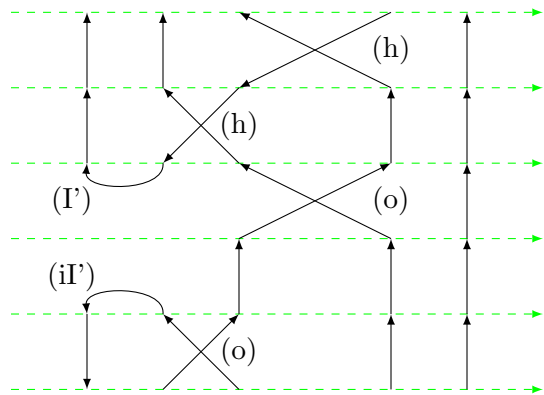

\begin{proof}(Proposition \ref{prop:I})
Assume that a wiring diagram $G$ on a torus has an oriented face. Due to Proposition \ref{prop:inv} (1), we can assume that the boundary of this oriented face (the cycle, for short) does not cross the red rectangle corresponding to the fixed fundamental domain of $G$. 
Let $D_G$ be the tangle diagram of $G$. If it has (v) or (h), apply the equivalence relations \eqref{tan-equiv} so that neither (v) nor (h) is included. Make the resulting diagram simple.
Since the cycle contains at least one vertex, it has at least one (t) or (o) at its left or right side.
Assume that it has (t) at the left side, so it is counterclockwise as the following schematic diagram, where dashed black arrows denote the cycle with (t). 
\begin{align*}
\begin{tikzpicture}
\begin{scope}[>=latex,xshift=0pt]
{\color{green}
\draw [->,dashed] (0,2)--(5,2);
\draw [->,dashed] (0,1)--(5,1);
\draw [->,dashed] (0,0)--(5,0);
\draw [->,dashed] (0,-1)--(5,-1);
}
\draw [->,dashed] (1.5,0) to [out=-90,in=135] (2.5,-1) to [out=-60,in=-120] (4,-1);
\draw [->,dashed] (4,-1) to [out=60,in=-60] (4.5,1);
\draw [->,dashed] (4.5,1) to [out=135,in=-45] (3.5,2) to [out=120,in=45] (2.5,2); 
\draw [->,dashed] (2.5,2) to [out=-100,in=60] (1.5,1);
%
\draw [->] (0.5,1)--(1.5,0);
\draw [->] (1.5,1)--(0.5,0);
\end{scope}
\end{tikzpicture}
\end{align*}
Thus, at the top side it must have (I') or (iI'), and at the bottom side it must have (I) or (iI). Hence the diagram is non-admissible.
The other cases of (t) at right side, and (o) at left or right side, are obtained by right-left or top-bottom reflection of this case.   
\end{proof}

\section{Examples}

We present two examples of the commuting transfer matrices $T_G(x,y)$ defined in \eqref{T-Gg}.

\subsection{Case 1}

The first example is the left wiring diagram of Figure \ref{fig:Gs}, our driving example. 
We assign $i_1, i_2, j_1, j_2 \in \{0,1\}$ to edges intersect with the red rectangle, and number vertices $1,2,3,4$ as follows. 
To each vertex $k$, we assign $\mathscr{L}(r_k, s_k,f_k,g_k;q)$ with a canonical pair $(\uu_k, \ww_k)$, where we consider an inhomogeneous transfer matrix (See  Remark \ref{rmk:inho2}).

\begin{align}\label{fig:G1}
\begin{tikzpicture}
\begin{scope}[>=latex,xshift=0pt]
\draw [->] (2.5,0)--(2.5,4) node[above]{$j_2$};
\draw [->] (1.5,0)--(1.5,1) to [out=90,in=0](0,2); 
\draw [->] (5,2) node[right]{$i_1$}--(3.5,2) to [out=180,in=-40](2.5,2.5) to [out=140,in=-90](1.5,4) node[above]{$j_1$};
\draw [->] (3.5,0) to [out=90,in=180](5,1)node[right]{$i_2$};
\draw [->] (0,1)--(1.5,1) to [out=0,in=-90](3.5,2)--(3.5,4)node[above]{$i_2$};
{\color{red}
\draw [-] (0,4)--(5,4);
\draw [-] (0,0)--(5,0);
\draw [-] (0,0)--(0,4);
\draw [-] (5,0)--(5,4);
}
\draw (1.5,1)node[below left]{$1$};
\draw (2.5,1.1)node[below left]{$2$};
\draw (2,2.4)node[right]{$3$};
\draw (3.5,2)node[below right]{$4$};
\end{scope}
\end{tikzpicture}
\end{align}

The transfer matrix $T_G(x,y)$ takes the form
\begin{align*}
T_G(x,y) &= \sum_{i_1,i_2,j_1,j_2}  T_{\mathrm{i},\mathrm{j}}^{\mathrm{i},\mathrm{j}}\, x^{-i_1+i_2}y^{j_1+j_2+i_2}
\\
&=r_1r_2r_3r_4 + s_1s_2s_3s_4 y^3 + 
e^{\uu_1 + \uu_3 + \uu_4} \left( \frac{g_1g_3g_4 r_2 y}{x} + f_1f_3f_4 s_2 x y^2 \right) 
\\
&\quad +e^{\uu_1 + \uu_2 + \uu_4} \left( \frac{g_1g_2g_4 s_3 y^2}{x} 
+ f_1f_2f_4 r_3 x y \right) 
 + A_{0,1}\, y + A_{0,2} \,y^2,
\displaybreak[0]
\\[1mm]
A_{0,1}&=e^{\uu_2 + \uu_3 + 2\uu_4 - \ww_1 + \ww_4} f_2 f_4 g_3 g_4 
+ e^{2\uu_4 - \ww_2 - \ww_3 + \ww_4} f_4 g_4 r_1 
+ e^{2\uu_1 + \ww_1 - \ww_4} f_1 g_1 r_2 r_3 \\[2mm]
&+ e^{2\uu_2 + 2\uu_3 - \ww_1 + \ww_2 + \ww_3} f_2 f_3 g_2 g_3 r_4 
+ e^{\uu_2 + \uu_3} f_3 g_2 r_1 r_4 
+ e^{\ww_1 - \ww_4} r_1 r_2 r_3 s_1 \\[2mm]
&+ e^{2\uu_3 - \ww_1 + \ww_2 + \ww_3} f_3 g_3 r_2 r_4 s_2 
+ e^{2\uu_2 - \ww_1 + \ww_2 + \ww_3} f_2 g_2 r_3 r_4 s_3 
+ e^{-\ww_1 + \ww_2 + \ww_3} r_2 r_3 r_4 s_2 s_3 \\[2mm]
&+ e^{\uu_2 + \uu_3 - \ww_1 + \ww_4} f_2 g_3 r_4 s_4 
+ e^{-\ww_2 - \ww_3 + \ww_4} r_1 r_4 s_4.
\displaybreak[0]
\\
A_{0,2} \;=\;&
e^{2\uu_1 + \uu_2 + \uu_3 + \ww_1 - \ww_4} \,f_1 f_3 g_1 g_2 
\;+\; e^{2\uu_2 + 2\uu_3 + \ww_2 + \ww_3 - \ww_4} \,f_2 f_3 g_2 g_3 s_1
\;+\; e^{\uu_2 + \uu_3 + \ww_1 - \ww_4} \,f_3 g_2 r_1 s_1 
\\[6pt]
&+\; e^{2\uu_3 + \ww_2 + \ww_3 - \ww_4} \,f_3 g_3 r_2 s_1 s_2 
\;+\; e^{2\uu_2 + \ww_2 + \ww_3 - \ww_4} \,f_2 g_2 r_3 s_1 s_3 
\;+\; e^{2\uu_4 - \ww_1 + \ww_4} \,f_4 g_4 s_2 s_3 
\\[6pt]
&+\; e^{\ww_2 + \ww_3 - \ww_4} \,r_2 r_3 s_1 s_2 s_3 
\;+\; e^{2\uu_1 + \ww_1 - \ww_2 - \ww_3} \,f_1 g_1 s_4 
\;+\; e^{\uu_2 + \uu_3} \,f_2 g_3 s_1 s_4 
\\[6pt]
&+\; e^{\ww_1 - \ww_2 - \ww_3} \,r_1 s_1 s_4 
\;+\; e^{-\ww_1 + \ww_4} \,r_4 s_2 s_3 s_4 \,.
%
\end{align*}
Since this $G$ is admissible, the coefficients of $T_G(x,y)$ commute with each other.

\subsection{Case 2}

The second example is as follows.  
We assign $i_1, j_1, j_2 \in \{0,1\}$ to the edges intersecting the red rectangle,  
and a canonical pair $(\uu_k, \ww_k)$ to each vertex $k$ (for $k = 1,2,3,4$).
\begin{align}\label{fig:G2}
\begin{tikzpicture}
\begin{scope}[>=latex,xshift=0pt]
\draw [->] (1,0)--(1,0.5) to [out=90,in=180] (4.5,3)--(5,3) node[right]{$i_1$};
\draw [->] (0,3) to [out=0,in=180](2.5,1) to [out=0,in=-90] (4,4)node[above]{$j_2$};
\draw [->] (4,0) to [out=90, in=-10] (2.5,2) to [out=170,in=-90](1,4) node[above]{$j_1$};
{\color{red}
\draw [-] (0,4)--(5,4);
\draw [-] (0,0)--(5,0);
\draw [-] (0,0)--(0,4);
\draw [-] (5,0)--(5,4);
}
\draw (1.3,1.8)node[left]{$1$};
\draw (3.4,1.7)node[below left]{$2$};
\draw (1.9,2.3)node[above]{$3$};
\draw (4,3.2)node[left]{$4$};
\end{scope}
\end{tikzpicture}
\end{align}

We now consider the transfer matrix under a homogeneous choice of parameters.
It takes the following form:
\begin{align*}
  T_G(x,y) &= \sum_{i_1,j_1,j_2}  T_{\mathrm{i},\mathrm{j}}^{\mathrm{i},\mathrm{j}}\, x^{i_1}y^{j_1+j_2}
\\
&=r^4 + s^4 xy^2 + B_{0,1}\,y + B_{0,2}\,y^2+B_{1,0}\, x + B_{1,1}\, xy ,
\\[1mm]
B_{0,1} &=e^{2\uu_2 + \uu_3 - \ww_1 + \ww_2} f g^2 r 
+ e^{\uu_2 + 2\uu_4 - \ww_3 + \ww_4} f g^2 r 
+ e^{\uu_4 - \ww_2} g r^2 \\
& \quad + e^{\uu_1 + 2\uu_3 + \ww_3} f g^2 r^2 
+ e^{\uu_3 - \ww_1 + \ww_2} g r^2 s 
+ e^{\uu_2 - \ww_3 + \ww_4} g r^2 s 
+ e^{\uu_1 + \ww_3} g r^3 s,
\\
B_{0,2} &= e^{\uu_1 + 2\uu_3 + \uu_4 - \ww_2 + \ww_3} f g^3 
+ e^{\uu_3 + \uu_4 - \ww_1} g^2 s 
+ e^{\uu_1 + \uu_2 + 2\uu_4 + \ww_4} f g^3 s 
\\
& \quad + e^{\uu_1 + \uu_4 - \ww_2 + \ww_3} g^2 r s 
+ e^{\uu_1 + \uu_2 + \ww_4} g^2 r s^2.
\\
B_{1,0}&=e^{\uu_1 + 2\uu_2 + \uu_4 + \ww_2 - \ww_3} f^3 g 
+ e^{\uu_1 + \uu_2 - \ww_4} f^2 r 
+ e^{2\uu_1 + \uu_3 + \uu_4 + \ww_1} f^3 g r 
\\
& \quad + e^{\uu_1 + \uu_4 + \ww_2 - \ww_3} f^2 r s 
+ e^{\uu_3 + \uu_4 + \ww_1} f^2 r^2 s.
\\
B_{1,1} &=e^{2\uu_1 + \uu_3 + \ww_1 - \ww_2} f^2 g s 
+ e^{\uu_2 + 2\uu_3 + \ww_3 - \ww_4} f^2 g s 
+ e^{\uu_1 - \ww_3} f s^2 
+ e^{2\uu_2 + \uu_4 + \ww_2} f^2 g s^2 \\
& \quad + e^{\uu_3 + \ww_1 - \ww_2} f r s^2 
+ e^{\uu_2 + \ww_3 - \ww_4} f r s^2 
+ e^{\uu_4 + \ww_2} f r s^3.
\end{align*}
Again, this $G$ is admissible, and the coefficients of $T_G(x,y)$ commute.

\section{Free parafermion and relativistic Toda}

\subsection{Representations of $T_G(x,y)$}

Let $\mathcal{L}(r_k,s_k,f_k,g_k;q)$ ($k=1,\ldots, K$)
be the operator (\ref{L1})--(\ref{L3}) assigned to 
the vertex $k$ of an admissible graph $G$ on a torus.
We have constructed the transfer matrix $T_G(x,y)$ in the tensor product of 
$q$-Weyl algebra $\mathcal{W}(q)^{\otimes K}$.
Given a set of representations
\begin{align}
\rho_k : \: \mathcal{W}(q) \rightarrow \mathrm{End}(F_k)
\end{align}
for $k=1,\ldots, K$, we have the representation of the transfer matrix 
\begin{align}
\mathbb{T}_G(x,y)=(\rho_1\otimes \cdots \otimes \rho_{K})(T_G(x,y)) 
\in \mathrm{End}(F_1\otimes \cdots \otimes F_{K}),
\end{align} 
which satisfies the commutativity 
\begin{align}
[\mathbb{T}_G(x,y), \mathbb{T}_G(u,w)]=0
\end{align}
 by the construction.
A typical representation of $\mathcal{W}(q)$ is the one with 
$F = \bigoplus_{m \in \Z}\C(q)|m\rangle$, and 
\begin{align}
\text{$\uu$-diagonal representation}:& \;  
e^{\pm \uu} |m\rangle = q^{\pm m}|m\rangle,
\quad 
e^{\pm\ww} |m\rangle = |m \pm 1\rangle,
\label{urep}\\
\text{$\ww$-diagonal representation}:& \;  
e^{\pm \uu} |m\rangle = |m\mp 1\rangle,
\quad 
e^{\pm\ww} |m\rangle = q^{\pm m}|m\rangle.
\label{wrep}
\end{align}
It can further be pulled back by any automorphism of $\mathcal{W}(q)$ in \cite[Rem.5.7]{IKSTY}
specified by an element of $\mathrm{SL}(2,\Z)$ depending on $k=1,\ldots, K$.
(The above two representations are connected in this way. 
The representation of $\mathcal{W}(-q)$ in (\ref{uwp}) is obtained similarly with 
the replacement $q \rightarrow -q$.)

According to the prescription explained in (\ref{Malpha})--(\ref{posc}),
one can also consider a representation of $\mathcal{L}(r_k,s_k,f_k,g_k;q)$ on the nonnegative mode subspace 
$\bigoplus_{m \in \Z_{\ge 0}}\C(q)|m\rangle$, which factors through the 
$q$-oscillator algebra by imposing the constraint $f_kg_k = -q^{-1}r_ks_k$ on  the parameters $r_k,s_k,f_k,g_k$.

All the representations mentioned thus far are infinite dimensional.
One can construct a finite dimensional $\mathbb{T}_G(x,y)$
by specializing $q$ to a root of unity.
Let $N\ge 2$ be an  integer and $q$ be an $N$ th roof of unity.
Then $\mathcal{W}(q)$ has an $N$ dimensional representation obtained 
by regarding $m$ in (\ref{urep})--(\ref{wrep}) as taking values in $\Z_N := \Z/ N \Z$.
The ``smallest" choice is $N=2$, where $e^\uu$ and $e^\ww$ in (\ref{urep}) for example
can be identified with the Pauli matrices $\sigma^z$ and $\sigma^x$, respectively.
The resulting transfer matrix $\mathbb{T}_G(x,y) \in \mathrm{End}((\C^2)^{\otimes K})$
remains commutative, and defines an integrable ``two-dimensional XZ model" with local weights
\begin{align}
\mathscr{L}^{00}_{00} = r,\;\;  \mathscr{L}^{11}_{11} = s,\;\; \mathscr{L}^{10}_{10} = f \sigma^z,\;\;
\mathscr{L}^{01}_{01} = g \sigma^z, 
\;\;  \mathscr{L}^{10}_{01} = \sigma^x , \;\; \mathscr{L}^{01}_{10} =(rs-fg)\sigma^x.
\end{align}

\subsection{Free parafermion}\label{ss:FP}

As an application of the construction in the previous subsection, 
we demonstrate that the transfer matrix $\mathbb{T}_G(x,y)$, for a suitable choice of $G$ and parameters, 
yields a generating function of 
commuting quantum Hamiltonians in the celebrated {\em free parafermion model}.
It is a non-Hermitian $\Z_N$ spin chain on one dimensional lattice 
introduced in \cite{B89} through its connection to the Chiral Potts model.
The model is known to exhibit a remarkable spectrum that generalizes the $N=2$ quantum Ising model.
Further developments on the model are discussed in~\cite{F14, AP20},  
and a recent survey is available in~\cite{BHL23}.

For an integer $L \ge 2$, 
let $G$ be a $2$ by $L$ square lattice whose arrows and vertices are 
specified as in Figure \ref{fig:FP1}. 
The graph $G$ is admissible.
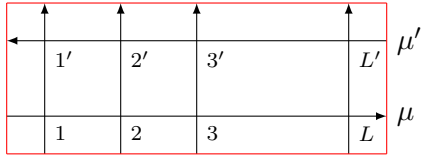
\begin{figure}[H]
\[
\begin{tikzpicture}
\begin{scope}[>=latex,xshift=0pt]
\draw [<-] (-0.5,1)--(4.5,1);
\draw [->] (-0.5,0)--(4.5,0);
\draw [->] (0,-0.5)--(0,1.5);
\draw [->] (1,-0.5)--(1,1.5);
\draw [->] (2,-0.5)--(2,1.5);
\draw [->] (4,-0.5)--(4,1.5);
%
\draw(4.5,1) node[right]{$\mu'$};
\draw(4.5,0) node[right]{$\mu$};
{\color{red}
\draw [-] (-0.5,1.5)--(4.5,1.5);
\draw [-] (-0.5,-0.5)--(4.5,-0.5);
\draw [-] (-0.5,-0.5)--(-0.5,1.5);
\draw [-] (4.5,-0.5)--(4.5,1.5);
}
\draw(0,1) node[below right]{$\scriptstyle{1'}$} ; \draw(0,0) node[below right]{$\scriptstyle{1}$} ;
\draw(1,1) node[below right]{$\scriptstyle{2'}$} ; \draw(1,0) node[below right]{$\scriptstyle{2}$} ;
\draw(2,1) node[below right]{$\scriptstyle{3'}$} ; \draw(2,0) node[below right]{$\scriptstyle{3}$} ;
\draw(4,1) node[below right]{$\scriptstyle{L'}$} ; \draw(4,0) node[below right]{$\scriptstyle{L}$} ;

\end{scope}
\end{tikzpicture}
\]
\caption{An admissible graph $G$ related to free parafermion model.
Vertices are labeled with $1,\ldots, L$ and $1',\ldots, L'$. 
The parameters $\mu$ and $\mu'$ are boundary magnetic fields mentioned in Remark \ref{re:inho}.}
\label{fig:FP1}
\end{figure}

Let 
 $\LL(r_i,s_i, f_i,g_i ;q)$  and  $\LL(r_{i'},s_{i'}, f_{i'}, g_{i'} ;q)$
 be the operators (\ref{L1}) associated with the vertices $i$ and $i'$ of $G$, respectively.
We focus on a quantized {\em five} vertex model corresponding to the following specialization of the parameters:
\begin{align}\label{fg0}
\LL(r_i=1,s_i=1, f_i=0, g_i=g ;q),\quad 
\LL(r_{i'}=1,s_{i'}=1, f_{i'}=f, g_{i'}=0 ;q)\quad (1 \le i \le L).
\end{align}
Here the key condition is 
$f_i=0$ and $g_{i'}=0$, while the homogeneous choice of the remaining parameters is made purely for  
simplicity of presentation. 
In fact, the argument below extends straightforwardly to the inhomogeneous case.
More importantly, we further introduce the ``boundary magnetic fields"  $\mu, \mu'$ explained in Remark \ref{re:inho} (ii) 
on the two rightmost horizontal edges as shown in Figure \ref{fig:FP1}.

For simplicity we first consider the case $\mu=\mu'=0$.
Then, the local states 
on the rightmost (and hence also the leftmost, due to the periodic boundary conditions)
edges are restricted to zero. 
With these settings,  the transfer matrix $T_G(x,y)$ 
becomes independent of $x$, allowing us to denote it as $T_G(y)$.
It is a polynomial of degree $L$ in $y$, which still satisfies the commutativity $[T_G(y), T_G(y')]=0$.
The coefficient of $y^1$ of $T_G(y)$ consists of $2L-1$ configurations.
For $L=3$, they are given as Figure~\ref{fig:FP2}.
  
\begin{figure}[H]
\[
\begin{tikzpicture}
\begin{scope}[>=latex,xshift=0pt]
\draw [<-] (-0.4,0.6)--(1.6,0.6);
\draw [->] (-0.4,0)--(1.6,0);
\draw [very thick,  ->] (0,-0.4)--(0,1);
\draw [->] (0.6,-0.4)--(0.6,1);
\draw [->] (1.2,-0.4)--(1.2,1);
{\color{red}
\draw [-] (-0.4,1)--(1.6,1);
\draw [-] (-0.4,-0.4)--(1.6,-0.4);
\draw [-] (-0.4,-0.4)--(-0.4,1);
\draw [-] (1.6,-0.4)--(1.6,1);
}
\draw (-0.4,-0.8) node[right] {$h_1=f g X_1$};
\end{scope}
\end{tikzpicture}
\qquad
\begin{tikzpicture}
\begin{scope}[>=latex,xshift=0pt]
\draw [<-] (-0.4,0.6)--(1.6,0.6);
\draw [->] (-0.4,0)--(1.6,0);
\draw [->] (0,-0.4)--(0,1);
\draw [very thick,  ->] (0.6,-0.4)--(0.6,1);
\draw [->] (1.2,-0.4)--(1.2,1);
{\color{red}
\draw [-] (-0.4,1)--(1.6,1);
\draw [-] (-0.4,-0.4)--(1.6,-0.4);
\draw [-] (-0.4,-0.4)--(-0.4,1);
\draw [-] (1.6,-0.4)--(1.6,1);
}
\draw (-0.4,-0.8)  node[right] {$h_3=fgX_2$};
\end{scope}
\end{tikzpicture}
\qquad 
\begin{tikzpicture}
\begin{scope}[>=latex,xshift=0pt]
\draw [<-] (-0.4,0.6)--(1.6,0.6);
\draw [->] (-0.4,0)--(1.6,0);
\draw [->] (0,-0.4)--(0,1);
\draw [->] (0.6,-0.4)--(0.6,1);
\draw  [very thick,  ->](1.2,-0.4)--(1.2,1);
{\color{red}
\draw [-] (-0.4,1)--(1.6,1);
\draw [-] (-0.4,-0.4)--(1.6,-0.4);
\draw [-] (-0.4,-0.4)--(-0.4,1);
\draw [-] (1.6,-0.4)--(1.6,1);
}
\draw (-0.4,-0.8)  node[right] {$h_5=fg X_3$};
\end{scope}
\end{tikzpicture}
\qquad 
\begin{tikzpicture}
\begin{scope}[>=latex,xshift=0pt]
\draw [<-] (-0.4,0.6)--(1.6,0.6);
\draw [->] (-0.4,0)--(1.6,0);
\draw [->] (0,-0.4)--(0,1);
\draw [->] (0.6,-0.4)--(0.6,1);
\draw [->] (1.2,-0.4)--(1.2,1);
\draw [very thick,  -](0,-0.4)--(0,0);
\draw [very thick,  -](0,0)--(0.6,0);
\draw [very thick,  -](0.6,0)--(0.6,0.6);
\draw [very thick,  -](0.6,0.6)--(0,0.6);
\draw [very thick,  ->](0,0.6)--(0,1);
{\color{red}
\draw [-] (-0.4,1)--(1.6,1);
\draw [-] (-0.4,-0.4)--(1.6,-0.4);
\draw [-] (-0.4,-0.4)--(-0.4,1);
\draw [-] (1.6,-0.4)--(1.6,1);
}
\end{scope}
\draw (-0.5,-0.8) node[right] {$h_2=Z_1^{-1}Z_2$};
\end{tikzpicture}
\qquad 
\begin{tikzpicture}
\begin{scope}[>=latex,xshift=0pt]
\draw [<-] (-0.4,0.6)--(1.6,0.6);
\draw [->] (-0.4,0)--(1.6,0);
\draw [->] (0,-0.4)--(0,1);
\draw [->] (0.6,-0.4)--(0.6,1);
\draw [->] (1.2,-0.4)--(1.2,1);
\draw [very thick, -](0.6,-0.4)--(0.6,0);
\draw [very thick, -](0.6,0)--(1.2,0);
\draw [very thick, -](1.2,0)--(1.2,0.6);
\draw [very thick, -](1.2,0.6)--(0.6,0.6);
\draw [very thick, ->](0.6,0.6)--(0.6,1);
{\color{red}
\draw [-] (-0.4,1)--(1.6,1);
\draw [-] (-0.4,-0.4)--(1.6,-0.4);
\draw [-] (-0.4,-0.4)--(-0.4,1);
\draw [-] (1.6,-0.4)--(1.6,1);
}
\draw (-0.5,-0.8) node[right] {$h_4=Z_2^{-1}Z_3$};
\end{scope}
\end{tikzpicture}
\]
\caption{Configurations of a q-5v model contributing $y^1$ term to $T_G(y)$ for $L=3$.
Thick (resp. thin) edges correspond to the local state $1$ (resp. $0$).}
\label{fig:FP2}
\end{figure}
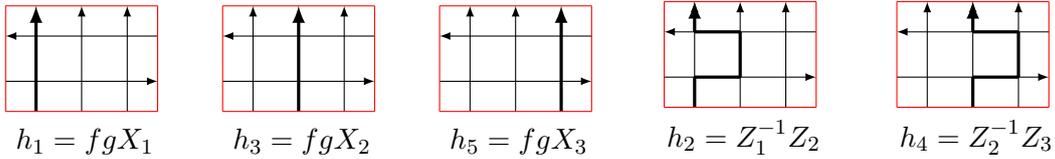

\noindent
Here we have set 
\begin{align}\label{XZ}
X_i = e^{\uu_i+\uu_{i'}},\quad 
Z_i = \e^{\ww_i+\ww_{i'}},
\end{align}
which satisfy the commutation relation
\begin{align}\label{qXZ}
X_i Z_j = q^{2\delta_{i,j}}Z_jX_i.
\end{align}
For general $L$, we similarly define $h_1, \ldots, h_{2L-1}$ as
\begin{align}
h_{2i-1} = fg X_{i},\quad h_{2i} = Z_i^{-1}Z_{i+1}
\end{align}
by using (\ref{XZ}).
They satisfy the commutation relations
\begin{align}\label{hh}
h_a h_{a+1} = q^{-2}h_{a+1}h_a,\quad h_a h_b = h_b h_a \; \,\text{for }\, | a-b | >1.
\end{align}
From Figure \ref{fig:6v} and the specialization (\ref{fg0}),
it is not difficult to see that the transfer matrix is expressed as 
\begin{align}
T_G(y) &= \sum_{m=0}^L y^m J^{(m)},
\label{Tfp}\\
J^{(m)}&= \sum_{1 \le b_i \prec \cdots \prec b_m \le 2L-1} h_{b_1} \cdots h_{b_m} \qquad (J^{(0)}=1),
\label{Jfp}
\end{align} 
where $b\prec b'$ means $b+2 \le b'$.
The commutativity of $T_G(y)$ implies  $[J^{(m)}, J^{(m')}]=0$ for all $1 \le m,m' \le L$.
In particular, regarding 
\begin{align}\label{HJ1}
H:=J^{(1)} = \sum_{a=1}^{2L-1}h_a = \sum_{i=1}^{L-1} Z_i^{-1}Z_{i+1} 
+ fg \sum_{i=1}^LX_i
\end{align}
as the Hamiltonian of the system, $J^{(2)}, \ldots, J^{(L)}$ serve as mutually commuting 
(non-local) conserved quantities. 

The construction thus far involves $2L$ copies of the $q$-Weyl algebra, generated by 
$e^{\pm \uu_i}, e^{\pm \ww_i}$, $e^{\pm \uu_{i'}},  e^{\pm \ww_{i'}}\, (1 \le i \le L)$,
which are assigned to the vertices of $G$ shown in Figure \ref{fig:FP1}.
However, the expressions (\ref{Tfp})--(\ref{Jfp}) reveal that 
$T_G(y)$ actually depends only on its subalgebra generated by 
$X_i^{\pm 1}, Z^{\pm 1}_i\,(1 \le i \le L)$ defined in (\ref{XZ}).
This subalgebra is isomorphic to 
$\mathcal{W}(q^2)^{\otimes L}$;  See (\ref{qXZ}).    
Thus, we may consider a representation $\mathbb{T}_G(y)$ of $T_G(y)$ 
on the space $\sum_{{\bf m} \in \Z^L}\C|{\bf m} \rangle$,
where $Z_i$ acts diagonally:
\begin{align}\label{XZL}
X_i |{\bf m}\rangle = |{\bf m} -{\bf e}_i\rangle,
\qquad 
Z_i |{\bf m}\rangle = q^{2m_i}|{\bf m}\rangle,
\end{align}
where ${\bf m} = (m_1,\ldots, m_n) \in \Z^L$ and 
${\bf e}_i = (0,\ldots, \overset{i}{1},\ldots, 0)$ is the $i$th unit vector in $\Z^L$.

The Hamiltonian of the free parafermion model is obtained by specializing $q^2$ to a primitive $N$th root of unity 
in (\ref{HJ1}), and regarding the integer array ${\bf m} \in \Z^L$ as 
${\bf m} \in (\Z_N)^L$.
The state space of the model is then naturally identified with the $N^L$ dimensional space
$\sum_{{\bf m} \in (\Z_N)^L}\C|{\bf m} \rangle$.
Up to convention, our $H$ (\ref{HJ1}) and  $J^{(m)}$ (\ref{Jfp}) reproduce \cite[eq. (1)]{BHL23} and 
\cite[eq.(41)]{F14}, respectively.
Compare also (\ref{hh}) and \cite[eq.(39)]{F14}.

Now let us consider the effect of reintroducing the magnetic fields $\mu$ and $\mu'$.
Figure \ref{fig:FP3} illustrates some representative additional configurations that contribute to 
$T_G(x,y)$ for $L=3$.

\begin{figure}[H]
\[
\begin{tikzpicture}
\begin{scope}[>=latex,xshift=0pt]
\draw [<-] (-0.4,0.6)--(1.6,0.6);
\draw [->] (-0.4,0)--(1.6,0);
\draw [very thick,  -] (1.2,-0.4)--(1.2,0);
\draw [very thick,  ->] (1.2,0)--(1.6,0);
\draw [very thick,  -] (-0.3,0)--(0,0);
\draw [very thick,  -] (0,0)--(0,0.6);
\draw [very thick,  ->] (0,0.6)--(-0.4,0.6);
\draw [very thick,  -] (1.6,0.6)--(1.2,0.6);
\draw [very thick,  ->] (1.2,0.6)--(1.2,1);
\draw [->] (0,-0.4)--(0,1);
\draw [->] (0.6,-0.4)--(0.6,1);
\draw [->] (1.2,-0.4)--(1.2,1);
{\color{red}
\draw [-] (-0.4,1)--(1.6,1);
\draw [-] (-0.4,-0.4)--(1.6,-0.4);
\draw [-] (-0.4,-0.4)--(-0.4,1);
\draw [-] (1.6,-0.4)--(1.6,1);
}
\draw (-0.8,-0.8) node[right] {$h_6=\mu\mu' Z_3^{-1}Z_1$};
\end{scope}
\end{tikzpicture}
\qquad
\begin{tikzpicture}
\begin{scope}[>=latex,xshift=0pt]
\draw [<-] (-0.4,0.6)--(1.6,0.6);
\draw [->] (-0.4,0)--(1.6,0);
\draw [->] (0,-0.4)--(0,1);
\draw [very thick,  ->] (1.6,0.6)--(-0.4,0.6);
\draw [very thick,  ->] (0,-0.4)--(0,1);
\draw [very thick,  ->] (0.6,-0.4)--(0.6,1);
\draw [very thick,  ->] (1.2,-0.4)--(1.2,1);
\draw [->] (1.2,-0.4)--(1.2,1);
{\color{red}
\draw [-] (-0.4,1)--(1.6,1);
\draw [-] (-0.4,-0.4)--(1.6,-0.4);
\draw [-] (-0.4,-0.4)--(-0.4,1);
\draw [-] (1.6,-0.4)--(1.6,1);
}
\draw (0.6,-0.8)  node{$\mu'g^3e^{\uu_1+\uu_2+\uu_3}$};
\end{scope}
\end{tikzpicture}
\qquad \;\;
\begin{tikzpicture}
\begin{scope}[>=latex,xshift=0pt]
\draw [<-] (-0.4,0.6)--(1.6,0.6);
\draw [->] (-0.4,0)--(1.6,0);
\draw [->] (0,-0.4)--(0,1);
\draw [->] (0.6,-0.4)--(0.6,1);
\draw [very thick,  <-] (1.6,0)--(-0.4,0);
\draw [very thick,  ->] (0,-0.4)--(0,1);
\draw [very thick,  ->] (0.6,-0.4)--(0.6,1);
\draw [very thick,  ->] (1.2,-0.4)--(1.2,1);
{\color{red}
\draw [-] (-0.4,1)--(1.6,1);
\draw [-] (-0.4,-0.4)--(1.6,-0.4);
\draw [-] (-0.4,-0.4)--(-0.4,1);
\draw [-] (1.6,-0.4)--(1.6,1);
}
\draw (0.6,-0.8)  node{$\mu f^3e^{\uu_{1'}+\uu_{2'}+\uu_{3'}}$};
\end{scope}
\end{tikzpicture}
\qquad \;\;
\begin{tikzpicture}
\begin{scope}[>=latex,xshift=0pt]
\draw [<-] (-0.4,0.6)--(1.6,0.6);
\draw [->] (-0.4,0)--(1.6,0);
\draw [->] (0,-0.4)--(0,1);
\draw [->] (0.6,-0.4)--(0.6,1);
\draw [very thick,  ->](1.6,0.6)--(-0.4,0.6);
\draw [very thick,  <-] (1.6,0)--(-0.4,0);
\draw [very thick,  ->] (0,-0.4)--(0,1);
\draw [very thick,  ->] (0.6,-0.4)--(0.6,1);
\draw [very thick,  ->] (1.2,-0.4)--(1.2,1);
{\color{red}
\draw [-] (-0.4,1)--(1.6,1);
\draw [-] (-0.4,-0.4)--(1.6,-0.4);
\draw [-] (-0.4,-0.4)--(-0.4,1);
\draw [-] (1.6,-0.4)--(1.6,1);
}
\draw (0.6,-0.8)  node{$\mu\mu'$};
\end{scope}
\end{tikzpicture}
\]
\caption{Examples of extra configurations allowed for nonzero $\mu, \mu'$ for $L=3$.
Weights are shown without the spectral parameters $x,y$.}
\label{fig:FP3}
\end{figure}
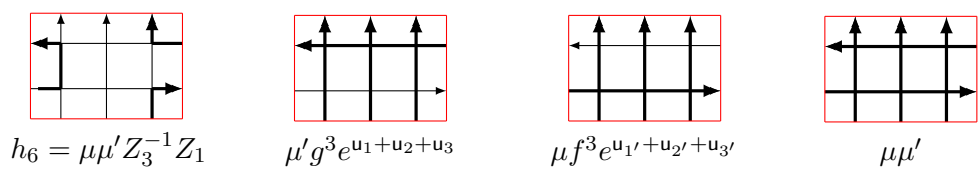

\noindent
For general $L$, we define $h_{2L}= \mu\mu' Z^{-1}_LZ_1$.
The commutation relations among $h_1,\ldots, h_{2L}$ are given by (\ref{hh}),
this time interpreting the indices of $h_a$ as elements of $\Z_{2L}$.
The commuting transfer matrix $T_G(x,y)$ in the presence of  nonzero magnetic fields is given by 
\begin{align}
T_G(x,y) &=  x^{-1}y^L \mu' g^L e^{\uu_1+\cdots + \uu_L} + 
x y^L\mu f^L e^{\uu_{1'}+\cdots + \uu_{L'}}+ 
\sum_{m=0}^Ly^m \tilde{J}^{(m)},
\label{fptg1}
\\
\tilde{J}^{(m)} &= \sum_{\substack{1 \le b_1 \prec \cdots \prec b_m \le 2L
\\  (b_1,b_L)\neq (1,2L)}}h_{b_1}\cdots h_{b_m}.
\label{fptg2}
\end{align}
Notably, the contribution $\mu\mu'$ from the rightmost configuration in Figure \ref{fig:FP3}
has been incorporated as the term $h_2h_4\cdots h_{2L}$ in $\tilde{J}^{(L)}$.

An interesting direction for further study is to formulate 
analogous models for various admissible graphs $G$ and 
to explore whether there exist associated Hamiltonians that exhibit elegant spectral structures.

\subsection{Relativistic quantum Toda chain}
\label{sec:Toda}

Another quantum integrable system that can be formulated as the
q-6v model is the relativistic quantum Toda chain.
The Hamiltonian of the $N$-particle relativistic quantum Toda chain is
\cite{Suris90}
\begin{equation}\label{Hrel}
  H_{\text{Toda}} = \sum_{i=1}^N (e^{\pp_i} + g^2 e^{\pp_i + \qq_{i+1} - \qq_i}),
\end{equation}
where $g$ is a real constant and $\qq_i$, $\pp_i$ satisfy the
canonical commutation relation
\begin{equation}
  [\qq_i, \pp_j] = i\hbar_{\text{Toda}} \delta_{ij}.
\end{equation}
We will consider the closed chain and impose the periodic boundary
condition: $\qq_{i+N} = \qq_i$, $\pp_{i+N} = \pp_i$.

The conserved charges of the relativistic quantum Toda chain are most
conveniently expressed in new variables,
\begin{equation}
  c_i = g^2 e^{\pp_i + \qq_{i+1} - \qq_i},
  \qquad
  d_i = e^{\pp_i},
\end{equation}
satisfying 
\begin{align}\label{q-cd}
c_i c_{i+1} = e^{i \hbar_{\text{Toda}}} c_{i+1}c_i, 
\quad 
c_i d_{i} = e^{-i \hbar_{\text{Toda}}} d_{i}c_i,
\quad 
c_i d_{i+1} = e^{i \hbar_{\text{Toda}}} d_{i+1}c_i.  
\end{align}
The Hamiltonian \eqref{Hrel} is simply written as $H_{\text{Toda}}=\sum_{i=1}^N (c_i+d_i)$.
The total momentum $\mathsf{P} = \sum_{i=1}^N \pp_i$ is a conserved quantity commuting with $H_{\text{Toda}}$. 
Let us consider the sector in which $\mathsf{P} = P \in \R$.  
Note that the operators $c_i$, $d_i$ leave
this sector invariant, where they satisfy the relations
\begin{equation}
  g^{-2N} e^{-\frac{N-2}{2} i \hbar_{\text Toda}} c_1 c_2 \cdots c_N = d_1 d_2 \cdots d_N = e^P.
\end{equation}
The claim is that this sector of the relativistic quantum Toda chain
is a sector of the q-6v model on the admissible graph
 $G$ in Figure~\ref{fig:closed-Toda}.

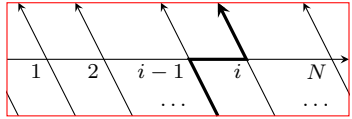
\begin{figure}[H]
  \begin{tikzpicture}[>=stealth, scale=0.75, font=\scriptsize]
    \draw[->] (0.3,1) -- (6.3,1);

    \begin{scope}
      \clip (0.3,0) rectangle (6.3,2);
      \draw[->] (0.5,0) -- (-0.5,2);
      \draw[->] (6.5,0) -- (5.5,2);
    \end{scope}
    
    \draw[->] (1.5,0) -- (0.5,2);
    \draw[->] (2.5,0) -- (1.5,2);
    \draw[->] (4,0) -- (3,2);
    \draw[->] (5,0) -- (4,2);
    \node at (3.25,0.2) {$\dots$};
    \node at (5.75,0.2) {$\dots$};
    
    \node[below left=-2pt] at (1,1) {$1$};
    \node[below left=-2pt] at (2,1) {$2$};
    \node[below left=-2pt] at (3.5,1) {$i-1$};
    \node[below left=-2pt] at (4.5,1) {$i$};
    \node[below left=-2pt] at (6,1) {$N$};

    \draw[very thick, ->] (4,0) -- (3.5,1) -- (4.5,1) -- (4,2);

    \draw[red] (0.3,0) rectangle (6.3,2);
  \end{tikzpicture}
  \caption{The admissible graph for the closed relativistic Toda
    chain.  The configuration shown here contributes to the
    Hamiltonian.}
  \label{fig:closed-Toda}
\end{figure}

For the q-6v model on the graph $G$ in Figure \ref{fig:closed-Toda}, 
without loss of generality let us set
\begin{equation}
  r_i = f_i = 1
\end{equation}
by rescaling the operator $\LL(r,s,f,g;q)$ as
\begin{equation}
  \LL \to r^{-1} \LL,
  \quad
  (e^{\uu}, e^{\ww}) \to (f^{-1} e^{\uu}, r^{-1} e^{\ww}),
  \quad
  (r,s,f,g) \to (r,rs, rf, rfg).
\end{equation}
The transfer matrix $T_G(x,y)$ can be expanded as
\begin{equation}\label{Trel}
  T_G(x,y) = \sum_{m=-1}^1 \sum_{n=0}^N T_{m,n} x^m y^n,
\end{equation}
with the nontrivial coefficients for $n = 0$ and $1$ being
\begin{equation}
  T_{1,0} = \prod_{i=1}^N e^{\uu_i},
  \qquad
  T_{0,1}
  =
  \sum_{i=1}^N
  (s_i e^{\ww_i-\ww_{i-1}} + g_i e^{2\uu_i + \ww_i - \ww_{i-1}}).
\end{equation}
Indeed, in the sector with $T_{1,0} = e^U$ with $U \in \R$, we have
$T_{0,1} = H_{\text{Toda}}$ under the identification
\begin{equation}\label{cd-uw}
  c_i = g_{i+1} e^{2\uu_{i+1} + \ww_{i+1} - \ww_i},
  \quad
  d_i = s_{i+1} e^{\ww_{i+1}-\ww_i},
  \quad
  \hbar_{\text{Toda}} = 2i\hbar,
  \quad
  g^{2N} = e^{2U} \prod_{i=1}^N \frac{g_i}{s_i},
  \quad
  e^P = \prod_{i=1}^N s_i.
\end{equation}
Thus the q-6v model defines the relativistic quantum Toda chain. 
We will show in \S \ref{sec:6v-d} that the transfer matrix
$T_G(x,y)$ \eqref{Trel} can be expressed as the determinant of the Kasteleyn matrix $K(x,y)$ for the corresponding dimer model,
\begin{equation}\label{Krel}
  K(x,y) =
  \begin{pmatrix}
    ys_2 e^{\ww_2-\ww_1} - 1 &  e^{\uu_2 + \ww_2 - \ww_1} &&& x^{-1} yg_1 e^{\uu_1} \\
    yg_2 e^{\uu_2} & ys_3 e^{\ww_3-\ww_2} - 1 &  e^{\uu_3 + \ww_3 - \ww_2} \\
    & yg_3 e^{\uu_3} & \ddots & \ddots \\
    && \ddots & \\
    &&&& e^{\uu_N + \ww_N - \ww_{N-1}} \\
    x e^{\uu_1 + \ww_1 - \ww_N} &&& yg_N e^{\uu_N} & ys_1 e^{\ww_1-\ww_N} - 1
  \end{pmatrix}.
\end{equation}
More precisely, the coefficient $K_{m,n}$ of $x^m y^n$ in
$\det_\star \! K(x,y)$ is related to $T_{m,n}$ by
\begin{equation}
  T_{m,n} = (-1)^{N + mn + m + n} K_{m,n},
\end{equation}
where  $\det_{\star} $ is a Weyl ordered determinant, which will be defined in \eqref{detK}.

We remark that the matrix \eqref{Krel} is a natural quantization of the Lax matrix for the (classical) relativistic Toda chain introduced in \cite{BR89},
\begin{equation}
  L(\mu,\nu)
  =
  \begin{pmatrix}
    \nu d_1 - 1 & \nu c_1 &&& \mu^{-1} \\
    1 & \nu d_2 - 1 & \nu c_2 \\
    & 1 & \ddots & \ddots \\
    &&&& \nu c_{N-1} \\
    \mu\nu c_N &&& 1 & \nu d_N - 1
  \end{pmatrix}.
\end{equation}
Now $c_i$ and $d_i$ satisfy the Poisson brackets.
\begin{align}
\{c_i, c_{i+1}\} = c_i c_{i+1},
\quad 
\{ c_i,  d_{i}\} = c_i d_{i},
\quad 
\{c_i, d_{i+1}\} = c_i d_{i+1},  
\end{align}
corresponding to the classical counterpart of \eqref{q-cd}. 
This Lax matrix is related to the (classical limit of) Kasteleyn matrix \eqref{Krel}
by a gauge transformation $D^{-1}K(x,y)D = L(x/D_N, y)$ with 
\begin{equation}
  D = \mathrm{diag}(D_1, D_2, \dots, D_N),
  \qquad
  D_i = \prod_{j=1}^i yg_j e^{\uu_j},
\end{equation}
and the identification \eqref{cd-uw}.
It is expected that the quantum model given by \eqref{Krel} is equivalent to the conventional relativistic quantum Toda chain given by a $2$ by $2$ Lax matrix. We note that  another realization of the relativistic Toda chain  in terms of dimer model is studied in \cite{EFS12}.

\section{Relation to dimer models}
\label{sec:dimer}

\subsection{Dimer models}

A \emph{bipartite graph} is a graph such that each vertex is colored
either black or white and each edge connects exactly one black and one
white vertex.  The edges of a bipartite graph are canonically oriented
from a black to a white vertex.  A \emph{perfect matching} $M$ of a
bipartite graph $\Gamma$ is a subset of the edges of $\Gamma$ such
that each vertex of $\Gamma$ is contained in exactly one edge of $M$.
The set of all perfect matchings of $\Gamma$ will be denoted by
$\MM(\Gamma)$.

For a bipartite graph $\Gamma$ embedded in a surface $\Sigma$ and
$M_1$, $M_2 \in \mathcal{M}(\Gamma)$, the sum of all edges in $M_1$
and the sum of all edges in $M_2$ have the same boundary, namely the
sum of all black vertices minus the sum of all white vertices.  Their
difference is therefore a $1$-cycle of $\Sigma$, and we will write
$[M_1 - M_2]$ to denote its homology class in $H_1(\Sigma)$.

The dimer model is a statistical mechanics model defined on a
bipartite graph $\Gamma$ on a surface $\Sigma$ such that the states
are represented by the perfect matchings of $\Gamma$.  Note that for
perfect matchings to exist, there must be an equal number of black and
white vertices.  Each edge $e$ of $\Gamma$ is assigned a local
Boltzmann weight $w(e)$, which is usually a positive real number in
the context of physical statistical mechanics.  We will take $\Sigma$
to be a torus and consider the quantized case in which $w(e)$ is an
element of the product of some copies of the $q$-Weyl algebra.

Choose a perfect matching $M_0$ of $\Gamma$.  The partition function
of the dimer model on a torus is given by
\begin{equation}
  Z_\Gamma(x, y)
  =
  \sum_{M \in \mathcal{M}(\Gamma)}
  \prod_{e \in M} w(e)
  x^{\langle [M - M_0], [A]\rangle} y^{\langle [M - M_0], [B]\rangle},
\end{equation}
where $x$, $y$ are parameters, $[A]$, $[B]$ are generators of
$H_1(\Sigma)$ represented by $1$-cycles $A$, $B$ of the torus, and
$\langle\ ,\ \rangle$ is the intersection pairing.  The product of
local Boltzmann weights is understood to be Weyl ordered:
\begin{equation}\label{Weyl-o}
  e^{a\uu} \star e^{b\ww} = e^{b\ww} \star e^{a\uu} = e^{a\uu + b\ww}.
\end{equation}
Up to multiplication by monomials in $x$ and $y$, the partition
function is independent of the choice of the reference perfect
matching $M_0$.

To be concrete, and to connect with the previous discussions on the
q-6v model, let us fix a fundamental domain of the
torus (as in the definition \eqref{pT_G} of the monodromy matrix) and
take $A$ to be the east boundary oriented upward and $B$ to be the
north boundary oriented to the left.  Then, a cycle $\gamma$
traversing the fundamental domain from west to east horizontally has
$\langle[\gamma], [A]\rangle = 1$ and
$\langle[\gamma], [B]\rangle = 0$, while a cycle $\gamma'$ traversing
the fundamental domain from south to north vertically has
$\langle[\gamma'], [A]\rangle = 0$ and
$\langle[\gamma'], [B]\rangle = 1$.

The partition function $Z_\Gamma(x, y)$ can be calculated from a
\emph{Kasteleyn matrix} $K(x,y)$ whose rows and columns are indexed by
the black and white vertices of $\Gamma$, respectively.  The
construction of $K(x,y)$ requires a choice of sign $\sigma(e) = \pm 1$
for each edge $e$.  The sign assignment must satisfy the condition that
for every face with $L$ edges, the product of signs of its edges is
equal to $(-1)^{L/2+1}$.  The matrix element $K(x,y)_{bw}=K_{bw}$ for a black
vertex $b$ and a white vertex $w$, connected by an edge $e$, is given
by
\begin{equation}
  K(x,y)_{bw} = \sigma(e) w(e) x^{\langle e, A\rangle} y^{\langle e, B\rangle}.
\end{equation}
The pairing $\langle e, A\rangle$ equals $+1$ or $-1$ if $e$ crosses
$A$ from left to right or from right to left, and $0$ otherwise;
$\langle e, B\rangle$ is defined in a similar manner.

For an $N$ by $N$ Kasteleyn matrix $K(x,y)$, define its determinant $\det_\star \! K(x,y)$ using the Weyl ordering \eqref{Weyl-o} as
\begin{equation}\label{detK}
  {\det}_{\star}  K(x,y) = \sum_{\sigma \in \mathfrak{G}_N} (-1)^{\mathrm{sgn}(\sigma)} K_{b_1, w_{\sigma(1)}} \star K_{b_2, w_{\sigma(2)}} \star \cdots \star 
K_{b_N, w_{\sigma(N)}}. 
\end{equation}
This is a polynomial in $x$ and $y$, and we write $\det_\star \! K(x,y) = \sum_{m,n} K_{m,n} x^m y^n$.

According to a classic theorem of Kasteleyn \cite{K}, adapted to the
case of a bipartite graph on a torus \cite{KOS}, the partition
function of the dimer model can be expressed as
\begin{equation}
  Z_\Gamma(x,y)
  =
  \sum_{m,n} \pm K_{m,n} x^{m - m_0} y^{n - n_0} \,,
\end{equation}
where $m_0 = \sum_{m \in M_0} \langle m, A\rangle$ and
$n_0 = \sum_{m \in M_0} \langle m, B\rangle$.

\subsection{Quantized six-vertex model as a dimer model}
\label{sec:6v-d}

In order to reformulate the q-6v model as a dimer model,
we replace every vertex of the wiring diagram $G$ locally with a
bipartite graph as follows:%
\footnote{Given a bipartite graph, one can construct a quiver such
  that its underlying undirected graph is dual to the bipartite graph
  and the arrows surrounding a white (resp.\ black) vertex of the
  bipartite graph are oriented clockwise (resp.\ counterclockwise)
  with weight $1/2$.  For the bipartite graph for the quantized
  six-vertex model, the associated quiver is the symmetric butterfly
  quiver studied in \cite{IKSTY}.  Similarly, for the bipartite graph
  for the quantized five-vertex model for $f = 0$, the associated
  quiver is the Fock--Goncharov quiver \cite{IKT1}.}
\begin{equation}
  \label{6vdim}
  \begin{tikzpicture}[>=latex, baseline={(0,0.9)}]
    \draw[->] (0,1) --(2,1);
    \draw[->] (1,0) --(1,2);
  \end{tikzpicture}
  \quad \to \quad
  \begin{tikzpicture}[font=\scriptsize, baseline={(0,0.9)}]
    \draw[-] (0,1) -- node[above=-2pt, xshift=-2pt] {$1$} (0.5,1);
    \draw[-] (1.5,1) -- node[above=-2pt, xshift=4pt] {$e^{-\ww}$} (2,1);
    \draw[-] (1,0) -- node[right=-2pt, yshift=-2pt] {$1$} (1,0.5);
    \draw[-] (1,1.5) -- node[right=-2pt, yshift=2pt] {$1$} (1,2);

    \draw[-] (0.5,1) -- node[above left=-5pt] {$se^\ww$} (1,1.5)
    -- node[above right=-3pt] {$ge^\uu$} (1.5,1)
    -- node[below right=-3pt] {$r$} (1,0.5)
    -- node[below left=-4pt] {$fe^{\uu+\ww}$} cycle;

    \draw[fill=black] (0.5,1) circle[radius=2pt];
    \draw[fill=black] (1.5,1) circle[radius=2pt];
    \draw[fill=white] (1,1.5) circle[radius=2pt];
    \draw[fill=white] (1,0.5) circle[radius=2pt];
  \end{tikzpicture}
  \
  .
\end{equation}
Here $e^\uu$, $e^\ww$ are the generators of the copy of the $q$-Weyl
algebra $\mathcal{W}(q)$ assigned to the vertex of $G$ in question.
If the resulting graph has two vertices of the same color connected by
an edge, we introduce another vertex of the opposite color between
them (and may subsequently shrink the 2-valent vertices).  Then we
obtain a bipartite graph, which we call $\Gamma(G)$.  An edge
connecting a white vertex and a black one may have two weights
assigned.  In that case it is understood that the product of the two
is the weight of that edge.

There are seven perfect matchings for the local piece of the bipartite
graph on the right-hand side of \eqref{6vdim}, listed as states
1--5, 6a and 6b in Table~\ref{tab:6vd}.  States 1--5 correspond
to five configurations in the q-6v model, whereas states
6a and 6b account for the two terms in the weight for the remaining
configuration.

\begin{table}
  \small
  \centering
  \renewcommand{\arraystretch}{2}
  \begin{tabular}{lccccccc}
    State & 1 & 2 & 3 & 4 & 5 & 6a & 6b
    \\
    Perfect matching
    &
    \begin{tikzpicture}[baseline={(0,0.5)}, scale=0.7]
      \draw[-, very thick, blue] (0,1) -- (0.5,1);
      \draw[-] (1.5,1) -- (2,1);
      \draw[-] (1,0) -- (1,0.5);
      \draw[-, very thick, blue] (1,1.5) -- (1,2);

      \draw[-] (0.5,1) -- (1,1.5);
      \draw[-] (1,1.5) -- (1.5,1);
      \draw[-, very thick, blue] (1.5,1) -- (1,0.5);
      \draw[-] (1,0.5) -- (0.5,1);
      
      \draw[fill=black] (0.5,1) circle[radius=3pt];
      \draw[fill=black] (1.5,1) circle[radius=3pt];
      \draw[fill=white] (1,1.5) circle[radius=3pt];
      \draw[fill=white] (1,0.5) circle[radius=3pt];
    \end{tikzpicture}
    &
    \begin{tikzpicture}[baseline={(0,0.5)}, scale=0.7]
      \draw[-] (0,1) -- (0.5,1);
      \draw[-, very thick, blue] (1.5,1) -- (2,1);
      \draw[-, very thick, blue] (1,0) -- (1,0.5);
      \draw[-] (1,1.5) -- (1,2);

      \draw[-, very thick, blue] (0.5,1) -- (1,1.5);
      \draw[-] (1,1.5) -- (1.5,1);
      \draw[-] (1.5,1) -- (1,0.5);
      \draw[-] (1,0.5) -- (0.5,1);
      
      \draw[fill=black] (0.5,1) circle[radius=3pt];
      \draw[fill=black] (1.5,1) circle[radius=3pt];
      \draw[fill=white] (1,1.5) circle[radius=3pt];
      \draw[fill=white] (1,0.5) circle[radius=3pt];
    \end{tikzpicture}
    &
    \begin{tikzpicture}[baseline={(0,0.5)}, scale=0.7]
      \draw[-] (0,1) -- (0.5,1);
      \draw[-, very thick, blue] (1.5,1) -- (2,1);
      \draw[-] (1,0) -- (1,0.5);
      \draw[-, very thick, blue] (1,1.5) -- (1,2);

      \draw[-] (0.5,1) -- (1,1.5);
      \draw[-] (1,1.5) -- (1.5,1);
      \draw[-] (1.5,1) -- (1,0.5);
      \draw[-, very thick, blue] (1,0.5) -- (0.5,1);
      
      \draw[fill=black] (0.5,1) circle[radius=3pt];
      \draw[fill=black] (1.5,1) circle[radius=3pt];
      \draw[fill=white] (1,1.5) circle[radius=3pt];
      \draw[fill=white] (1,0.5) circle[radius=3pt];
    \end{tikzpicture}
    &
    \begin{tikzpicture}[baseline={(0,0.5)}, scale=0.7]
      \draw[-, very thick, blue] (0,1) -- (0.5,1);
      \draw[-] (1.5,1) -- (2,1);
      \draw[-, very thick, blue] (1,0) -- (1,0.5);
      \draw[-] (1,1.5) -- (1,2);

      \draw[-] (0.5,1) -- (1,1.5);
      \draw[-, very thick, blue] (1,1.5) -- (1.5,1);
      \draw[-] (1.5,1) -- (1,0.5);
      \draw[-] (1,0.5) -- (0.5,1);
      
      \draw[fill=black] (0.5,1) circle[radius=3pt];
      \draw[fill=black] (1.5,1) circle[radius=3pt];
      \draw[fill=white] (1,1.5) circle[radius=3pt];
      \draw[fill=white] (1,0.5) circle[radius=3pt];
    \end{tikzpicture}
    &
    \begin{tikzpicture}[baseline={(0,0.5)}, scale=0.7]
      \draw[-, very thick, blue] (0,1) -- (0.5,1);
      \draw[-, very thick, blue] (1.5,1) -- (2,1);
      \draw[-, very thick, blue] (1,0) -- (1,0.5);
      \draw[-, very thick, blue] (1,1.5) -- (1,2);

      \draw[-] (0.5,1) -- (1,1.5);
      \draw[-] (1,1.5) -- (1.5,1);
      \draw[-] (1.5,1) -- (1,0.5);
      \draw[-] (1,0.5) -- (0.5,1);
      
      \draw[fill=black] (0.5,1) circle[radius=3pt];
      \draw[fill=black] (1.5,1) circle[radius=3pt];
      \draw[fill=white] (1,1.5) circle[radius=3pt];
      \draw[fill=white] (1,0.5) circle[radius=3pt];
    \end{tikzpicture}
    &
    \begin{tikzpicture}[baseline={(0,0.5)}, scale=0.7]
      \draw[-] (0,1) -- (0.5,1);
      \draw[-] (1.5,1) -- (2,1);
      \draw[-] (1,0) -- (1,0.5);
      \draw[-] (1,1.5) -- (1,2);

      \draw[-, very thick, blue] (0.5,1) -- (1,1.5);
      \draw[-] (1,1.5) -- (1.5,1);
      \draw[-, very thick, blue] (1.5,1) -- (1,0.5);
      \draw[-] (1,0.5) -- (0.5,1);
      
      \draw[fill=black] (0.5,1) circle[radius=3pt];
      \draw[fill=black] (1.5,1) circle[radius=3pt];
      \draw[fill=white] (1,1.5) circle[radius=3pt];
      \draw[fill=white] (1,0.5) circle[radius=3pt];
    \end{tikzpicture}
    &
    \begin{tikzpicture}[baseline={(0,0.5)}, scale=0.7]
      \draw[-] (0,1) -- (0.5,1);
      \draw[-] (1.5,1) -- (2,1);
      \draw[-] (1,0) -- (1,0.5);
      \draw[-] (1,1.5) -- (1,2);

      \draw[-] (0.5,1) -- (1,1.5);
      \draw[-, very thick, blue] (1,1.5) -- (1.5,1);
      \draw[-] (1.5,1) -- (1,0.5);
      \draw[-, very thick, blue] (1,0.5) -- (0.5,1);
      
      \draw[fill=black] (0.5,1) circle[radius=3pt];
      \draw[fill=black] (1.5,1) circle[radius=3pt];
      \draw[fill=white] (1,1.5) circle[radius=3pt];
      \draw[fill=white] (1,0.5) circle[radius=3pt];
    \end{tikzpicture}
    \\[5mm]
    $1$-chain
    &
    \begin{tikzpicture}[baseline={(0,0.5)}, scale=0.7, >=latex]
      \draw[-] (0,1) -- (0.5,1);
      \draw[-] (1.5,1) -- (2,1);
      \draw[-] (1,0) -- (1,0.5);
      \draw[-] (1,1.5) -- (1,2);

      \draw[-] (0.5,1) -- (1,1.5);
      \draw[-] (1,1.5) -- (1.5,1);
      \draw[-] (1.5,1) -- (1,0.5);
      \draw[-] (1,0.5) -- (0.5,1);
      
      \draw[fill=black] (0.5,1) circle[radius=3pt];
      \draw[fill=black] (1.5,1) circle[radius=3pt];
      \draw[fill=white] (1,1.5) circle[radius=3pt];
      \draw[fill=white] (1,0.5) circle[radius=3pt];
    \end{tikzpicture}
    &
    \begin{tikzpicture}[baseline={(0,0.5)}, scale=0.7, >=latex]
      \draw[-, very thick] (0,1) -- (0.5,1);
      \draw[->] (1.5,1) -- (2,1);
      \draw[-, very thick] (1.5,1) -- (1.9,1);
      \draw[-, very thick] (1,0) -- (1,0.5);
      \draw[->] (1,1.5) -- (1,2);
      \draw[-, very thick] (1,1.5) -- (1,1.9);

      \draw[-, very thick] (0.5,1) -- (1,1.5);
      \draw[-] (1,1.5) -- (1.5,1);
      \draw[-, very thick] (1.5,1) -- (1,0.5);
      \draw[-] (1,0.5) -- (0.5,1);
      
      \draw[fill=black] (0.5,1) circle[radius=3pt];
      \draw[fill=black] (1.5,1) circle[radius=3pt];
      \draw[fill=white] (1,1.5) circle[radius=3pt];
      \draw[fill=white] (1,0.5) circle[radius=3pt];
    \end{tikzpicture}
    &
    \begin{tikzpicture}[baseline={(0,0.5)}, scale=0.7, >=latex]
      \draw[-, very thick] (0,1) -- (0.5,1);
      \draw[->] (1.5,1) -- (2,1);
      \draw[-, very thick] (1.5,1) -- (1.9,1);
      \draw[-] (1,0) -- (1,0.5);
      \draw[-] (1,1.5) -- (1,2);

      \draw[-] (0.5,1) -- (1,1.5);
      \draw[-] (1,1.5) -- (1.5,1);
      \draw[-, very thick] (1.5,1) -- (1,0.5);
      \draw[-, very thick] (1,0.5) -- (0.5,1);
      
      \draw[fill=black] (0.5,1) circle[radius=3pt];
      \draw[fill=black] (1.5,1) circle[radius=3pt];
      \draw[fill=white] (1,1.5) circle[radius=3pt];
      \draw[fill=white] (1,0.5) circle[radius=3pt];
    \end{tikzpicture}
    &
    \begin{tikzpicture}[baseline={(0,0.5)}, scale=0.7, >=latex]
      \draw[-] (0,1) -- (0.5,1);
      \draw[-] (1.5,1) -- (2,1);
      \draw[-, very thick] (1,0) -- (1,0.5);
      \draw[->] (1,1.5) -- (1,2);
      \draw[-, very thick] (1,1.5) -- (1,1.9);

      \draw[-] (0.5,1) -- (1,1.5);
      \draw[-, very thick] (1,1.5) -- (1.5,1);
      \draw[-, very thick] (1.5,1) -- (1,0.5);
      \draw[-] (1,0.5) -- (0.5,1);
      
      \draw[fill=black] (0.5,1) circle[radius=3pt];
      \draw[fill=black] (1.5,1) circle[radius=3pt];
      \draw[fill=white] (1,1.5) circle[radius=3pt];
      \draw[fill=white] (1,0.5) circle[radius=3pt];
    \end{tikzpicture}
    &
    \begin{tikzpicture}[baseline={(0,0.5)}, scale=0.7, >=latex]
      \draw[-] (0,1) -- (0.5,1);
      \draw[->] (1.5,1) -- (2,1);
      \draw[-, very thick] (1.5,1) -- (1.9,1);
      \draw[-, very thick] (1,0) -- (1,0.5);
      \draw[-] (1,1.5) -- (1,2);

      \draw[-] (0.5,1) -- (1,1.5);
      \draw[-] (1,1.5) -- (1.5,1);
      \draw[-, very thick] (1.5,1) -- (1,0.5);
      \draw[-] (1,0.5) -- (0.5,1);
      
      \draw[fill=black] (0.5,1) circle[radius=3pt];
      \draw[fill=black] (1.5,1) circle[radius=3pt];
      \draw[fill=white] (1,1.5) circle[radius=3pt];
      \draw[fill=white] (1,0.5) circle[radius=3pt];
    \end{tikzpicture}
    &
    \begin{tikzpicture}[baseline={(0,0.5)}, scale=0.7, >=latex]
      \draw[-, very thick] (0,1) -- (0.5,1);
      \draw[-] (1.5,1) -- (2,1);
      \draw[-] (1,0) -- (1,0.5);
      \draw[->] (1,1.5) -- (1,2);
      \draw[-, very thick] (1,1.5) -- (1,1.9);

      \draw[-, very thick] (0.5,1) -- (1,1.5);
      \draw[-] (1,1.5) -- (1.5,1);
      \draw[-] (1.5,1) -- (1,0.5);
      \draw[-] (1,0.5) -- (0.5,1);
      
      \draw[fill=black] (0.5,1) circle[radius=3pt];
      \draw[fill=black] (1.5,1) circle[radius=3pt];
      \draw[fill=white] (1,1.5) circle[radius=3pt];
      \draw[fill=white] (1,0.5) circle[radius=3pt];
    \end{tikzpicture}
    &
    \begin{tikzpicture}[baseline={(0,0.5)}, scale=0.7, >=latex]
      \draw[-, very thick] (0,1) -- (0.5,1);
      \draw[-] (1.5,1) -- (2,1);
      \draw[-] (1,0) -- (1,0.5);
      \draw[->] (1,1.5) -- (1,2);
      \draw[-, very thick] (1,1.5) -- (1,1.9);

      \draw[-] (0.5,1) -- (1,1.5);
      \draw[-, thick] (1,1.5) -- (1.5,1);
      \draw[-, thick] (1.5,1) -- (1,0.5);
      \draw[-, thick] (1,0.5) -- (0.5,1);

      \draw[fill=black] (0.5,1) circle[radius=3pt];
      \draw[fill=black] (1.5,1) circle[radius=3pt];
      \draw[fill=white] (1,1.5) circle[radius=3pt];
      \draw[fill=white] (1,0.5) circle[radius=3pt];
    \end{tikzpicture}
    \\[5mm]
    q-6v configuration
    &
    \begin{tikzpicture}[baseline={(0,0.5)}, scale=0.7, >=latex]
      \draw[->] (0,1) -- (2,1);
      \draw[->] (1,0) -- (1,2);
    \end{tikzpicture}
    &
    \begin{tikzpicture}[baseline={(0,0.5)}, scale=0.7, >=latex]
      \draw[->] (0,1) -- (2,1);
      \draw[->] (1,0) -- (1,2);
      \draw[-, very thick] (0,1) -- (1.9,1);
      \draw[-, very thick] (1,0) -- (1,1.9);
    \end{tikzpicture}
    &
    \begin{tikzpicture}[baseline={(0,0.5)}, scale=0.7, >=latex]
      \draw[->] (0,1) -- (2,1);
      \draw[->] (1,0) -- (1,2);
      \draw[-, very thick] (0,1) -- (1.9,1);
    \end{tikzpicture}
    &
    \begin{tikzpicture}[baseline={(0,0.5)}, scale=0.7, >=latex]
      \draw[->] (0,1) -- (2,1);
      \draw[->] (1,0) -- (1,2);
      \draw[-, very thick] (1,0) -- (1,1.9);
    \end{tikzpicture}
    &
    \begin{tikzpicture}[baseline={(0,0.5)}, scale=0.7, >=latex]
      \draw[->] (0,1) -- (2,1);
      \draw[->] (1,0) -- (1,2);
      \draw[-, very thick] (1,1) -- (1.9,1);
      \draw[-, very thick] (1,0) -- (1,1);
    \end{tikzpicture}
    &
    \begin{tikzpicture}[baseline={(0,0.5)}, scale=0.7, >=latex]
      \draw[->] (0,1) -- (2,1);
      \draw[->] (1,0) -- (1,2);
      \draw[-, very thick] (0,1) -- (1,1);
      \draw[-, very thick] (1,1) -- (1,1.9);
    \end{tikzpicture}
    &
    \begin{tikzpicture}[baseline={(0,0.5)}, scale=0.7, >=latex]
      \draw[->] (0,1) -- (2,1);
      \draw[->] (1,0) -- (1,2);
      \draw[-, very thick] (0,1) -- (1,1);
      \draw[-, very thick] (1,1) -- (1,1.9);
    \end{tikzpicture}
    \\[3mm]
    Weight
    &
    $r$ & $s$ & $fe^\uu$ & $ge^\uu$ & $e^{-\ww}$ & $rse^\ww$ & $fge^{2\uu+\ww}$
    \\[3mm]
  \end{tabular}

  \caption{The local state configurations for the dimer model on the
    bipartite graph obtained from a wiring diagram.  State 1 is chosen
    as the reference perfect matching.  The third row lists the
    1-chain given by the difference of the perfect matching in
    question and the reference perfect matching.  The fourth row shows
    the corresponding configuration of the q-6v model,
    where thick edges indicate that their states are $1$.}
  \label{tab:6vd}
\end{table}

For the bipartite graph $\Gamma(G)$ obtained from a wiring diagram $G$
for the q-6v model, the condition for a perfect matching
$M$, that each vertex must be contained in exactly one edge of $M$, is
equivalent to the charge conservation rule.  For example, suppose that
the wiring diagram contains a horizontal wire directed to the right,
crossed by two vertical wires directed upward:
\begin{equation}
  \begin{tikzpicture}[>=latex, baseline={(0,0.9)}]
    \draw[->] (0,1) --(3,1);
    \draw[->] (1,0) --(1,2);
    \draw[->] (2,0) --(2,2);
  \end{tikzpicture}
  \quad \to \quad
  \begin{tikzpicture}[font=\scriptsize, baseline={(0,0.9)}]
    \draw[-] (0,1) -- (0.5,1);
    \draw[-] (1.5,1) -- (2,1);
    \draw[-] (1,0) -- (1,0.5);
    \draw[-] (1,1.5) -- (1,2);

    \draw[-] (0.5,1) -- (1,1.5)
    -- (1.5,1)
    -- (1,0.5)
    -- cycle;

    \draw[fill=black] (0.5,1) circle[radius=2pt];
    \draw[fill=black] (1.5,1) circle[radius=2pt];
    \draw[fill=white] (1,1.5) circle[radius=2pt];
    \draw[fill=white] (1,0.5) circle[radius=2pt];

    \begin{scope}[shift={(2,0)}]
      \draw[-] (0,1) -- (0.5,1);
      \draw[-] (1.5,1) -- (2,1);
      \draw[-] (1,0) -- (1,0.5);
      \draw[-] (1,1.5) -- (1,2);
      
      \draw[-] (0.5,1) -- (1,1.5)
      -- (1.5,1)
      -- (1,0.5)
      -- cycle;
      
      \draw[fill=black] (0.5,1) circle[radius=2pt];
      \draw[fill=black] (1.5,1) circle[radius=2pt];
      \draw[fill=white] (1,1.5) circle[radius=2pt];
      \draw[fill=white] (1,0.5) circle[radius=2pt];
    \end{scope}

    \draw[fill=white] (2,1) circle[radius=2pt];
  \end{tikzpicture}
  \ .
\end{equation}
The corresponding bipartite graph contains two copies of the local
piece shown in \eqref{6vdim}, connected horizontally side by side with
an additional white vertex.  If the local piece on the left is in
state 1 in Table~\ref{tab:6vd}, then the one on the right can only be
in states 1, 4 and 5, which match the three local configurations of
the q-6v model where the in-states on the horizontal
wire and the left vertical wire are $0$.

Indeed, if states 6a and 6b are combined as a single state, there is a
one-to-one correspondence between the perfect matchings of the dimer
model and the configurations of the q-6v model.  Hence,
the partition function of the dimer model on $\Gamma(G)$ coincides
with the transfer matrix $T_G(x,y)$ of the q-6v model:
\begin{equation}
  Z_{\Gamma(G)}(x,y) = T_G(x,y) .
\end{equation}

The bipartite graph $\Gamma(G)$ can be reduced to a simpler graph by
the ``shrinking of 2-valent vertex'' move, under which the partition
function of a dimer model is invariant.  This move transforms the
bipartite graph and the edge weights as follows:
\begin{equation}
  \begin{tikzpicture}[font=\scriptsize, baseline={(0,-0.1)}]
    \draw (0,0) -- node[above=-2pt]{$a$} (0.5,0);
    \draw (0.5,0) -- node[above=-2pt]{$b$} (1,0);

    \draw (0,0) --  ++(140:0.5) node[above left=-4pt]{$l_1$};
    \draw (0,0) --  ++(170:0.5) node[left=-2pt]{$l_2$};
    \draw (0,0) -- ++(240:0.5) node[below left=-4pt, xshift=2pt]{$l_m$};
    \node[rotate=-60] at (-0.35,-0.15) {$...$};

    \draw (1,0) -- ++(-50:0.5) node[below right=-3pt]{$r_1$};
    \draw (1,0) -- ++(-20:0.5) node[right=-2pt, yshift=-1pt]{$r_2$};
    \draw (1,0) -- ++(50:0.5) node[above right=-3pt]{$r_n$};
    \node[rotate=-70] at (1.35,0.1) {$...$};

    \draw[fill=black] (0,0) circle[radius=2pt];
    \draw[fill=white] (0.5,0) circle[radius=2pt];
    \draw[fill=black] (1,0) circle[radius=2pt];
  \end{tikzpicture}
  \to 
  \begin{tikzpicture}[font=\scriptsize, baseline={(0,-0.1)}]
    \draw (0,0) --  ++(140:0.5) node[above left=-4pt]{$b \star l_1$};
    \draw (0,0) --  ++(170:0.5) node[left=-2pt]{$b \star l_2$};
    \draw (0,0) -- ++(240:0.5) node[below left=-4pt, xshift=2pt]{$b \star l_m$};
    \node[font=\normalsize, rotate=-60] at (-0.35,-0.15) {$...$};

    \draw (0,0) -- ++(-50:0.5) node[below right=-3pt]{$a \star r_1$};
    \draw (0,0) -- ++(-20:0.5) node[right=-2pt, yshift=-1pt]{$a \star r_2$};
    \draw (0,0) -- ++(50:0.5) node[above right=-3pt]{$a \star r_n$};
    \node[font=\normalsize, rotate=-70] at (0.35,0.1) {$...$};

    \draw[fill=black] (0,0) circle[radius=2pt];
  \end{tikzpicture}
  \
  ,
  \qquad
  \begin{tikzpicture}[font=\scriptsize, baseline={(0,-0.1)}]
    \draw (0,0) -- node[above=-2pt]{$a$} (0.5,0);
    \draw (0.5,0) -- node[above=-2pt]{$b$} (1,0);

    \draw (0,0) --  ++(140:0.5) node[above left=-4pt]{$l_1$};
    \draw (0,0) --  ++(170:0.5) node[left=-2pt]{$l_2$};
    \draw (0,0) -- ++(240:0.5) node[below left=-4pt, xshift=2pt]{$l_m$};
    \node[rotate=-60] at (-0.35,-0.15) {$...$};

    \draw (1,0) -- ++(-50:0.5) node[below right=-3pt]{$r_1$};
    \draw (1,0) -- ++(-20:0.5) node[right=-2pt, yshift=-1pt]{$r_2$};
    \draw (1,0) -- ++(50:0.5) node[above right=-3pt]{$r_n$};
    \node[rotate=-70] at (1.35,0.1) {$...$};

    \draw[fill=white] (0,0) circle[radius=2pt];
    \draw[fill=black] (0.5,0) circle[radius=2pt];
    \draw[fill=white] (1,0) circle[radius=2pt];
  \end{tikzpicture}
  \to 
  \begin{tikzpicture}[font=\scriptsize, baseline={(0,-0.1)}]
    \draw (0,0) --  ++(140:0.5) node[above left=-4pt]{$b \star l_1$};
    \draw (0,0) --  ++(170:0.5) node[left=-2pt]{$b \star l_2$};
    \draw (0,0) -- ++(240:0.5) node[below left=-4pt, xshift=2pt]{$b \star l_m$};
    \node[font=\normalsize, rotate=-60] at (-0.35,-0.15) {$...$};

    \draw (0,0) -- ++(-50:0.5) node[below right=-3pt]{$a \star r_1$};
    \draw (0,0) -- ++(-20:0.5) node[right=-2pt, yshift=-1pt]{$a \star r_2$};
    \draw (0,0) -- ++(50:0.5) node[above right=-3pt]{$a \star r_n$};
    \node[font=\normalsize, rotate=-70] at (0.35,0.1) {$...$};

    \draw[fill=white] (0,0) circle[radius=2pt];
  \end{tikzpicture}
  \
  .
\end{equation}
In the original bipartite graph, each perfect matching either covers
the edge with weight $a$ and one of the edges with weights $r_1$,
\dots, $r_n$, or covers the edge with weight $b$ and one of the edges
with weights $l_1$, \dots, $l_m$.  In the transformed bipartite graph,
the obvious corresponding perfect matching has the same contribution
to the partition function.

As an example, let us take as a wiring diagram a square lattice on
$\R^2$, where the wires intersect at the integral points.  See
Figure~\ref{fig:q6v-dimer}.  The replacement \eqref{6vdim} gives the
first bipartite graph, and shrinking the 2-valent vertices results in
the second bipartite graph.  If we assign a canonical pair
$(\uu_{ij}, \ww_{ij})$ and parameters $r_{ij}$, $s_{ij}$,
$f_{ij}$, $g_{ij}$ to the intersection $(i,j) \in \Z^2$, the weights
of the four edges around $(i,j)$ in the second dimer model are given
as follows:
\begin{equation}
    \begin{tikzpicture}[>=stealth, scale=1.5, baseline={(0,0)}]
     \draw[black!30, ->] (0,-0.8) -- (0,0.8);
     \draw[black!30, ->] (-0.8,0) -- (0.8,0);

     \draw[thick] (-0.5,0)
     -- node[above left=-4pt] {$s_{ij} e^{\ww_{ij} - \ww_{i-1,j}}$}
     (0,0.5)
     -- node[above right=-2pt] {$g_{ij} e^{\uu_{ij}}$}
     (0.5,0)
     -- node[below right=-2pt] {$r_{ij}$}
     (0,-0.5)
     -- node[below left=-4pt] {$f_{ij} e^{\uu_{ij} + \ww_{ij} - \ww_{i-1,j}}$}
     cycle;

     \draw[fill=black] (-0.5,0) circle[radius=1.5pt];
     \draw[fill=white] (0,0.5) circle[radius=1.5pt];
     \draw[fill=black] (0.5,0) circle[radius=1.5pt];
     \draw[fill=white] (0,-0.5) circle[radius=1.5pt];

     \node[font=\scriptsize] at (0,0) {$(i,j)$};
  \end{tikzpicture}
\end{equation}

\begin{figure}[H]
  \centering
  
  \begin{tikzpicture}[>=stealth, scale=1.5]
    \draw[dashed] (-0.6,0.15) rectangle (2.4,2.15);
    \clip (-0.6,0.15) rectangle (2.4,2.15);
    
    \foreach \x in {0,1,2,3} {
      \draw[->] (\x,0) -- (\x,2.15);
    }      

    \foreach \y in {0,1,2} {
      \draw[->] (-0.6,{\y+0.5}) -- (2.4,{\y+0.5});
    }
  \end{tikzpicture}
  \qquad
  \begin{tikzpicture}[>=stealth, scale=1.5]
    \draw[dashed] (-0.6,0.15) rectangle (2.4,2.15);
    \clip (-0.6,0.15) rectangle (2.4,2.15);
    
    \foreach \x in {0,1,2,3} {
      \draw[black!30, ->] (\x,0) -- (\x,2.15);
    }      

    \foreach \y in {0,1,2} {
      \draw[black!30, ->] (-0.6,{\y+0.5}) -- (2.4,{\y+0.5});
    }
      
    \foreach \x in {-1,0,1,2,3,4} {
      \foreach \y in {0,1,2,3} {
        \draw[very thick, blue] (\x-0.5,\y+0.5) -- ++(0.25,0);
        \draw[thick] (\x+0.25,\y+0.5) -- ++(0.25,0);
        \draw[thick] (\x,\y) -- ++(0,0.25);
        \draw[very thick, blue] (\x,\y+0.75) -- ++(0,0.25);
        
        \draw[thick] (\x,\y+0.25) -- (\x-0.25,\y+0.5) -- (\x,\y+0.75) -- (\x+0.25,\y+0.5);
        \draw[very thick, blue] (\x+0.25,\y+0.5) -- (\x,\y+0.25);
        
        \draw[fill=white] (\x,\y+0.25) circle[radius=1.5pt];
        \draw[fill=white] (\x,\y+0.75) circle[radius=1.5pt];
        \draw[fill=black] (\x-0.25,\y+0.5) circle[radius=1.5pt];
        \draw[fill=black] (\x+0.25,\y+0.5) circle[radius=1.5pt];
        
        \draw[fill=white] (\x-0.5,\y+0.5) circle[radius=1.5pt];
        \draw[fill=black] (\x,\y) circle[radius=1.5pt];
      }
    }
  \end{tikzpicture}
  \qquad
  \begin{tikzpicture}[>=stealth, scale=1.5]
    \draw[dashed] (-0.6,0.15) rectangle (2.4,2.15);
    \clip (-0.6,0.15) rectangle (2.4,2.15);

    \foreach \x in {0,1,2,3,4} {
      \draw[black!30, ->] (\x,0) -- (\x,2.15);
    }      

    \foreach \y in {0,1,2} {
      \draw[black!30, ->] (-0.6,{\y+0.5}) -- (2.4,{\y+0.5});
    }

    \foreach \x in {-1,0,1,2,3,4} {
      \foreach \y in {0,1,2,3} {
        \draw[thick] (\x,\y) -- (\x-0.5,\y+0.5)
        -- (\x,\y+1) -- (\x+0.5,\y+0.5);
        \draw[very thick, blue] (\x+0.5,\y+0.5) -- (\x,\y);
        
        \draw[fill=black] (\x-0.5,\y+0.5) circle[radius=1.5pt];
        \draw[fill=white] (\x,\y) circle[radius=1.5pt];
      }
    }

  \end{tikzpicture}

  \caption{The bipartite graph for the q-6v model on a
    plane. The reference perfect matching is indicated by the blue
    edges.}
  \label{fig:q6v-dimer}
\end{figure}
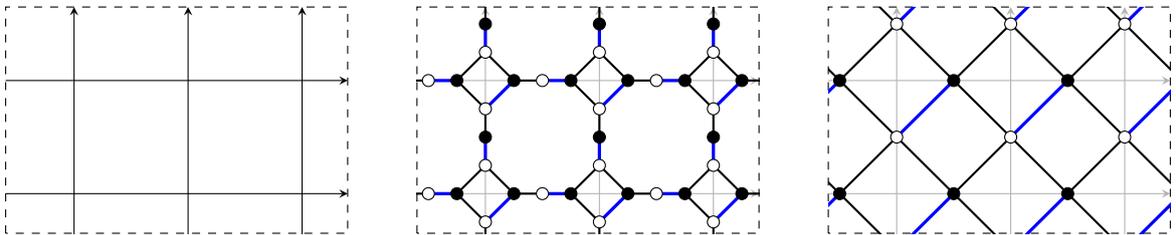

Taking the quotient of $\R^2$ by translations generated by two
integral vectors, we obtain a bipartite graph corresponding to a
wiring diagram on a torus.  Let us consider the case in which the two
vectors are $(M,M)$ and $(N,0)$.  We choose the fundamental domain of
the torus in such a way that the reference perfect matching does not
have edges crossing the boundaries, as in Figure~\ref{fig:Toda-dimer}.
In order to write down the Kasteleyn matrix, it is convenient to
introduce new coordinates $(a,b) = (i-j,j)$ and use them to label the
canonical pairs and the parameters assigned at the intersection
$(i,j)$ as well as the black vertex to the right of it and the white
vertex below it.  For the sign assignment we choose the edges in the
reference perfect matching to have $-1$ and all remaining edges to
have $+1$.  Then, the Kasteleyn matrix is given by
\begin{multline}
  K(x,y)_{ab,cd}
  =
  -\delta_{ac} \delta_{bd} r_{ab}
  + \delta_{ac} \delta_{b+1,d} y^{\delta_{bM}} s_{a+1,b} e^{\ww_{a+1,b} - \ww_{ab}}
  \\
  + \delta_{a+1,c} \delta_{bd} x^{\delta_{aN}}f_{a+1,b} e^{\uu_{a+1,b} + \ww_{a+1,b} - \ww_{ab}}
  + \delta_{a,c+1} \delta_{b+1,d} x^{-\delta_{a1} \delta_{bM}} y^{\delta_{bM}} g_{ab} e^{\uu_{ab}},
\end{multline}
where the indices $a$, $c$ are defined modulo $N$ and $b$, $d$ are
defined modulo $M$.  The relativistic quantum Toda chain discussed in
\S \ref{sec:Toda} is the case $M = 1$.

\begin{figure}[H]
  \centering

  \begin{tikzpicture}[>=stealth]
    \draw[dashed] (-0.6,0.15) rectangle (4.9,3.15);
    \clip (-0.6,0.15) rectangle (4.9,3.15);

    \foreach \x in {0,1,2,3,4} {
      \draw[black!30, ->] (\x,0) -- (\x,3.15);
    }      

    \foreach \y in {0,1,2} {
      \draw[black!30, ->] (-0.6,{\y+0.5}) -- (4.9,{\y+0.5});
    }

    \foreach \x in {-1,0,1,2,3,4,5} {
      \foreach \y in {0,1,2,3} {
        \draw[thick] (\x,\y) -- (\x-0.5,\y+0.5)
        -- (\x,\y+1) -- (\x+0.5,\y+0.5);
        \draw[very thick, blue] (\x+0.5,\y+0.5) -- (\x,\y);
        
        \draw[fill=black] (\x-0.5,\y+0.5) circle[radius=1.5pt];
        \draw[fill=white] (\x,\y) circle[radius=1.5pt];
      }
    }

    \draw[thick, red, shift={(-0.5,0.3)}] (0,0.5) -- (2,2.5) -- (5,2.5) -- (3,0.5) -- cycle;

    \node at (0.5,1.5) [right, font=\tiny] {$11$};
    \node at (1.5,1.5) [right, font=\tiny] {$21$};
    \node at (2.5,1.5) [right, font=\tiny] {$31$};
    \node at (1.5,2.5) [right, font=\tiny] {$12$};
    \node at (2.5,2.5) [right, font=\tiny] {$22$};
    \node at (3.5,2.5) [right, font=\tiny] {$32$};

    \node at (0,1) [right, font=\tiny] {$11$};
    \node at (1,1) [right, font=\tiny] {$21$};
    \node at (2,1) [right, font=\tiny] {$31$};
    \node at (1,2) [right, font=\tiny] {$12$};
    \node at (2,2) [right, font=\tiny] {$22$};
    \node at (3,2) [right, font=\tiny] {$32$};
  \end{tikzpicture}
  \caption{The bipartite graph on a torus for a dimer model
    generalizing the relativistic quantum Toda chain, for $(M,N)=(2.3)$.}
  \label{fig:Toda-dimer}
\end{figure}
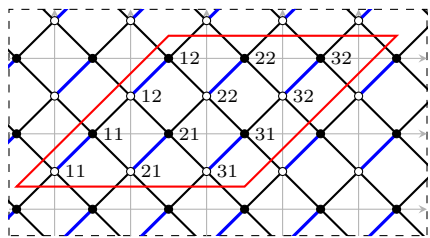

\subsection{Quantized five-vertex model as a dimer model}

When the parameter $f = 0$, only five of the six local configurations
in Figure \ref{fig:6v} contribute to the transfer matrix of the
q-6v model.  In the corresponding bipartite graph
\eqref{6vdim}, setting $f = 0$ amounts to removing the edge with
weight $fe^{\uu+\ww}$.  After this edge is removed, the edges with
weight $1$ connected to the removed edge become redundant; whether or
not they are covered by a perfect matching is correlated with the
adjacent edges with weight $se^\ww$ and $r$.  Therefore, the
reduced model can be reformulated as a dimer model on a simpler
bipartite graph, obtained from the wiring diagram by the following
replacement of the vertices:
\begin{equation}
  \label{q6vdim}
  \begin{tikzpicture}[>=latex, baseline={(0,0.9)}]
    \draw[->] (0,1) --(2,1);
    \draw[->] (1,0) --(1,2);
  \end{tikzpicture}
  \quad \to \quad 
  \begin{tikzpicture}[font=\scriptsize, baseline={(0,0.9)}]
    \draw[-] (0,1.25) -- node[above=-2pt, xshift=-2pt] {$se^\ww$} (0.75,1.25);
    \draw[-] (1.25,0.75) -- node[above=-2pt, xshift=4pt] {$e^{-\ww}$} (2,0.75);
    \draw[-] (1.25,0) -- node[right=-2pt, yshift=-2pt] {$r$} (1.25,0.75);
    \draw[-] (0.75,1.25) -- node[right=-2pt, yshift=4pt] {$1$} (0.75,2);

    \draw[-] (0.75,1.25) -- node[above right=-3pt] {$ge^\uu$} (1.25,0.75);

    \draw[fill=white] (0.75,1.25) circle[radius=2pt];
    \draw[fill=black] (1.25,0.75) circle[radius=2pt];
  \end{tikzpicture}
  \
  .
\end{equation}
See Table \ref{q5vdim} for the perfect matchings and the
corresponding configurations of the quantized five-vertex model.

\begin{table}[H]
  \small
  \centering
  \renewcommand{\arraystretch}{2}
  \begin{tabular}{lccccc}
    Perfect matching
    &
  \begin{tikzpicture}[baseline={(0,0.5)}, scale=0.7]
    \draw[-] (0,1.25) -- (0.75,1.25);
    \draw[-] (1.25,0.75) -- (2,0.75);
    \draw[-, very thick, blue] (1.25,0) -- (1.25,0.75);
    \draw[-, very thick, blue] (0.75,1.25) -- (0.75,2);

    \draw[-] (0.75,1.25) -- (1.25,0.75);

    \draw[fill=white] (0.75,1.25) circle[radius=3pt];
    \draw[fill=black] (1.25,0.75) circle[radius=3pt];
  \end{tikzpicture}
    &
  \begin{tikzpicture}[baseline={(0,0.5)}, scale=0.7]
    \draw[-, very thick, blue] (0,1.25) -- (0.75,1.25);
    \draw[-, very thick, blue] (1.25,0.75) -- (2,0.75);
    \draw[-] (1.25,0) -- (1.25,0.75);
    \draw[-] (0.75,1.25) -- (0.75,2);

    \draw[-] (0.75,1.25) -- (1.25,0.75);

    \draw[fill=white] (0.75,1.25) circle[radius=3pt];
    \draw[fill=black] (1.25,0.75) circle[radius=3pt];
  \end{tikzpicture}
    &
  \begin{tikzpicture}[baseline={(0,0.5)}, scale=0.7]
    \draw[-] (0,1.25) -- (0.75,1.25);
    \draw[-] (1.25,0.75) -- (2,0.75);
    \draw[-] (1.25,0) -- (1.25,0.75);
    \draw[-] (0.75,1.25) -- (0.75,2);

    \draw[-, very thick, blue] (0.75,1.25) -- (1.25,0.75);

    \draw[fill=white] (0.75,1.25) circle[radius=3pt];
    \draw[fill=black] (1.25,0.75) circle[radius=3pt];
  \end{tikzpicture}
    &
  \begin{tikzpicture}[baseline={(0,0.5)}, scale=0.7]
    \draw[-] (0,1.25) -- (0.75,1.25);
    \draw[-, very thick, blue] (1.25,0.75) -- (2,0.75);
    \draw[-] (1.25,0) -- (1.25,0.75);
    \draw[-, very thick, blue] (0.75,1.25) -- (0.75,2);

    \draw[-] (0.75,1.25) -- (1.25,0.75);

    \draw[fill=white] (0.75,1.25) circle[radius=3pt];
    \draw[fill=black] (1.25,0.75) circle[radius=3pt];
  \end{tikzpicture}
    &
  \begin{tikzpicture}[baseline={(0,0.5)}, scale=0.7]
    \draw[-, very thick, blue] (0,1.25) -- (0.75,1.25);
    \draw[-] (1.25,0.75) -- (2,0.75);
    \draw[-, very thick, blue] (1.25,0) -- (1.25,0.75);
    \draw[-] (0.75,1.25) -- (0.75,2);

    \draw[-] (0.75,1.25) -- (1.25,0.75);

    \draw[fill=white] (0.75,1.25) circle[radius=3pt];
    \draw[fill=black] (1.25,0.75) circle[radius=3pt];
  \end{tikzpicture}
    \\[5mm]
    q-5v configuration
    &
    \begin{tikzpicture}[baseline={(0,0.5)}, scale=0.7, >=latex]
      \draw[->] (0,1) -- (2,1);
      \draw[->] (1,0) -- (1,2);
    \end{tikzpicture}
    &
    \begin{tikzpicture}[baseline={(0,0.5)}, scale=0.7, >=latex]
      \draw[->] (0,1) -- (2,1);
      \draw[->] (1,0) -- (1,2);
      \draw[-, very thick] (0,1) -- (1.9,1);
      \draw[-, very thick] (1,0) -- (1,1.9);
    \end{tikzpicture}
    &
    \begin{tikzpicture}[baseline={(0,0.5)}, scale=0.7, >=latex]
      \draw[->] (0,1) -- (2,1);
      \draw[->] (1,0) -- (1,2);
      \draw[-, very thick] (1,0) -- (1,1.9);
    \end{tikzpicture}
    &
    \begin{tikzpicture}[baseline={(0,0.5)}, scale=0.7, >=latex]
      \draw[->] (0,1) -- (2,1);
      \draw[->] (1,0) -- (1,2);
      \draw[-, very thick] (1,1) -- (1.9,1);
      \draw[-, very thick] (1,0) -- (1,1);
    \end{tikzpicture}
    &
    \begin{tikzpicture}[baseline={(0,0.5)}, scale=0.7, >=latex]
      \draw[->] (0,1) -- (2,1);
      \draw[->] (1,0) -- (1,2);
      \draw[-, very thick] (0,1) -- (1,1);
      \draw[-, very thick] (1,1) -- (1,1.9);
    \end{tikzpicture}
    \\
    Weight
    &
    $r$ & $s$ & $ge^\uu$ & $e^{-\ww}$ & $rse^\ww$
  \end{tabular}

  \caption{The perfect matchings and the corresponding configurations
    of the quantized five-vertex model for $f=0$.}
  \label{q5vdim}
\end{table}

One can also formulate a simpler
version of the quantized five-vertex model and the equivalent 
dimer model by considering the specialization $s=0$ in the q-6v model.
From Table \ref{tab:6vd}, the allowed vertices and their weights, along with the 
corresponding perfect matchings are provided as follows:

\begin{table}[H]
  \small
  \centering
  \renewcommand{\arraystretch}{2}
  \begin{tabular}{lccccc}
    Perfect matching
    &
  \begin{tikzpicture}[baseline={(0,0.5)}, scale=0.7]
    \draw[-] (0,0.75) -- (0.75,0.75);
    \draw[-] (1.25,1.25) -- (2,1.25);
    \draw[-] (0.75,0) -- (0.75,0.75);
    \draw[-] (1.25,1.25) -- (1.25,2);

    \draw[-, very thick, blue] (0.75,0.75) -- (1.25,1.25);

    \draw[fill=white] (0.75,0.75) circle[radius=3pt];
    \draw[fill=black] (1.25,1.25) circle[radius=3pt];
  \end{tikzpicture}
    &
  \begin{tikzpicture}[baseline={(0,0.5)}, scale=0.7]
    \draw[-, very thick, blue] (0,0.75) -- (0.75,0.75);
    \draw[-, very thick, blue] (1.25,1.25) -- (2,1.25);
    \draw[-] (0.75,0) -- (0.75,0.75);
    \draw[-] (1.25,1.25) -- (1.25,2);

    \draw[-] (0.75,0.75) -- (1.25,1.25);

    \draw[fill=white] (0.75,0.75) circle[radius=3pt];
    \draw[fill=black] (1.25,1.25) circle[radius=3pt];
  \end{tikzpicture}
    &
  \begin{tikzpicture}[baseline={(0,0.5)}, scale=0.7]
    \draw[-] (0,0.75) -- (0.75,0.75);
    \draw[-] (1.25,1.25) -- (2,1.25);
    \draw[-, very thick, blue] (0.75,0) -- (0.75,0.75);
    \draw[-, very thick, blue] (1.25,1.25) -- (1.25,2);

    \draw[-] (0.75,0.75) -- (1.25,1.25);

    \draw[fill=white] (0.75,0.75) circle[radius=3pt];
    \draw[fill=black] (1.25,1.25) circle[radius=3pt];
  \end{tikzpicture}
    &
  \begin{tikzpicture}[baseline={(0,0.5)}, scale=0.7]
    \draw[-] (0,0.75) -- (0.75,0.75);
    \draw[-, very thick, blue] (1.25,1.25) -- (2,1.25);
    \draw[-, very thick, blue] (0.75,0) -- (0.75,0.75);
    \draw[-] (1.25,1.25) -- (1.25,2);

    \draw[-] (0.75,0.75) -- (1.25,1.25);

    \draw[fill=white] (0.75,0.75) circle[radius=3pt];
    \draw[fill=black] (1.25,1.25) circle[radius=3pt];
  \end{tikzpicture}
    &
  \begin{tikzpicture}[baseline={(0,0.5)}, scale=0.7]
    \draw[-, very thick, blue] (0,0.75) -- (0.75,0.75);
    \draw[-] (1.25,1.25) -- (2,1.25);
    \draw[-] (0.75,0) -- (0.75,0.75);
    \draw[-, very thick, blue] (1.25,1.25) -- (1.25,2);

    \draw[-] (0.75,0.75) -- (1.25,1.25);

    \draw[fill=white] (0.75,0.75) circle[radius=3pt];
    \draw[fill=black] (1.25,1.25) circle[radius=3pt];
  \end{tikzpicture}
    \\[5mm]
    q-5v configuration
    &
    \begin{tikzpicture}[baseline={(0,0.5)}, scale=0.7, >=latex]
      \draw[->] (0,1) -- (2,1);
      \draw[->] (1,0) -- (1,2);
    \end{tikzpicture}
    &
    \begin{tikzpicture}[baseline={(0,0.5)}, scale=0.7, >=latex]
      \draw[->] (0,1) -- (2,1);
      \draw[->] (1,0) -- (1,2);
      \draw[-, very thick] (0,1) -- (1.9,1);
    \end{tikzpicture}
    &
    \begin{tikzpicture}[baseline={(0,0.5)}, scale=0.7, >=latex]
      \draw[->] (0,1) -- (2,1);
      \draw[->] (1,0) -- (1,2);
      \draw[-, very thick] (1,0) -- (1,1.9);
    \end{tikzpicture}
    &
    \begin{tikzpicture}[baseline={(0,0.5)}, scale=0.7, >=latex]
      \draw[->] (0,1) -- (2,1);
      \draw[->] (1,0) -- (1,2);
      \draw[-, very thick] (1,1) -- (1.9,1);
      \draw[-, very thick] (1,0) -- (1,1);
    \end{tikzpicture}
    &
    \begin{tikzpicture}[baseline={(0,0.5)}, scale=0.7, >=latex]
      \draw[->] (0,1) -- (2,1);
      \draw[->] (1,0) -- (1,2);
      \draw[-, very thick] (0,1) -- (1,1);
      \draw[-, very thick] (1,1) -- (1,1.9);
    \end{tikzpicture}
    \\
    Weight
    &
    $r$ & $fe^\uu$ & $ge^\uu$ & $e^{-\ww}$ & $fge^{2\uu+\ww}$
  \end{tabular}

  \caption{The perfect matchings and the corresponding configurations
    of the quantized five-vertex model for $s=0$.}
  \label{5vd2}

\end{table}

\noindent
The equivalent dimer model is obtained by replacing the local vertex configurations
with the perfect matchings as shown in Table \ref{5vd2}.
Their weights are identified by the following local rule:
\begin{equation}
  \label{dim2}
  \begin{tikzpicture}[>=latex, baseline={(0,0.9)}]
    \draw[->] (0,1) --(2,1);
    \draw[->] (1,0) --(1,2);
  \end{tikzpicture}
  \quad \to \quad
  \begin{tikzpicture}[font=\scriptsize, baseline={(0,0.9)}]
    \draw[-] (0,0.75) -- node[above=-2pt, xshift=-2pt] {$fe^\uu$} (0.75,0.75);
    \draw[-] (1.25,1.25) -- node[above=-2pt, xshift=4pt] {$1$} (2,1.25);
    \draw[-] (0.75,0) -- node[right=-2pt, yshift=-2pt] {$e^{-\ww}$} (0.75,0.75);
    \draw[-] (1.25,1.25) -- node[right=-2pt, yshift=4pt] {$ge^{\uu+\ww}$} (1.25,2);

    \draw[-] (0.75,0.75) -- node[above left=-3pt] {$r$} (1.25,1.25);

    \draw[fill=white] (0.75,0.75) circle[radius=2pt];
    \draw[fill=black] (1.25,1.25) circle[radius=2pt];
  \end{tikzpicture}
  \
  .
\end{equation}
Due to the construction, the resulting graph on which the dimer model is defined 
becomes bipartite automatically without the need to introduce additional vertices 
as was required in the previous case of the $f=0$ five-vertex model.
Furthermore, the configurations of the five-vertex model and the dimer model 
can be directly translated to one another without mediating their comparison via a reference 
perfect matching.
This simpler version of the correspondence between the five-vertex and the dimer models 
may be regarded as a quantization of the one studied in \cite{W68,FW70}.



\begin{thebibliography}{xx}
\def\cprime{$'$}

\bibitem{AP20}
F.~C.~Alcaraz, R.~A.~Pimenta,
{\em Free fermionic and parafermionic quantum spin chains with multispin interactions},
Phys. Rev.  B{\bf 102}, 121101(R) (2020).

\bibitem{BHL23}
M.~T.~Batchelor, R.~Henry, X.~Lu,
{\em A brief history of free parafermions},
AAPPS Bulletin {\bf 33}, 29 (2023).

\bibitem{B82}
R.~J.~Baxter,
{\em Exactly solved models in statistical mechanics},
Academic Press,  London  (1982).

\bibitem{B89}
R.~J.~Baxter,
{\em A simple solvable $Z_N$ Hamiltonian},
Phys. Lett. A {\bf 140}, 155-157 (1989).

\bibitem{BMS10}
V.~V.~Bazhanov, V.~V.~Mangazeev, S.~M.~Sergeev,
{\em Quantum geometry of 3-dimensional lattices and tetrahedron equation},
16th Int. Congr. of Mathematical Physics, ed. P.~Exner (Singapore: World Scientific)  23--44 (2010).

\bibitem{BR89}
M.~Bruschi, O.~Ragnisco, \emph{Lax representation and complete integrability for the periodic relativistic Toda lattice}, Phys. Lett. A \textbf{134}, no. 6, 365--370 (1989).

\bibitem{BS06}
V.~V.~Bazhanov, S.~M.~Sergeev,
{\em Zamolodchikov's tetrahedron equation and hidden structure of quantum groups},
J. Phys. A: Math. Theor., \textbf{39}, 3295--3310 (2006).

\bibitem{EFS12}
R.~Eager, S.~Franco, K.~Schaeffer,
{\em Dimer models and integrable systems},
JHEP, Volume 2012, no. 6, 106 (2012).

\bibitem{FW70}
C.~Fan, F.~Y.~Wu,
{\em General lattice model of phase transitions},
Phys. Rev. B{\bf 2}, 723--733 (1970).

\bibitem{F14}
P.~Fendley,
{\em Free parafermions},
J. Phys. A: Math. Theor., \textbf{47}, 075001 (2014).

\bibitem{FG09}
V.~V.~Fock, A.~B.~Goncharov,
{\em Cluster ensembles, quantization and the dilogarithm}, 
Ann. Sci. \'Ec. Norm. Sup\'er. (4),  {\bf 42},   865--930 (2009).

\bibitem{GK13}
A.~B.~Goncharov, R.~Kenyon,
{\em Dimers and cluster integrable systems},
Ann. Sci. \'Ec. Norm. Sup\'er. (4), {\bf 46}, 747--813 (2013).

\bibitem{IKT1}
R.~Inoue, A.~Kuniba, Y.~Terashima,  
{\em Quantum cluster algebras and 3D integrability: Tetrahedron and 3D reflection equations}, IMRN, rnae128, 11549--11581 (2024).

\bibitem{IKSTY}
R.~Inoue, A.~Kuniba, X.~Sun, Y.~Terashima, J.~Yagi,  
{\em Solutions of tetrahedron equation from quantum cluster algebra 
associated with symmetric butterfly quiver}, SIGMA \textbf{20}, 113, 45 pages (2024).

\bibitem{K} P.W.~Kasteleyn. Graph theory and crystal physics. In
  \emph{Graph Theory and Theoretical Physics}, pp.~43--110. Academic
  Press, London, 1967.

\bibitem{KOS}
  R.~Kenyon, A.~Okounkov, S.~Sheffield, \emph{Dimers and amoebae},
  Ann.\ of Math.\ (2) \textbf{163} (2006), no.\ 3, 1019--1056.

\bibitem{K22}
A.~Kuniba,
{\em Quantum groups in three-dimensional integrability}, Springer, Singapore (2022).

\bibitem{KMY23} 
A.~Kuniba, S.~Matsuike, A.~Yoneyama,
{\em New solutions to the tetrahedron equation
associated with quantized six-vertex models},
Commun. Math. Phys. {\bf 401}, 3247--3276 (2023).

\bibitem{Suris90}
Yu.~B.~Suris, \emph{Discrete time generalized Toda lattices: complete integrability and relation with relativistic Toda lattices}, Phys. Lett. A \textbf{145}, no. 2-3, 113--119 (1990).

\bibitem{W68}
F.~Y.~Wu,
{\em Remarks on the Modified Potassium Dihydrogen Phosphate Model of a Ferroelectric},
Phys. Rev., {\bf 168},  539--543 (1968).

\bibitem{Z80}
A.~B.~Zamolodchikov,
{\em Tetrahedra equations and integrable systems in three-dimensional space},
Soviet Phys. JETP {\bf 79}, 641--664 (1980).

\end{thebibliography}
\end{document}